\keywords{quantified Boolean formulas, proof complexity, lower bounds.\\ \indent\textit{ACM subject classification:} F.2.2 Nonnumerical Algorithms and Problems: Complexity of proof procedures}
\let\Phi\varPhi
\let\Psi\varPsi
\newcommand{\proofSystemFont}[1]{\mathsf{#1}}
\newcommand{\PS}{\proofSystemFont{P}}
\newcommand{\Pred}{\proofSystemFont{P{\mbox{\upshape{+}}}{\forall}red}}
\newcommand{\SigmaPred}{\proofSystemFont{\Sigma_1^p{\mbox{\upshape{-}}}P{\mbox{\upshape{+}}}{\forall}red}}
\newcommand{\red}{\proofSystemFont{{\forall}red}}
\newcommand{\Frege}{\proofSystemFont{Frege}}
\newcommand{\CP}{\proofSystemFont{CP}}
\newcommand{\PCR}{\proofSystemFont{PCR}}
\newcommand{\SOS}{\proofSystemFont{SOS}}
\newcommand{\QRes}{\proofSystemFont{Q{\mbox{\upshape{-}}}Res}}
\newcommand{\QURes}{\proofSystemFont{QU{\mbox{\upshape{-}}}Res}}
\newcommand{\LDQRes}{\proofSystemFont{LD{\mbox{\upshape{-}}}Q{\mbox{\upshape{-}}}Res}}
\newcommand{\IRC}{\proofSystemFont{IR{\mbox{\upshape{-}}}calc}}
\newcommand{\CPred}{\proofSystemFont{CP{\mbox{\upshape{+}}}{\forall}red}}
\newcommand{\PCRred}{\proofSystemFont{PCR{\mbox{\upshape{+}}}{\forall}red}}
\newcommand{\Fred}{\proofSystemFont{Frege{\mbox{\upshape{+}}}{\forall}red}}
\newcommand{\SOSred}{\proofSystemFont{SOS{\mbox{\upshape{+}}}{\forall}red}}
\newcommand{\complexityClassFont}[1]{\mathsf{#1}}
\newcommand{\NP}{\complexityClassFont{NP}}
\newcommand{\PSPACE}{\complexityClassFont{PSPACE}}
\newcommand{\AC}[1]{\complexityClassFont{AC^{#1}}}
\newcommand{\Q}{\ensuremath{\mathcal{Q}}\xspace}
\newcommand{\R}{\ensuremath{\mathcal{R}}\xspace}
\newcommand{\A}{\ensuremath{\mathcal{A}}\xspace}
\newcommand{\St}{\ensuremath{S}}
\newcommand{\QBF}[2]{\ensuremath{{#1}\cdot{#2}}\xspace}
\newcommand{\var}{\mbox{\upshape var}}
\newcommand{\vars}{\mbox{\upshape vars}}
\newcommand{\rng}{\mbox{\upshape rng}}
\newcommand{\dom}{\mbox{\upshape dom}}
\newcommand{\capa}{\mbox{\upshape capacity}}
\newcommand{\cost}{\mbox{\upshape cost}}
\newcommand{\proj}{\mbox{\upshape proj}}
\newcommand{\sgn}{\mbox{\upshape sgn}}
\newcommand{\eq}{EQ}
\newcommand{\JLBFdiamond}{%
\begin{tikzpicture}[>=triangle 45, xscale=1.6,yscale=2.1]

\node[draw, align=center, minimum width=5em, minimum height = 3em, rounded corners=0.15cm](N) at (0,0){$\Fred$};
\node[draw, align=center, minimum width=5em, minimum height = 3em, rounded corners=0.15cm](W) at (-1,-1){$\CPred$};
\node[draw, align=center, minimum width=5em, minimum height = 3em, rounded corners=0.15cm](E) at (1,-1){$\PCRred$};
\node[draw, align=center, minimum width=5em, minimum height = 3em, rounded corners=0.15cm](S) at (0,-2){$\QURes$};

\draw [->](N) to (-0.55,-0.55);
\draw (-0.55,-0.55) -- (W);
\draw [->](N) -- (0.55,-0.55);
\draw (0.55,-0.55) -- (E);
\draw [->](W) -- (-0.5,-1.5);
\draw (-0.5,-1.5) -- (S);
\draw [->](E) -- (0.5,-1.5);
\draw (0.5,-1.5) -- (S);
\draw[dashed] (W) -- (E);

\node[draw, align=center, rounded corners=0.05cm](Aii) at (2.8,-0.6){$\mathsf{A}$};
\node[draw, align=center, rounded corners=0.05cm](Bii) at (3.8,-0.6){$\mathsf{B}$};
\draw[dashed](Aii)--(Bii);
\node[align = left, text width = 4cm] at (5.5,-0.6){$\mathsf{A}$ and $\mathsf{B}$ are\\incomparable};

\node[draw, align=center, rounded corners=0.05cm](Ai) at (2.8,-1.4){$\mathsf{A}$};
\node[draw, align=center, rounded corners=0.05cm](Bi) at (3.8,-1.4){$\mathsf{B}$};
\draw[->](Ai)--(3.4,-1.4);
\draw(3.4,-1.4)--(Bi);
\node[align = left, text width = 4cm] at (5.5,-1.4){$\mathsf{A}$ $p$-simulates and \\is exponentially \\separated from $\mathsf{B}$};

\node at (-3,0){};
\end{tikzpicture}
}
\newcommand{\JLBFigUniversalReduction}
{%
\begin{framed}

    \begin{minipage}{.37\linewidth}
      \begin{prooftree}
	\AxiomC{$L$}
	\UnaryInfC{$L[\beta]$}
      \end{prooftree}
    \end{minipage}
    \begin{minipage}{.58\linewidth}
      \begin{itemize}
        \vspace{1em}
        \item $\beta$ is a partial assignment to a universal block $U$ of $\Q$.
        \item $\vars(L)$ contains no variables right of $U$, with respect to $\Q$.\\
      \end{itemize}
    \end{minipage}

\end{framed}
}
\newcommand{\QUResDerivationRules}
{%
\begin{framed}

    \begin{minipage}{.22\linewidth}
    \textbf{Axiom:}
    \end{minipage}
    \begin{minipage}{.33\linewidth}
      \begin{prooftree}
	\AxiomC{}
	\UnaryInfC{$C$}
      \end{prooftree}
    \end{minipage}
    \begin{minipage}{.40\linewidth}
      $C$ is a clause in the matrix $\phi$.
    \end{minipage}
    \vspace{0.5em}\\

    \begin{minipage}{.22\linewidth}
    \textbf{Weakening:}
    \end{minipage}
    \begin{minipage}{.33\linewidth}
      \begin{prooftree}
	\AxiomC{$C$}
	\UnaryInfC{$C \cup W$}
      \end{prooftree}
    \end{minipage}
    \begin{minipage}{.40\linewidth}
      Each variable appearing in $W$ is in $\vars(\Phi)$.\\ 
      The consequent $C \cup W$ is non-tautologous.
    \end{minipage}
    \vspace{0.5em}\\

    \begin{minipage}{.22\linewidth}
    \textbf{Resolution:}
    \end{minipage}
    \begin{minipage}{.33\linewidth}
      \begin{prooftree}
	\AxiomC{$C_1 \cup \{x\}$}
	\AxiomC{$C_2 \cup \{\lnot x\}$}
	\BinaryInfC{$C_1 \cup C_2$}
      \end{prooftree}
    \end{minipage}
    \begin{minipage}{.40\linewidth}
      The resolvent $C_1 \cup C_2$ is non-tautologous.
    \end{minipage}
    \vspace{0.5em}\\

    \begin{minipage}{.22\linewidth}
    \textbf{Universal reduction:}
    \end{minipage}
    \begin{minipage}{.33\linewidth}
      \begin{prooftree}
	\AxiomC{$C \cup U$}
	\UnaryInfC{$C$}
      \end{prooftree}
    \end{minipage}
    \begin{minipage}{.40\linewidth}
      $U$ contains only universal literals.\\
      Each variable in $U$ is right of each existential variable in $C$, with respect to $\Q$.
    \end{minipage}
    \vspace{0.5em}\\

\end{framed}
}
\newcommand{\CPDerivationRules}
{%
\begin{framed}
    \begin{minipage}{.25\linewidth}
    \textbf{Clause Axiom:}
    \end{minipage}
    \begin{minipage}{.4\linewidth}
    \begin{prooftree}
    \AxiomC{}
    \UnaryInfC{$\sum_{l \in C} R(l) \geq 1$}
    \end{prooftree}
    \end{minipage}
    \begin{minipage}{.3\linewidth}
    for any clause $C \in \phi$, where $R(x) = x$, $R(\neg x) = 1-x$
    \end{minipage}
    \vspace{0.5em}\\
    
    \begin{minipage}{.25\linewidth}
    \textbf{Boolean Axiom:}
    \end{minipage}
    \begin{minipage}{.4\linewidth}
    \begin{minipage}{.45\linewidth}
    \begin{prooftree}
    \AxiomC{}
    \UnaryInfC{$x \geq 0$}
    \end{prooftree}
    \end{minipage}
    \begin{minipage}{.45\linewidth}
    \begin{prooftree}
    \AxiomC{}
    \UnaryInfC{$-x \geq -1$}
    \end{prooftree}
    \end{minipage}
    \end{minipage}
    \begin{minipage}{.3\linewidth}
    for any variable $x$
    \end{minipage}
    \vspace{0.5em}\\

    \begin{minipage}{.25\linewidth}
    \textbf{Linear combination:}
    \end{minipage}
    \begin{minipage}{.4\linewidth}
      \begin{prooftree}
	\AxiomC{$\sum_i a_i x_i \geq A$}
	\AxiomC{$\sum_i b_i x_i \geq B$}
	\BinaryInfC{$\sum_i (\alpha a_i + \beta b_i) x_i \geq \alpha A + \beta B$}
      \end{prooftree}
    \end{minipage}
    \begin{minipage}{.3\linewidth}
      for any $\alpha, \beta \in \mathbb{N}$
    \end{minipage}
    \vspace{0.5em}\\
    
    \begin{minipage}{.25\linewidth}
    \textbf{Division:}
    \end{minipage}
    \begin{minipage}{.4\linewidth}
      \begin{prooftree}
	\AxiomC{$\sum_i c a_i x_i \geq A$}
	\UnaryInfC{$\sum_i a_i x_i \geq \left\lceil \frac{A}{c} \right\rceil$}
      \end{prooftree}
    \end{minipage}
    \begin{minipage}{.3\linewidth}
      for any non-zero $c \in \mathbb{N}$
    \end{minipage}
\end{framed}
}
\newcommand{\PCRDerivationRules}
{%
\begin{framed}

    \begin{minipage}{.25\linewidth}
    \textbf{Clause Axiom:}
    \end{minipage}
    \begin{minipage}{.4\linewidth}
      \begin{prooftree}
	\AxiomC{}
	\UnaryInfC{$\prod_{l \in C} V(l) = 0$}
      \end{prooftree}
    \end{minipage}
    \begin{minipage}{.3\linewidth}
      for any clause $C \in \phi$, where $V(x) = x$, $V(\neg x) = \bar{x}$
    \end{minipage}
    \vspace{0.5em}\\
    
     \begin{minipage}{.25\linewidth}
    \textbf{Boolean Axiom:}
    \end{minipage}
    \begin{minipage}{.4\linewidth}
    \begin{minipage}{.45\linewidth}
      \begin{prooftree}
	\AxiomC{}
	\UnaryInfC{$y^2 - y = 0$}
      \end{prooftree}
      \end{minipage}
      \begin{minipage}{.45\linewidth}
      \begin{prooftree}
      \AxiomC{}
      \UnaryInfC{$x + \bar{x} - 1 = 0$}
      \end{prooftree}
      \end{minipage}
    \end{minipage}
    \begin{minipage}{.3\linewidth}
     for any $y \in \{ x, \bar{x} : x \in \vars(\phi) \}$\\
     for any $x \in \vars(\phi)$
    \end{minipage}
    \vspace{0.5em}\\
    
    \begin{minipage}{.25\linewidth}
    \textbf{Linear combination:}
    \end{minipage}
    \begin{minipage}{.4\linewidth}
      \begin{prooftree}
	\AxiomC{$p(\vec{x}) = 0$}
	\AxiomC{$q(\vec{x}) = 0$}
	\BinaryInfC{$\alpha \cdot p(\vec{x}) + \beta \cdot q(\vec{x}) = 0$}
      \end{prooftree}
    \end{minipage}
    \begin{minipage}{.3\linewidth}
      for any $\alpha, \beta \in \mathbb{F}$
    \end{minipage}
    \vspace{0.5em}\\
    
    \begin{minipage}{.25\linewidth}
    \textbf{Multiplication:}
    \end{minipage}
    \begin{minipage}{.4\linewidth}
      \begin{prooftree}
	\AxiomC{$p(\vec{x}) = 0$}
	\UnaryInfC{$y \cdot p(\vec{x}) = 0$}
      \end{prooftree}
    \end{minipage}
    \begin{minipage}{.3\linewidth}
      for any $y \in \{ x , \bar{x} : x \in \vars(\phi) \}$
    \end{minipage}
\end{framed}
}
\begin{document}

\title[Size, Cost, and Capacity]{Size, Cost, and Capacity:\\A Semantic Technique for Hard Random QBFs\rsuper*}
\titlecomment{{\lsuper*}An extended abstract of this paper appeared in the proceedings of ITCS 2018 \cite{BBH18}.}

\author[O.~Beyersdorff]{Olaf Beyersdorff\rsuper{a}}
\author[J.~Blinkhorn]{Joshua Blinkhorn\rsuper{b}}
\author[L.~Hinde]{Luke Hinde\rsuper{c}}
\address{\lsuper{a}Friedrich Schiller University Jena, Germany}
\email{olaf.beyersdorff@uni-jena.de}
\address{\lsuper{b,c}School of Computing, University of Leeds, UK}
\email{\{scjlb,sclpeh\}@leeds.ac.uk}


\newcommand{\josh}[1]{\textcolor{red}{Josh: #1}}
\newcommand{\luke}[1]{\textcolor{blue}{Luke: #1}}


\begin{abstract}
As a natural extension of the SAT problem, an array of proof systems for quantified Boolean formulas (QBF) have been proposed, many of which extend a propositional proof system to handle universal quantification. 
By formalising the construction of the QBF proof system obtained from a propositional proof system by adding universal reduction (Beyersdorff, Bonacina \& Chew, ITCS `16), we present a new technique for proving proof-size lower bounds in these systems.
The technique relies only on two semantic measures: the \emph{cost} of a QBF, and the \emph{capacity} of a proof.
By examining the capacity of proofs in several QBF systems, we are able to use the technique to obtain lower bounds based on cost alone.

As applications of the technique, we first prove exponential lower bounds for a new family of simple QBFs representing equality.
The main application is in proving exponential lower bounds with high probability for a class of randomly generated QBFs, the first `genuine' lower bounds of this kind, which apply to the QBF analogues of resolution, Cutting Planes, and Polynomial Calculus.
Finally, we employ the technique to give a simple proof of hardness for the prominent formulas of Kleine B\"{u}ning, Karpinski and Fl\"{o}gel.
\end{abstract}

\maketitle

\section{Introduction}   %
\label{sec:introduction} %

\subsection{Proof complexity and solving} %
\label{subsec:proof-solving}              %
The central question in \emph{proof complexity} can be stated as follows: Given a logical theory and a provable theorem, what is the size of the shortest proof? 
This question bears tight connections to central problems in computational complexity \cite{Buss12,CookR79} and bounded arithmetic \cite{Krajicek95,CookN10}.

Proof complexity is intrinsically linked to recent noteworthy innovations in solving, owing to the fact that any decision procedure implicitly defines a \emph{proof system} for the underlying language.
Relating the two fields in this way is illuminating for the practitioner; proof-size and proof-space lower bounds correspond directly to best-case running time and memory consumption for the corresponding solver.
Indeed, proof complexity theory has become the main driver for the asymptotic comparison of solving implementations.
However, in line with neighbouring fields (such as computational complexity), it is the central task of demonstrating lower bounds, and of \emph{developing general methods} for showing such results, that proves most challenging for theoreticians.

The desire for general techniques derives from the strength of modern implementations.
Cutting-edge advances in solving, spearheaded by progress in Boolean satisfiability (SAT), appear to provide a means for the efficient solution of computationally hard problems \cite{Vardi14}.
Contemporary SAT solvers routinely dispatch instances in millions of clauses \cite{MalikZ09}, and are effectively employed as $\NP$ oracles in more complex settings \cite{MeelVCFSFIM16}.
The state-of-the-art procedure is based on a propositional proof system called \emph{resolution}, operating on \emph{conjunctive normal form} (CNF) instances using a technique known as \emph{conflict-driven clause learning} (CDCL) \cite{SilvaS96}.
Besides furthering the intense study of resolution and its fragments \cite{Buss12}, the evident success has inevitably pushed research frontiers beyond the $\NP$-completeness of Boolean satisfiability.

\subsubsection*{Beyond propositional satisfiability}
A case in point is the logic of \emph{quantified Boolean formulas} (QBF), a theoretically important class that forms the prototypical $\PSPACE$-complete language \cite{StockmeyerM73}.
QBF extends propositional logic with existential and universal quantification, and consequently offers succinct encodings of concrete problems from conformant planning \cite{Rintanen07,EglyKLP17,CashmoreFG13}, ontological reasoning \cite{Kontchakov09}, and formal verification \cite{Benedetti08}, amongst other areas \cite{Dershowitz05,BloemKS14,StaberB07}.
There is a large body of work on practical QBF solving, and the relative complexities of the associated resolution-type proof systems are well understood \cite{BalabanovWJ14,BeyersdorffCJ15,JanotaM15}.

The semantics of QBF has a neat interpretation as a two-player \emph{evaluation game}.
Given a QBF $\QBF{\Q}{\phi}$, the $\exists$- and $\forall$-players take turns to assign the existential and universal variables of the formula following the order of the quantifier prefix $\Q$.
When all variables are assigned, the $\exists$-player wins if the propositional formula $\phi$ is satisfied; otherwise, the $\forall$-player takes the win. 
A folklore result states that a QBF is false if and only if the $\forall$-player can win the evaluation game by force; that is, if and only if there exists a winning strategy for the universal player.
The concept of \emph{strategy extraction} originates from QBF solving \cite{GoultiaevaGB11}, whereby a winning strategy `extracted' from the proof certifies the truth or falsity of the instance.
In practice it is not merely the truth value of the QBF that is required -- for real-world applications, certificates provide further useful information \cite{StaberB07}. 

A major paradigm in QBF practice is \emph{quantified conflict-driven clause learning} (QCDCL) \cite{GiunchigliaMN09}, a natural extension of CDCL.
The vast majority of QBF solvers build upon existing SAT techniques in a similar fashion. 
Such a notion can hardly be surprising when one considers that an existentially quantified QBF is merely a propositional formula.
The novel challenge for the QBF practitioner, therefore, and the real test of a solver's strength, is in the handling of universal quantification.

Proof-theoretic analysis of associated QBF proof systems makes this notion abundantly clear.
Consider \emph{QU-Resolution} ($\QURes$) \cite{BuningKF95,Gelder12}, a well-studied QBF proof system closely related to QCDCL solving.\footnote{The calculus $\QURes$, proposed by Van Gelder in \cite{Gelder12}, generalises $\QRes$, introduced by Kleine B\"{u}ning et al. in \cite{BuningKF95}, by allowing resolution over universally quantified pivots.}
That calculus simply extends propositional resolution with a \emph{universal reduction} rule, which allows universal literals to be deleted from clauses under certain conditions.
On existentially quantified QBFs, therefore, $\QURes$ is identical to resolution, and proof-size lower bounds for the latter lift immediately to the former.
From the viewpoint of quantified logic, lower bounds obtained in this way are rightly considered \emph{non-genuine}; they belong in the realm of propositional proof complexity, and tell us nothing about the relative strengths of resolution-based QBF solvers.  

Universal reduction is applicable to many suitable propositional proof systems $\PS$, giving rise to a general model for QBF systems in the shape of $\Pred$ \cite{BeyersdorffBC16}, which adds to the propositional rules of $\PS$ the universal reduction rule `$\red$'. 
As a consequence, the phenomenon of genuineness extends well beyond resolution.
In this paper, in addition to resolution we consider three stronger systems:
Cutting Planes ($\CP$), a well-studied calculus that works with linear inequalities;
the algebraic system Polynomial Calculus (with Resolution, $\PCR$);
and Frege's eponymous `textbook' system for propositional logic.
Their simulation order is depicted in Figure~\ref{fig:proof-systems}.
\begin{figure}[t]
\centering
\JLBFdiamond
\caption{The simulation order of the four QBF proof systems featured in this paper. A proof system $\mathsf{A}$ $p$-simulates the system $\mathsf{B}$ if each $\mathsf{B}$-proof of a formula $\Phi$ can be translated in polynomial time into an $\mathsf{A}$-proof of $\Phi$ \cite{CookR79}. If neither $\mathsf{A}$ nor $\mathsf{B}$ p-simulates the other, then they are incomparable. \label{fig:proof-systems}}
\end{figure}

What is generally desired (and seemingly elusive) in the QBF community is the development of \emph{general} techniques for \emph{genuine} lower bounds.\footnote{Instead of using a more descriptive name such as  `quantifier-alternation-exploiting lower bounds' we refer throughout  this work simply to `genuine lower bounds' to preserve readability, hopefully without loss in clarity.}
The current work embraces maximal generality, and contributes a new technique for genuine QBF lower bounds in the general setting of $\Pred$.

\subsubsection*{When is a lower bound genuine?}
Naturally, the aforementioned objections to non-genuine QBF lower bounds may be raised in the abstract setting of $\Pred$, as that system encompasses the propositional proof system $\PS$.
Indeed, given any unsatisfiable propositional formulas that require large proofs in $\PS$, one can easily construct any number of contrived QBF families -- even with arbitrarily many quantifier alternations -- each of which require large proofs in $\Pred$, but whose hardness stems from the original propositional formulas.
That such lower bounds ought to be identified as non-genuine was highlighted in \cite{Chen17} (cf.\ also \cite{BeyersdorffHP17}).

A formal model for genuine QBF lower bounds was proposed in \cite{BeyersdorffHP17}. 
The model essentially adds an $\NP$ oracle to the proof system, which allows any logically correct propositional inference to be derived in a single step.
Propositional hardness is thus removed from the system.
In the context of $\Pred$, the addition of an $\NP$ oracle essentially gives $\PS$-derivations for free; any formula derivable in $\PS$ can be introduced immediately.
If a lower bound persists despite the oracle, then the proof either requires many universal reduction steps, or requires reductions on large lines.

The model provides an adequate definition of a genuine lower bound, and we adopt a similar approach in this article.
We work with \emph{semantic} $\Pred$ refutations, in which any logically correct $\PS$ inference or universal reduction step is allowed (Definition~\ref{def:semantic-ref}).
The technique we introduce gives a lower bound on the size of a semantic $\Pred$ refutation, i.e. a lower bound that still holds with an $\NP$ oracle.
We therefore deal exclusively in genuine results.

\subsubsection*{Random formulas}
In the design and testing of solvers, large sets of formulas are needed to make effective comparisons between implementations. While many formulas have been constructed by hand, often representing some combinatorial principle, it is of clear benefit to have a procedure to randomly generate such formulas. The search for a better understanding of when such formulas are likely to be true or false, and their likely hardness for solvers, brings us to the study of the proof complexity  of random CNFs and QBFs.

In propositional proof complexity, random 3-SAT instances, the most commonly studied random CNFs, are relatively well understood. There is a constant $r$ such that if a random CNF on $n$ variables contains more than $rn$ clauses, then the CNF is unsatisfiable with probability approaching 1 \cite{FrancoP83}; the upper bound for $r$ has regularly been improved (see \cite{DiazKMP08}, and references therein for previous upper bounds).
Further, if the number of clauses is below $n^{6/5 - \epsilon}$, the CNF requires exponential-size resolution refutations with high probability \cite{BeameP96}. Hardness results for random CNFs are also known for Polynomial Calculus \cite{AlekhnovichR01,Ben-SassonI10} and for Cutting Planes \cite{HrubesP17,FlemingPPR17}.

In contrast, comparatively little is known about randomly generated QBFs. The addition of universally quantified variables raises questions as to what model should be used to generate such QBFs -- care is needed to ensure a suitable balance between universal and existential variables.\footnote{If any clause contains only universal variables, then there is a constant-size refutation using only this clause.}
The best-studied model is that of (1,2)-QCNFs \cite{ChenI05}, for which bounds on the threshold number of clauses needed for a false QBF were shown in \cite{CreignouDER15}. However, to the best of our knowledge, nothing has yet been shown on the proof complexity of randomly generated QBFs. Proving such lower bounds constitutes the major application of our new technique.

\subsection{Our contributions} %
\label{sec:contributions}      %
The primary contribution of this work is the proposal of a \emph{novel and semantically-grounded technique} for proving genuine QBF lower bounds in $\Pred$, representing a significant forward step in the understanding of reasons for hardness in the proof complexity of quantified Boolean formulas.

We exemplify the technique with a new family of hard QBFs, notable for their simplicity, which we strongly suggest should henceforth occupy a prominent place in QBF proof complexity.
As our principal application we prove exponential lower-bounds in three concrete $\Pred$ systems for a large class of randomly generated QBFs.
This is the first time that genuine lower bounds have been shown \emph{en masse} for randomly generated QBFs.
Lastly, we note that our technique can be applied to give a simple proof of hardness for a family of well known QBFs from \cite{BuningKF95}.

In addition, we also determine exact conditions on a so-called \emph{base system} $\PS$ by which $\Pred$ is properly defined and receptive to our method.
We detail our contributions below, beginning with a description of the framework, followed by the lower bound technique,  and concluding with several applications. 

\subsubsection{The universal reduction formalism.}
In order to present our contribution in total generality, we take the concept of $\Pred$ (introduced in \cite{BeyersdorffBC16} for a hierarchy of Frege systems) and formalise conditions on $\PS$ yielding a sound and complete QBF proof system (Theorem~\ref{thm:pred-proof-system}).
We identify three natural properties that are sufficient:
(a) The set of derivable axioms is semantically equivalent to the input formula;
(b) The system exhibits logical correctness and implicational completeness;
(c) The system is closed under restrictions.
Any line-based propositional calculus possessing all three properties is referred to as a \emph{base system} (Definition~\ref{def:base-system}).
Formalising the framework of base systems renders our technique applicable to the complete spectrum of $\Pred$ systems.
All the concrete propositional calculi considered in this work (i.e. those appearing in Figure~\ref{fig:proof-systems}) are demonstrably base systems.
We note that static propositional proof systems (such as Nullstellensatz) cannot be upgraded to $\Pred$ simply because they are not line-based.

\subsubsection{A new technique for genuine QBF lower bounds}
Our technique is based on careful construction and analysis of \emph{strategy extraction} in $\Pred$ \cite{BeyersdorffBC16}, a process by which a winning strategy for the universal player can be efficiently obtained from a refutation.
Given a $\Pred$ refutation $\pi$ of a QBF $\Phi$, \emph{round-based strategy extraction} works by first restricting $\pi$ according to the $\exists$-player's move, then collecting the response for the $\forall$-player from some line in $\pi$, and iterating until the evaluation game concludes. 


We define two measures called \emph{cost} and \emph{capacity} (Definitions~\ref{def:cost} and~\ref{def:capacity}). 
The cost of $\Phi$ is defined such that any winning strategy contains at least $\cost(\Phi)$ responses to some universal block.
Cost, therefore, is a natural semantically-grounded measure that provides a lower bound on the total number of extracted responses.
The upper bound is given by the capacity of $\pi$, a measure defined such that any response contributed from a given line in $\pi$ may be selected from a set of cardinality at most $\capa(\pi)$.  

Putting the two measures together, we obtain our main result, the \emph{Size-Cost-Capacity Theorem} (Theorem~\ref{thm:costcap}), which states that the size of $\pi$ is at least $\cost(\Phi) / \capa(\pi)$.
We also show explicitly that Size-Cost-Capacity returns a lower bound on the size of a \emph{semantic} $\Pred$ refutation, which illustrates that all results obtained by application of our technique are genuine QBF lower bounds in the aforementioned sense.

For direct applications of Size-Cost-Capacity, we first supply upper bounds on the capacity of refutations in concrete $\Pred$ proof systems.
We prove that all $\QURes$ and $\CPred$ refutations have capacity equal to~$1$ (Propositions~\ref{prop:QURes-capacity} and \ref{prop:CP-cap}), whereupon \emph{cost alone} gives an absolute proof-size lower bound (Corollaries~\ref{cor:QURes-bound} and~\ref{cor:CP-bound}).
The case for the QBF version of Polynomial Calculus with Resolution ($\PCRred$) is much more challenging, and requires some linear algebra, owing to the underlying algebraic composition of Polynomial Calculus.
Interestingly, it turns out that the capacity of a refutation in that system is no greater than its size (Proposition~\ref{prop:PC-capacity}), thus proof size is at least the square root of cost (Corollary~\ref{cor:PC-bound}).
Equipped with these corollaries, showing that the cost of a QBF is superpolynomial yields immediate proof-size lower bounds for all three systems simultaneously.

\subsubsection{Applications of the technique}
We demonstrate the effectiveness of our new technique on three applications.

\subsubsection*{A. The equality formulas: a non-trivial special case.}
We introduce a new family of hard QBFs that we call the \emph{equality formulas} (Definition~\ref{def:equality}), so called because the only winning strategy for the $\forall$-player in the evaluation game is to copy exactly the moves of the $\exists$-player.
We first prove that the equality formulas have exponential cost (Proposition~\ref{prop:equality-cost}).
Using Size-Cost-Capacity, we therefore prove that they require exponential-size refutations in $\QURes$, $\CPred$ and $\PCRred$.
We also demonstrate that the formulas have linear-size refutations in $\Fred$ (Proposition~\ref{prop:frege}), which shows that $\Fred$ proofs can have exponential capacity.

Closer inspection reveals that this lower bound is of a very specific type -- it is a genuine QBF lower bound (the formulas are not harbouring propositional hardness) that does not derive from a circuit lower bound (the winning strategy is not hard to compute in an associated circuit class).
In existing QBF literature, the only other example of such a family comes from the famous formulas of Kleine B\"{u}ning et al. \cite{BuningKF95} (cf.\ item C. below).
Those formulas are significantly more complex, and exhibit unbounded quantifier alternation compared to the (bounded) $\Sigma_3$ prefix of the equality formulas.

Indeed, the equality formulas appear to capture rather well the role that high cost plays in round-based strategy extraction to enforce large refutations.
For that reason, we include a direct proof of hardness for $\QURes$ in Section~\ref{sec:equality}.
This intended as a primer, to illustrate by example the concept and method-of-proof behind our lower bound technique.

\subsubsection*{B. The first hard random QBFs.}
For the major application of our technique, we define a class of random QBFs (Definition~\ref{def:Qnmc}) and prove that they are hard with high probability in all three systems $\QURes$, $\CPred$ and $\PCRred$ (Theorem~\ref{thm:random-hardness}).
We generate instances that combine the overall structure of the equality formulas with the existing model of random QBFs \cite{ChenI05} used in the competitive evaluation of solvers \cite{Pulina16,BrummayerLB10}. Thus, while at first glance these formulas may seem rather structured, they are simply a disjunction of QBFs generated using a minor variation on the (1,2)-QCNF model of \cite{ChenI05}.

Drawing on the existing literature \cite{Vega01j,ChvatalR92,CreignouDER15}, we show that suitable choices of parameters force our generated formulas to be false and have exponential cost with high probability (Lemma~\ref{lem:rand-highcost}).
Perhaps surprisingly, the cost lower bound is constructed by applying results on the unsatisfiability of random 2-SAT instances \cite{Vega01j} and the truth of random (1,2)-QCNFs \cite{CreignouDER15}.
These results both concern only the truth value of the corresponding formulas, and taken individually seem unrelated to cost.
However, by carefully choosing the number of clauses so as to allow the application of both results, we can construct a cost lower bound using a novel semantic argument.

Our contribution constitutes the first proof-size lower bounds for randomly generated formulas in the QBF proof complexity literature.
We emphasize that these are genuine QBF lower bounds in the aforementioned sense; they are not merely hard random CNFs lifted to QBF.  

\subsubsection*{C. New proofs of known lower bounds.}
Our final application uses Size-Cost-Capacity to provide a new proof of the hardness of the prominent QBFs of Kleine B\"{u}ning, Karpinski and Fl\"{o}gel \cite{BuningKF95}.
We consider a common modification of the formulas, denoted by $\lambda(n)$, in which each universal variable is `doubled'.
This modification is known to lift lower bounds in $\QRes$ to lower bounds in $\QURes$ \cite{BalabanovWJ14}, where we can apply Size-Cost-Capacity.

By rearranging the quantifier prefix to quantify all the additional universal variables in the penultimate quantifier block, we obtain a weaker formula with exponential cost (Proposition~\ref{prop:kb-cost}).
An elementary reduction then shows that $\lambda(n)$ requires exponential size $\QURes$ refutations (Corollary~\ref{cor:kb-size}).
As $\QURes$ lower bounds on these modified formulas are shown to be equivalent to $\QRes$ lower bounds on the original formulas, our technique even proves the original lower bounds from \cite{BuningKF95} (cf.\ also \cite{BeyersdorffCJ15}), and provides some insight as to the source of hardness.

\subsection{Relation to previous work}
It is fair to say that there is a scarcity of general methods for showing genuine lower bounds in systems like $\Pred$.
In contrast, a number of techniques for propositional calculi have emerged from the intense study of resolution \cite{Buss12,Segerlind07}.

Researchers have of course attempted to lift these techniques to quantified logic, but with mixed success.
The seminal size-width relations for resolution \cite{Ben-SassonW01}, which describe proof size in terms of proof width, are rendered ineffectual by universal quantification \cite{BeyersdorffCMS18}.  
The prover-delayer techniques of \cite{BeyersdorffGL13,PudlakI00} have been successfully lifted to QBF, but only apply to the weaker tree-like systems \cite{BeyersdorffCS17}, whereas solving techniques such as QCDCL are based on the stronger DAG-like versions.
Feasible interpolation \cite{Krajicek97} is an established propositional technique that has been successfully adapted \cite{BeyersdorffCMS17}, but it is applicable only to a small class of hand-crafted QBFs of a rather specific syntactic form.

Strategy extraction for QBF lower bounds has been explored previously by exploiting connections to circuit complexity \cite{BeyersdorffCJ15,BeyersdorffBC16,BeyersdorffP16}.
In particular, \cite{BeyersdorffBC16} established tight relations between circuit and proof complexity, lifting even strong circuit lower bounds for $\AC{0}[p]$ circuits \cite{Razborov87,Smolensky87} to QBF lower bounds for $\AC{0}[p]{-}\Fred$ \cite{BeyersdorffBC16}, which is unparalleled in the propositional domain.
In fact, for strong proof systems such as $\Fred$, this strategy extraction technique is sufficient to prove any genuine QBF lower bound, in the sense that any superpolynomial lower bound for $\Fred$ arises either due to a lower bound for $\Frege$, or due to a lower bound for Boolean circuits \cite{BeyersdorffP16}.
However for weaker systems such as $\QURes$, this does not hold and there exist lower bounds which are neither a propositional lower bound nor a circuit lower bound \cite{BeyersdorffHP17}.
The reasons underlying such hardness results are at present not well understood.
The development of techniques for, or a characterisation of, such lower bounds would be an important step in QBF proof complexity.

The major drawback of the existing approach of \cite{BeyersdorffCJ15,BeyersdorffBC16,BeyersdorffP16}, of course, is the rarity of superpolynomial lower bounds from circuit complexity \cite{Vollmer99}, especially for larger circuit classes to which the stronger QBF proof systems connect.
With Size-Cost-Capacity we employ a much different approach to strategy extraction.
Our technique is motivated by semantics and \emph{does not interface with circuit complexity whatsoever}.
Instead, lower bounds are determined directly from the semantic properties of the instance, and consequently we make advances out of the reach of previous techniques.  

\subsection{Organisation of the paper} %
We provide the relevant background in Section~\ref{sec:prelims}.
In Section~\ref{sec:equality}, we introduce the equality formulas and give the direct proof of hardness for $\QURes$, while providing an overview of round-based strategy extraction.
Section~\ref{sec:framework} introduces our framework, including the formal definition of a base system and the proofs of soundness and completeness of $\Pred$.
Our lower bound technique follows in Section~\ref{sec:technique}, comprising definitions of cost and capacity and the proof of the Size-Cost-Capacity Theorem.
Upper bounds on capacity for $\CPred$ and $\PCRred$ are the subject of Section~\ref{sec:cap-bounds}.
Our major application to random QBFs is presented in Section~\ref{sec:random-QBFs}.
In Section~\ref{sec:kleinebuening} we apply Size-Cost-Capacity to the formulas of Kleine B\"{u}ning et al. \cite{BuningKF95}.
We offer some concluding thoughts in Section~\ref{sec:conclusions}. 

\section{Preliminaries} %
\label{sec:prelims}
\subsection{Quantified Boolean formulas}
A \emph{conjunctive normal form} (CNF) formula is a conjunction of clauses, each of which is a disjunction of literals, and a \emph{literal} is a Boolean variable or its negation.
We represent a CNF as a set of clauses, and a clause as a set of literals.

A \emph{quantified Boolean formula} (QBF) in \emph{closed prenex form} is typically denoted $\Phi :=\QBF{\Q}{\phi}$.
In the \emph{quantifier prefix} $\Q := \Q_1 X_1 \cdots \Q_n X_n$, the $X_i$ are pairwise-disjoint sets of Boolean variables (or \emph{blocks})\footnote{Whereas a block $X = \{x_1, \dots ,x_m\}$ is a set, it is written explicitly in a prefix as a string of variables $x_1 \cdots x_m$.} each of which is quantified either existentially or universally by the \emph{associated quantifier} $\Q_i \in \{\exists , \forall\}$, and consecutive blocks are oppositely quantified.
The \emph{propositional part} $\phi$ is a propositional formula all of whose variables $\vars(\phi)$ are quantified in \Q.
A \emph{QCNF} is a QBF whose propositional part is a CNF.

For a literal $l$,  we write $\var(l) := x$ iff $l = x$ or $l = \neg x$.
By the variables of $\Phi$ we mean the set $\vars(\Phi) := \bigcup^n_{i=1} X_i$. 
The set of existential variables of $\Phi$, denoted $\vars_\exists(\Phi)$, is the union of those $X_i$ whose associated quantifier $\Q_i$ is $\exists$, and we define the universal variables of $\Phi$ similarly. 
The prefix \Q defines a binary relation $<_\Q$ on the variables of $\Phi$, such that $x_i <_\Q x_j$ holds iff $x_i \in X_i$, $x_j \in X_j$ and $i<j$, in which case we say that $x_i$ \emph{is left of} $x_j$ ($x_j$ \emph{is right of} $x_i$) with respect to \Q.
For two variable sets $X,X^\prime \subseteq \vars(\Phi)$, we write $X <_\Q X^\prime$ iff each variable in $X$ is left of each variable in $X^\prime$ with respect to \Q.

A \emph{literal} $l$ is a Boolean variable $x$ or its negation $\lnot x$, and we write $\var(l) := x$.
A \emph{total assignment} $\tau$ to a set $\vars(\tau) = X$ of Boolean variables is a function $\tau:X \rightarrow \{0,1\}$, typically represented as a set of literals in which the literal $\lnot x$ (resp. $x$) represents the assignment $x \mapsto 0$ (resp. $x \mapsto 1$).
The set of all total assignments to $X$ is denoted $\langle X \rangle$.
A \emph{partial assignment} to $X$ is a total assignment to a subset of $X$.
The \emph{projection} of $\tau$ to a set $X^\prime$ of Boolean variables is the assignment $\{l \in \tau : \var(l) \in X^\prime\}$.

The \emph{restriction} of $\Phi$ by an assignment $\tau$ is $\Phi[\tau] := \QBF{\Q[\tau]}{\phi[\tau]}$, where $\Q[\tau]$ is obtained from $\Q$ by removing each variable in $\vars(\tau)$ (and any redundant quantifiers), and $\phi[\tau]$ is the restriction of $\phi$ by $\tau$.
Restriction of propositional formulas is defined by the conventional inductive semantics of propositional logic; that is, $\phi[\tau]$ is obtained from $\phi$ by substituting each occurrence of a variable in $\vars(\tau)$ by its associated truth value, and simplifying the resulting formula in the usual way. 
%
\subsection{QBF semantics}
Semantics are neatly described in terms of strategies in the two-player \emph{evaluation game}.
The game takes place over $n$ rounds, during which the variables of a QBF $\Phi := \QBF{\Q}{\phi}$ are assigned strictly in the linear order of the prefix $\Q := \exists E_1 \forall U_1 \cdots \exists E_n \forall U_n$.\footnote{An arbitrary QBF can be written in this form by allowing $E_1$ and $U_n$ to be empty.}
In the $i^{\mbox{\scriptsize th}}$ round, the existential player selects an assignment $\alpha_i$ to $E_i$ and the universal player responds with an assignment $\beta_i$ to $U_i$.
At the conclusion the players have constructed a total assignment $\tau := \bigcup^n_{i=1} (\alpha_i \cup \beta_i) \in \langle \vars(\Phi) \rangle$.
The existential player wins iff $\phi[\tau] = \top$; the universal player wins iff $\phi[\tau] = \bot$. 

A strategy for the universal player details exactly how she should respond to all possible moves of the existential player.
Formally, a \emph{$\forall$-strategy} for $\Phi$ is a function $S : \langle \vars_\exists(\Phi) \rangle \rightarrow \langle \vars_\forall(\Phi) \rangle$ that satisfies the following for each $\alpha, \alpha^\prime \in \dom(S)$ and each $i \in [n]$: if $\alpha$ and $\alpha^\prime$ agree on $E_1 \cup \cdots \cup E_i$, then $S(\alpha)$ and $S(\alpha^\prime)$ agree on $U_1 \cup \cdots \cup U_i$.\footnote{Two assignments agree on a set if and only if their projections to that set are identical.}
We say that $S$ is \emph{winning} iff $\phi[\alpha \cup S(\alpha)] = \bot$ for each $\alpha \in \vars_\exists(\Phi)$.

Existential strategies are defined dually.
Given a QBF $\Phi^\prime := \QBF{\forall U_1 \exists E_1 \cdots \forall U_n \exists E_n}{\phi}$, an \emph{$\exists$-strategy} for $\Phi^\prime$ is a function $S^\prime : \langle \vars_\forall(\Phi^\prime) \rangle \rightarrow \langle \vars_\exists(\Phi^\prime) \rangle$ that satisfies the following for each $\beta, \beta^\prime \in \dom(S^\prime)$ and each $i \in [n]$:
if $\beta$ and $\beta^\prime$ agree on $U_1 \cup \cdots \cup U_i$, then $S(\beta)$ and $S(\beta^\prime)$ agree on $E_1 \cup \cdots \cup E_i$.
We say that $S$ is \emph{winning} iff $\phi[\beta \cup S^\prime(\beta)] = \top$ for each $\alpha \in \vars_\forall(\Phi^\prime)$.

\begin{prop}[folklore]
A QBF is false if and only if it has a winning $\forall$-strategy, and is true if and only if it has winning $\exists$-strategy.
\end{prop}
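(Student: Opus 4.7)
The plan is to prove both biconditionals simultaneously by induction on the number $n$ of existential/universal block pairs in the prefix, exploiting the fact that the two claims are fully dual (swap $\exists \leftrightarrow \forall$, winning $\forall$-strategy $\leftrightarrow$ winning $\exists$-strategy, $\bot \leftrightarrow \top$), so that it suffices to establish the first one.

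The base case $n = 0$ reduces to the trivial observation that when the prefix is empty, $\Phi$ is a propositional constant and the empty function is the unique candidate $\forall$-strategy; it is winning precisely when that constant equals $\bot$, i.e.\ precisely when $\Phi$ is false.

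For the inductive step, peel off the outermost pair of blocks and write $\Phi = \exists E_1 \forall U_1\, \Psi$ with $\Psi$ having $n-1$ pairs of blocks. Unfolding the semantic definition of falsity gives: $\Phi$ is false iff for every $\alpha \in \langle E_1 \rangle$ there is a $\beta \in \langle U_1 \rangle$ such that $\Psi[\alpha \cup \beta]$ is false. For the forward direction, pick by choice a witness $\beta_\alpha$ for each $\alpha$, invoke the inductive hypothesis to obtain a winning $\forall$-strategy $T_\alpha$ for $\Psi[\alpha \cup \beta_\alpha]$, and glue these into a $\forall$-strategy $S$ for $\Phi$ defined by $S(\alpha \cup \gamma) := \beta_\alpha \cup T_\alpha(\gamma)$, where $\gamma$ ranges over assignments to the existential variables of $\Psi$. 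Winning-ness follows immediately from the winning-ness of each $T_\alpha$. For the converse, given a winning $\forall$-strategy $S$ for $\Phi$, the outermost instance of $S$'s consistency condition forces $S(\alpha \cup \gamma)$ to agree on $U_1$ across all $\gamma$; calling that common value $\beta_\alpha$, the residual $\gamma \mapsto S(\alpha \cup \gamma) \setminus \beta_\alpha$ is a winning $\forall$-strategy for $\Psi[\alpha \cup \beta_\alpha]$, whose falsity the inductive hypothesis then yields, whence $\Phi$ is false.

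The main technical obstacle is verifying the consistency clause in the definition of a $\forall$-strategy, both when building $S$ and when extracting sub-strategies from $S$. In the construction, if $\alpha \cup \gamma$ and $\alpha' \cup \gamma'$ agree on $E_1 \cup \cdots \cup E_i$ then in particular they agree on $E_1$, forcing $\alpha = \alpha'$, so both outputs use the same $\beta_\alpha$ and the same $T_\alpha$; the required agreement on $U_1 \cup \cdots \cup U_i$ then reduces to the consistency of $T_\alpha$, which is handed to us by the inductive hypothesis. In the extraction, the well-definedness of $\beta_\alpha$ as a function of $\alpha$ alone is precisely the $i = 1$ instance of $S$'s own consistency condition, and the deeper agreement requirements transfer cleanly to the residual strategy. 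These checks are routine but easy to get wrong if one is careless about the index-shift between the prefixes of $\Phi$ and $\Psi$.
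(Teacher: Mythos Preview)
The paper does not prove this proposition at all; it is stated as folklore and left without proof. Your inductive argument is correct and is precisely the standard one: peel off the outermost $\exists E_1 \forall U_1$ pair, use the semantic unfolding of the quantifiers, and build or decompose the strategy accordingly, with the consistency condition handling the well-definedness of the first-block response. The index-shift bookkeeping you flag is the only place where care is needed, and you have handled it correctly. Since there is nothing in the paper to compare against, there is no methodological difference to discuss; your write-up would serve as a perfectly adequate proof of the folklore claim.
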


%
\subsection{QBF resolution}
\emph{Resolution} is a well-studied refutational proof system for propositional CNF formulas with a single inference rule: the \emph{resolvent} $C_1 \cup C_2$ may be derived from clauses $C_1 \cup \{x\}$ and $ C_2 \cup \{\lnot x\}$.
Resolution is \emph{refutationally} sound and complete: that is, the empty clause can be derived from a CNF iff it is unsatisfiable.
Resolution becomes implicationally complete with the addition of the weakening rule, which allows literals to be added to clauses arbitrarily. 

\emph{QU-Resolution} ($\QURes$) \cite{BuningKF95,Gelder12} is a resolution-based proof system for false QCNFs.
The calculus supplements resolution with a \emph{universal reduction rule} which allows (literals in) universal variables to be removed from a clause $C$ provided that they are right of all existentials in $C$ with respect to $\Q$.
Tautological clauses are explicitly forbidden; for any variable $x$, one may not derive a clause containing both $x$ and $\lnot x$.
The rules of $\QURes$ are given in Figure~\ref{fig:QURes}.
Note that we choose to include weakening of clauses as a valid inference rule, to emphasize the implicational completeness of the underlying propositional system.

A \emph{$\QURes$ derivation} of a clause $C$ from a QCNF $\Phi$ is a sequence $C_1, \dots, C_m$ of clauses in which (a) each $C_i$ is either introduced as an axiom (i.e. $C_i \in \phi$) or is derived from previous clauses in the sequence using resolution or universal reduction, and (b) the \emph{conclusion} $C = C_m$ is the unique clause that is not an antecedent in the application of one of these inference rules.
A \emph{refutation} of $\Phi$ is a derivation of the empty clause from $\Phi$.

\begin{figure}[t]
\QUResDerivationRules
\caption{The rules of QU-Resolution. The input is a QCNF $\Phi = \QBF{\Q}{\phi}$ whose propositional part contains no tautological clauses.\label{fig:QURes}}
\end{figure}

\section{A direct proof of hardness for the equality formulas} %
\label{sec:equality}
In this section, we introduce the \emph{equality formulas} and give a direct proof of their hardness in the well-known QBF proof system $\QURes$.
The material in this section is intended to illuminate, by means of an accessible example, the paradigm of round-based strategy extraction, and our exploitation of it as a new lower-bound technique.

\subsection{The equality formulas.}
\label{subsec:equality}
The salient feature of the equality formulas, defined below, is that each instance has a unique winning strategy, and the cardinality of its range is exactly $2^n$.

\begin{defi}[equality formulas]
\label{def:equality}
For $n \in \mathbb{N}$, the $n^{\mbox{\scriptsize th}}$ \emph{equality formula} is
\begin{equation*}
\eq(n) := \exists x_1 \cdots x_n \forall u_1 \cdots u_n \exists t_1 \cdots t_n \cdot \left( \bigwedge^n_{i=1} (x_i \vee u_i \vee \lnot t_i) \wedge (\lnot x_i \vee \lnot u_i \vee \lnot t_i)\right) \wedge \left( \bigvee^n_{i=1} t_i \right).
\end{equation*}
\end{defi}

\noindent Note that the propositional part of $\eq (n)$ is the CNF consisting of the \emph{long clause} $\{t_1, \dots , t_n\}$ and each pair of clauses $\{x_i, u_i,\lnot  t_i\},\{\lnot x_i,\lnot u_i, \lnot t_i\}$ for $i \in [n]$.

The equality formulas are false, and it is clear that there is only one winning strategy for the universal player; namely, she must assign each $u_i$ the same value as the corresponding $x_i$.
Proceeding this way, she forces all $n$ unit clauses $\{\lnot t_i\}$ to be present on the board with only the final block left to play.
Then the existential player must lose, since satisfying all such unit clauses entails falsifying the long clause $\{t_1, \dots ,t_n\}$.
This is indeed the only way to win, since any other reply from the universal player would drop at least one unit clause, allowing her opponent to satisfy the long clause.

The upshot is that the existential player can \emph{force} his opponent to play any one of the total assignments to the universal variables.
It follows that the range of the unique winning $\forall$-strategy for $\eq(n)$ is exactly the set $R := \langle \{u_1, \dots ,u_n\} \rangle$.
The fact that this set has exponential cardinality is a key feature that we exploit in our lower bound proof.


\subsection{Overview of round-based strategy extraction.}
\label{subsec:strategy-extraction}
Strategy extraction is an important QBF paradigm that was motivated by solving certification (cf. \cite{NarizzanoPPT09,GoultiaevaGB11}), and subsequently received much attention in the literature \cite{BalabanovJ12,PeitlSS16,EglyLW13,BeyersdorffBC16}.
In this paper, we follow the algorithm given in \cite{GoultiaevaGB11}, which for the sake of clarity we refer to as \emph{round-based strategy extraction}.

Given a $\QURes$ refutation of a QCNF $\QBF{\exists E_1 \forall U_1 \dots \exists E_n \forall U_n}{\phi}$, round-based strategy extraction is an iterative procedure that computes a winning $\forall$-strategy.
During the course of the game, the $\forall$-player maintains a restriction of the refutation, from which her winning responses may be determined.
In each round she performs two operations:
\begin{itemize}[labelwidth=10cm]
\item restrict the current refutation by the $\exists$-player's move $\alpha_i \in \langle E_i \rangle$;
\item `read off' a response $\beta_i \in \langle U_i \rangle$ from the current refutation, then restrict by $\beta$.

\end{itemize}
These operations are repeated round by round until the game concludes.
At termination, we obtain a refutation of the input formula under a total assignment.
Since $\QURes$ is sound, the assignment must falsify the propositional part of the formula.
Hence, the correctness of the procedure rests on the fact that restrictions by $\alpha_i$ and $\beta_i$ preserve the refutation.
This is detailed below over Propositions~\ref{prop:QURes-existential-restriction} and~\ref{prop:QURes-universal-restriction}.

To streamline the material, we work with a normal form of $\QURes$ derivation in which universal reduction has stronger side conditions: one must remove all the universal literals from the rightmost universal block appearing in the clause.
Formally, one may derive $C$ from $C \cup R$ by universal reduction provided that $\vars(C) <_\Q \vars(R)$ and $\vars(R)$ is a subset of some universal block $U$ in $\Q$, where $\Q$ is the prefix of the input QBF.

It is easy to see that $\QURes$ derivations can be placed in such a normal form with no increase in size.
To do so, one first applies as large a universal reduction as possible to every clause, yielding a new refutation in which each clause is a subset of the corresponding one from the original refutation.
One then takes the first occurrence of the empty clause as the conclusion, and keeps only the subderivation of this conclusion, discarding all other clauses (as well as any duplicates), which are rendered redundant.

For the rest of this section, we assume $\QURes$ derivations to be in this normal form.
It is easy to see that normal refutations need \emph{at most one universal} reduction on the leftmost block, since the consequent of such a reduction must be the empty clause.

\subsubsection*{Restriction by the $\exists$-player's move.}
Informally, the restriction of a $\QURes$ refutation $\pi$ by an arbitrary assignment $\delta$ (denoted $\pi[\delta]$) is obtained simply by restricting the axiom clauses by $\delta$, discarding any that are satisfied, and carrying out the steps of the orignal refutation wherever possible.
Under a formal definition of restriction, it is easy to prove that existential restrictions preserve $\QURes$ refutations. 

\begin{prop}
\label{prop:QURes-existential-restriction}
Let $\pi$ be a $\QURes$ refutation of a QCNF $\Phi$, and let $\alpha$ be a partial assignment to the existential variables of $\Phi$.
Then $\pi[\alpha]$ is a $\QURes$ refutation of $\Phi[\alpha]$.
\end{prop}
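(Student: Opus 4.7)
The plan is to proceed by induction on the length of $\pi$, showing that the line-by-line restriction $\pi[\alpha]$ is itself a legal $\QURes$ derivation from $\Phi[\alpha]$. First I would fix the formal definition: for each clause $C$ appearing in $\pi$, declare $C$ \emph{satisfied} if $\alpha$ contains some literal of $C$, and otherwise set $C[\alpha] := C \setminus \{\lnot l : l \in \alpha\}$, deleting the falsified literals. The sequence $\pi[\alpha]$ is then the subsequence of $\pi$ obtained by discarding the satisfied lines and replacing each remaining line $C$ by $C[\alpha]$. Note that no new tautology is introduced, since restriction only removes literals from the original (non-tautological) clauses.

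Next I would verify, rule by rule, that every retained line of $\pi[\alpha]$ can be justified from earlier retained lines. For an axiom $C \in \phi$ that is not satisfied, $C[\alpha]$ is by definition an axiom of $\Phi[\alpha]$. For a weakening step $C \vdash C \cup W$, if $C \cup W$ is retained, then $C$ is also retained (any satisfying literal of $C$ would still satisfy $C \cup W$), and $C[\alpha] \vdash (C \cup W)[\alpha]$ is itself a weakening. For a universal reduction step $C \cup R \vdash C$ in the normal form described above, the literals of $R$ are universal and hence untouched by $\alpha$; one parent is satisfied iff the other is (the satisfying literal lies in $C$), and when both are retained the step $C[\alpha] \cup R \vdash C[\alpha]$ still satisfies the side condition because restricting $C$ only shrinks its variable set.

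The delicate case is resolution, $C_1 \cup \{x\}, C_2 \cup \{\lnot x\} \vdash C_1 \cup C_2$. If $x \notin \vars(\alpha)$ and neither parent is satisfied, then both restricted parents retain the pivot literal and their resolvent equals $(C_1 \cup C_2)[\alpha]$; if one parent is satisfied, then since $(C_1 \cup C_2) \supseteq C_i$ the resolvent is also satisfied. If instead $x \in \vars(\alpha)$, say $\alpha(x)=1$, then $C_1 \cup \{x\}$ is satisfied and dropped, while $(C_2 \cup \{\lnot x\})[\alpha] = C_2[\alpha]$. Now either $(C_1 \cup C_2)[\alpha]$ is satisfied (and also dropped), or it is a superset of $C_2[\alpha]$, and so can be inserted into $\pi[\alpha]$ as a single weakening step. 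The symmetric case is identical.

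The main obstacle is exactly this resolution case: one must confirm, by a clean case split, that whenever the pivot is assigned the restricted conclusion is either vacuous or derivable from the surviving parent by a weakening, and that no forbidden tautology ever appears. Given that care, the induction is routine, and the conclusion follows because the final line $\emptyset$ of $\pi$ contains no literals and is therefore never satisfied — so $\pi[\alpha]$ ends in $\emptyset$ and refutes $\Phi[\alpha]$.
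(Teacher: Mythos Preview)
Your proposal is correct and follows the standard line-by-line case analysis; the paper itself does not spell out a proof of this proposition (it is stated as ``easy to prove''), but the analogous result for semantic $\Pred$ refutations (Proposition~\ref{prop:existential-restriction}) is proved in exactly this way, by checking that each rule is preserved under restriction. Your treatment of the resolution case, in particular the use of weakening when the pivot is assigned, is the right move and is precisely why the paper includes weakening in its definition of $\QURes$.
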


It follows immediately that restriction by the $\exists$-player's move $\alpha_i \in \langle E_i \rangle$ preserves the current refutation.

\subsubsection*{Restriction by the $\forall$-player's response.}
Arbitrary universal restrictions do not preserve $\QURes$ refutations, as it is possible that a universal assignment satisfies the antecedent of a universal reduction step, but not the consequent.
However, a universal restriction does indeed preserve a refutation provided it does not satisfy any reduced literals.
If the first block $U$ of $\Q$ is universal, then we have at most one universal reduction on $U$, since the refutation finishes after this step.
Therefore, restriction by a total assignment to $U$ need only be consistent with the literals reduced in this step (if it exists) in order to preserve the refutation.
For non-trivial refutations, all of these reduced literals appear in the penultimate clause.

\begin{prop}
\label{prop:QURes-universal-restriction}
Let $\pi$ be a non-trivial $\QURes$ refutation of a QCNF $\Phi$ whose first block $U$ is universal, let $\beta \in \langle U \rangle$, and let $C$ be the set of $U$-literals appearing in the penultimate clause of $\pi$. 
If $\{\lnot l : l \in C\} \subseteq \beta$, then $\pi[\beta]$ is a refutation of $\Phi[\beta]$.
\end{prop}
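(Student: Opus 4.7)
My plan is to define $\pi[\beta]$ by restricting every clause of $\pi$ by $\beta$ and discarding those that become satisfied, then verifying by case analysis on the inference rules that the surviving sequence forms a valid $\QURes$ refutation of $\Phi[\beta]$.

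The routine inferences are handled as follows. Axioms of $\phi$ either restrict to axioms of $\phi[\beta]$ or are dropped as satisfied; weakening steps are preserved because restriction commutes with set inclusion; resolution on a pivot $x \notin U$ is unaffected by $\beta$ and survives verbatim. For resolution on a pivot $x \in U$, the totality of $\beta$ on $U$ forces exactly one antecedent to be satisfied and thereby dropped, while the restriction of the other antecedent is a subset of the restricted resolvent and can be weakened into it. A universal reduction $D \cup R \mapsto D$ with $R \not\subseteq U$ is unaffected by $\beta$, and its side condition $\vars(D) <_\Q \vars(R)$ transfers directly to $\Q[\beta]$.

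The critical case, where the hypothesis is consumed, is a universal reduction on $R \subseteq U$. By the normal form, $\vars(D) <_\Q \vars(R)$ combined with $U$ being the leftmost block forces $D = \emptyset$, so such a reduction can only appear as the concluding step of $\pi$, with antecedent $R$ equal to the penultimate clause. Hence $C = R$ in the notation of the statement, and the hypothesis $\{\lnot l : l \in C\} \subseteq \beta$ gives $R[\beta] = \emptyset$; the penultimate clause of $\pi[\beta]$ is already empty, and the final reduction is discarded.

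The main conceptual obstacle is that $\pi[\beta]$ is not step-for-step isomorphic to $\pi$ — resolutions on $U$-pivots become weakenings, and the terminal $U$-reduction disappears. The normal form does the heavy lifting here by confining every $U$-reduction to the last step, which is precisely what makes the penultimate-clause hypothesis sufficient.
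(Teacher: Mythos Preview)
Your proposal is correct and follows the same conceptual line as the paper: the normal form forces any reduction on the leftmost block $U$ to produce the empty clause, hence to be the unique final step, so the hypothesis on the penultimate clause suffices. The paper only sketches this argument in the paragraph preceding the proposition (``a universal restriction does indeed preserve a refutation provided it does not satisfy any reduced literals'' together with ``we have at most one universal reduction on $U$''), without the rule-by-rule verification; your explicit treatment of resolution on a $U$-pivot (one antecedent dropped, the other weakened into the restricted resolvent) is a detail the paper's sketch elides but which is needed for $\QURes$ as opposed to $\QRes$.
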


Hence, if the current refutation is non-trivial, the $\forall$-player may read off the $U$-literals from the penultimate clause, negate them, and extend them arbitrarily to an assignment $\beta_i \in \langle U_i \rangle$.
Preservation of the current refutation under restriction by $\beta_i$ is guaranteed by Proposition~\ref{prop:QURes-universal-restriction}.
If the current refutation is trivial, an arbitrary $\beta_i$ is clearly sufficient.

\subsection{Direct proof of hardness.}
\label{subsec:direct-proof}
Propositions~\ref{prop:QURes-existential-restriction} and~\ref{prop:QURes-universal-restriction} form the basis of a lower-bound proof for the equality formulas, via the following lemma.

\begin{lem}
\label{lem:QURes-assignments}
Let $n \in \mathbb{N}$, let $\pi$ be a $\QURes$ refutation of $\eq(n)$, and let $\beta$ be a total assignment to the universal variables of $\eq(n)$.
Then there exists a clause in $\pi$ that contains $\{\lnot l : l \in \beta \}$ as a subset.
\end{lem}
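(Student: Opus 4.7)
My plan is to run round-based strategy extraction on $\pi$ using the existential move $\alpha$ that, under $\eq(n)$'s unique winning $\forall$-strategy, elicits the response $\beta$. Concretely, define $\alpha$ on $\{x_1,\ldots,x_n\}$ by $\alpha(x_i) := \beta(u_i)$ for each $i \in [n]$. The desired clause of $\pi$ will be recovered from the penultimate clause of $\pi[\alpha]$.

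By Proposition~\ref{prop:QURes-existential-restriction}, $\pi[\alpha]$ is a $\QURes$ refutation of $\eq(n)[\alpha]$, and since $\alpha$ touches no universal literals the normal form of $\pi$ carries over to $\pi[\alpha]$ (every maximal universal reduction stays maximal). The matrix of $\eq(n)[\alpha]$ consists of the long clause $\{t_1,\ldots,t_n\}$ together with, for each $i$, either $\{u_i,\lnot t_i\}$ or $\{\lnot u_i,\lnot t_i\}$ depending on $\alpha(x_i)$; this CNF is propositionally satisfiable, so any refutation of $\eq(n)[\alpha]$ must use at least one universal reduction on $U$. Normal form then forces such a reduction to be the concluding step of $\pi[\alpha]$, with antecedent a clause $C$ consisting purely of $U$-literals.

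I then pin $C$ down via Proposition~\ref{prop:QURes-universal-restriction} and soundness of $\QURes$. For any total $\beta^\prime \in \langle U\rangle$ extending $\{\lnot l : l \in C\}$, the proposition yields that $\pi[\alpha][\beta^\prime]$ refutes $\eq(n)[\alpha][\beta^\prime]$, so soundness forces $\eq(n)[\alpha][\beta^\prime]$ to be false. A direct semantic check (along the lines of Section~\ref{subsec:equality}) shows that this happens precisely when $\beta^\prime(u_i) = \alpha(x_i)$ for every $i$, i.e.\ precisely when $\beta^\prime = \beta$; hence $\{\lnot l : l \in C\}$ admits $\beta$ as its unique total extension in $\langle U\rangle$, which forces $C = \{\lnot l : l \in \beta\}$. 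Finally, $C$ appears as $C^\prime[\alpha]$ for some $C^\prime$ in $\pi$, and since $C$ is purely over $U$ while $\alpha$ assigns only $X$-variables, $C^\prime \supseteq C = \{\lnot l : l \in \beta\}$, as required. The main obstacle is this third step: correctly coupling the syntactic fact (that the antecedent of the final reduction in $\pi[\alpha]$ is a pure $U$-clause) with the semantic fact (that $\eq(n)$ has a unique winning $\forall$-strategy whose range is all of $\langle U\rangle$) in order to identify $C$ exactly.
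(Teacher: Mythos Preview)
Your proposal is correct and follows essentially the same approach as the paper's proof: define $\alpha$ by $\alpha(x_i)=\beta(u_i)$, restrict $\pi$ by $\alpha$, read off the $U$-literals $C$ from the penultimate clause of $\pi[\alpha]$, use Proposition~\ref{prop:QURes-universal-restriction} together with soundness and the uniqueness of the winning $\forall$-response to conclude $C=\{\lnot l:l\in\beta\}$, and then observe that $C$ is a subclause of some clause in $\pi$. The only difference is expository: you additionally argue (via propositional satisfiability of the restricted matrix and normal form) that the penultimate clause of $\pi[\alpha]$ is \emph{purely} over $U$, whereas the paper simply takes the set of $U$-literals in that clause and appeals to non-triviality; both routes are valid and yield the same conclusion.
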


\begin{proof}
Let $X := \{x_1, \dots, x_n\}$, let $U := \{u_1, \dots , u_n\}$, and let $\alpha$ be the total assignment to $X$ defined by
$$\alpha(x_i) := \beta(u_i), \mbox{\,for each } i \in [n].$$
Observe that, by the nature of the equality formulas, $\beta$ is the unique total assignment to $U$ under which $\eq(n)[\alpha]$ is false. 
Moreover, $\pi[\alpha]$ is non-trivial, since the propositional part of $\eq(n)[\alpha]$ does not contain the empty clause.
Let $C$ be the set of $U$-literals appearing in the penultimate clause of $\pi[\alpha]$, and let $\beta^\prime \in \langle U \rangle$.
Now, if $\{\lnot l : l \in C\} \subseteq \beta^\prime$, then $\eq(n)[\alpha \cup \beta^\prime]$ is false, by Proposition~\ref{prop:QURes-universal-restriction} and the soundness of $\QURes$.
Therefore, if $\beta^\prime$ extends $\{\lnot l : l \in C\}$, then $\beta^\prime = \beta$.
It follows that $\beta = \{\lnot l : l \in C\}$, or equivalently, $C = \{\lnot l : l \in \beta\}$
The lemma follows since any clause in $\pi[\alpha]$ is a subset of some clause in $\pi$, by definition of restriction of clauses.
\end{proof}

Since $\QURes$ disallows tautological clauses, the lower bound is an easy consequence of the preceding lemma.

\begin{thm}
\label{thm:QURes-equality}
Let $n \in \mathbb{N}$, and let $\pi$ be a $\QURes$ refutation of $\eq(n)$.
Then $|\pi| \geq 2^n$.
\end{thm}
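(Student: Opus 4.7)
The plan is to derive the lower bound as a direct corollary of Lemma~\ref{lem:QURes-assignments} by a counting argument. The lemma supplies, for every total assignment $\beta \in \langle U \rangle$ (where $U = \{u_1, \dots, u_n\}$), a clause $C_\beta \in \pi$ containing $\{\lnot l : l \in \beta\}$ as a subset. Since there are exactly $2^n$ such total assignments, it suffices to show that the map $\beta \mapsto C_\beta$ is injective; then $\pi$ contains at least $2^n$ distinct clauses and the bound follows.

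To prove injectivity I would argue by contradiction. Suppose $\beta \neq \beta'$ are two total assignments to $U$ with $C_\beta = C_{\beta'} = C$. Then $\beta$ and $\beta'$ disagree on at least one variable $u_i$, so one of the literals $u_i, \lnot u_i$ lies in $\beta$ while the opposite lies in $\beta'$. By the defining property of $C_\beta$ and $C_{\beta'}$, the clause $C$ must contain both $\lnot u_i$ and $u_i$, hence $C$ is tautological. This contradicts the fact that the $\QURes$ calculus explicitly forbids tautological clauses (see the rules in Figure~\ref{fig:QURes}).

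Hence the $2^n$ clauses $\{C_\beta : \beta \in \langle U \rangle\}$ are pairwise distinct, all occur in $\pi$, and so $|\pi| \geq 2^n$.

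The argument is essentially a one-line consequence of the lemma once one notices the non-tautology side condition, so the conceptual work has already been done in Lemma~\ref{lem:QURes-assignments}; no further obstacle remains. The only subtlety to watch is that the lemma speaks of $\pi$ itself (not $\pi[\alpha]$), so the distinct-clause witnesses all live in the original refutation, making the size bound immediate.
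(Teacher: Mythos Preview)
Your proof is correct and essentially identical to the paper's own argument: both apply Lemma~\ref{lem:QURes-assignments} and observe that two distinct assignments $\beta,\beta'$ cannot share a witness clause, since that clause would then contain complementary $U$-literals and be tautological. The only cosmetic difference is that you phrase this as injectivity of $\beta\mapsto C_\beta$, whereas the paper phrases it as a pigeonhole contradiction from $|\pi|<2^n$.
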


\begin{proof}
Aiming for contradiction, suppose that $|\pi| < 2^n$.
Observe that there are exactly $2^n$ total assignments to $\{u_1, \dots , u_n\}$.
Then, by Lemma~\ref{lem:QURes-assignments}, there exist distinct total assignments $\beta_1,\beta_2 \in \langle\{u_1, \dots , u_n\} \rangle$ and a clause $C$ in $\pi$ such that $\{\lnot l : l \in \beta_1 \}$ and $\{\lnot l : l \in \beta_2 \}$ are both subsets of $C$.
Since $\beta_1$ and $\beta_2$ are distinct, there is some literal $l$ satisfying $l \in \beta_1$ and $\lnot l \in \beta_2$.
But then $C$ is a tautological clause, and does not appear in any $\QURes$ derivation.
\end{proof}

In a nutshell, Theorem~\ref{thm:QURes-equality} was proved by equating the minimum refutation size with the cardinality of the range of a winning $\forall$-strategy for $\eq(n)$.
Our argument here was aided by two facts: $\eq(n)$ has a unique winning $\forall$-strategy and contains a single universal block.
Of course, neither fact holds for QBFs in general.
Nonetheless, in the sequel we generalise the method to prove an absolute proof-size lower bound for any instance in $\Pred$.

\section{Our framework} %
\label{sec:framework}

In this section, we develop a framework for Size-Cost-Capacity centred on a precise description of the $\Pred$ formalism.
In Subsection~\ref{subsec:line-based}, we first describe what we mean by a `line-based' propositional proof system $\PS$, and then introduce the notion of a `base system' - a line-based system satisfying some natural conditions.
In Subsection~\ref{subsec:base-systems} we define $\Pred$ with respect to a line-based system $\PS$, and prove that it is sound and complete if $\PS$ is a base system.

\subsection{Propositional base systems}
\label{subsec:line-based}
We associate the basic concept of a \emph{line-based propositional proof system} $\PS$ with the following two features:
\begin{itemize}
\item A set of \emph{lines} $\mathcal{L}_\PS$, containing at least the two lines $\top$ and $\bot$ that represent trivial truth and trivial falsity, respectively.
\item A set of \emph{inference rules} $\mathcal{I}_\PS$ and an \emph{axiom function} that maps each propositional formula $\phi$ to a set of axioms $\mathcal{A}_\PS(\phi) \subseteq \mathcal{L}_\PS$.
The axiom function should be polynomial-time computable, and the validity of applications of inference rules should be polynomial-time checkable.
\end{itemize}

Following convention, a $\PS$-derivation from a propositional formula $\phi$ is a sequence $\pi = L_1, \dots ,L_m$ of lines from $\mathcal{L}_\PS$, in which each line $L_i$ is either an axiom from the set $\mathcal{A}_\PS(\phi)$, or may be derived from previous lines using an inference rule in $\mathcal{I}_\PS$.
The final line $L_m$ is called the \emph{conclusion} of $\pi$, and $\pi$ is a refutation iff $L_m = \bot$.

In order to facilitate the restriction of $\PS$-derivations, we require two further features: 
\begin{itemize}
\item A \emph{variables function} that maps each line $L \in \mathcal{L}_\PS$ to a finite set of Boolean variables $\vars(L)$, satisfying $\vars(\top) = \vars(\bot) = \emptyset$. Additionally, $\vars(L) \subseteq \vars(\phi)$ for each line $L$ in a $\PS$-derivation from $\phi$.\footnote{Note that this does not exclude extended Frege systems ($\mathsf{EF}$), whose lines can be represented as Boolean circuits as in \cite[p. 71]{Jerabek:phd-thesis}.}
\item A \emph{restriction operator} (denoted by square brackets) that takes each line $L \in \mathcal{L}_\PS$, under restriction by any partial assignment $\tau$ to $\vars(L)$, to a line $L[\tau] \in \mathcal{L}_\PS$.
If $\tau$ is a total assignment, then $L[\tau]$ is either $\top$ or $\bot$.
Restriction of $L$ by an arbitrary Boolean assignment $\sigma$ is defined as the restriction of $L$ by the projection of $\sigma$ to $\vars(L)$.
\end{itemize}

It should be clear that the purpose of the restriction operator is to encompass the natural semantics of $\PS$.
For that reason, we made the natural stipulation that restriction by a total assignment to the variables of a line yields either trivial truth or trivial falsity.
We may therefore associate with any line $L \in \mathcal{L}_\PS$ the Boolean function on $\vars(L)$ that computes the propositional models of $L$, with respect to the semantics of the restriction operator for $\PS$.
%
\begin{defi}[associated Boolean function]
Let $\PS$ be a line-based propositional proof system and let $L \in \mathcal{L}_\mathsf{P}$.
The \emph{associated Boolean function} for $L$ is $B_L : \langle \vars(L) \rangle \rightarrow \{0,1\}$, defined by 
\begin{equation*}
B_L(\tau) =
\begin{cases}
1, & \mbox{if }L[\tau] = \top\,,\\
0, & \mbox{if }L[\tau] = \bot\,.
\end{cases}
\end{equation*}
\end{defi}

It is useful to define the usual notion of semantic entailment on the lines of $\PS$.
Given two lines $L, L^\prime \in \mathcal{L}_\PS$, we say that $L^\prime$ \emph{semantically entails} $L$ if 
$$L^\prime[\tau] = \top \quad \Rightarrow \quad L[\tau] = \top\,, \quad \mbox{for each } \tau \in \langle \vars(L) \cup \vars(L^\prime) \rangle\,,$$
and $L$ and $L^\prime$ are said to be \emph{semantically equivalent} (written $L \equiv L^\prime$) if they semantically entail one another. 
We say that a finite set $\{L_1, \dots ,L_k\} \subseteq \mathcal{L}_\PS$ of lines semantically entails $L$ if
$$L_i[\tau] = \top \mbox{ for each } i \in [k] \quad \Rightarrow \quad L[\tau] = \top\,, \quad\mbox{for each } \tau \in \langle \bigcup_{i \in [n]}\vars(L_i) \cup \vars(L) \rangle \,.$$

Beyond the notion of \emph{line-based}, we identify three natural properties.
The first of these guarantees that the propositional models of the axioms are exactly those of the input formula, and the second guarantees soundness and completeness \emph{in the classical sense of propositional logic}.\footnote{The (proof-complexity-theoretic) concepts of soundness and completeness for arbitrary proof systems in the sense of Cook and Reckhow are weaker than their counterparts in propositional logic.}
The third property ensures that the restriction operator behaves sensibly; that is, the propositional models of the restricted line are computed by the restriction of the associated Boolean function.
We introduce the term \emph{base system} for those possessing all three. 
\begin{defi}[base system]
\label{def:base-system}
A \emph{base system} $\PS$ is a line-based propositional proof system satisfying the following three properties:
\begin{itemize}
\item
\emph{Axiomatic equivalence}. For each propositional formula $\phi$ and each $\tau \in \langle \vars(\phi) \rangle$, $\phi[\tau] = \top$ iff each $A \in \mathcal{A}_{\mathsf{P}}(\phi)$ satisfies $A[\tau] = \top$.
\item
\emph{Inferential equivalence}. For each finite set of lines $\mathcal{L} \subseteq \mathcal{L}_\mathsf{P}$ and each line $L \in \mathcal{L}_\mathsf{P}$,
$L$ can be derived from $\mathcal{L}$ iff $\mathcal{L}$ semantically entails $L$.
\item
\emph{Restrictive closure}. For each $L \in \mathcal{L}_\mathsf{P}$ and each partial assignment $\tau$ to $\vars(L)$, the Boolean functions $B_{L[\tau]}$ and $B_L|_\tau$ are identical.
\end{itemize}
\end{defi}

As a first example, we note that resolution (with weakening, and excluding tautological clauses) forms a base system. 
The axiomatic equivalence is trivial, as is inferential equivalence, which follows directly from implicational completeness\footnote{Formally, one should introduce a special clause $\top$ that can be derived from the empty set of clauses.} and logical correctness.
Taking the conventional definitions of the variable function and restriction operator, the restrictive closure of resolution is readily verified.
This is to be expected of course, since the restriction of clauses is based on a standard definition of semantics in propositional logic.

\subsection{The $\Pred$ formalism}
\label{subsec:base-systems}

Universal reduction is a widely used rule of inference in QBF proof systems, by which universal variables may be assigned under certain conditions.
More precisely, a line $L$ may be restricted by a partial assignment to a universal block $U$ provided it is \emph{rightmost} in $\vars(L)$, with respect to the prefix $\Q$ of the input QBF; that is, $U$ is right of every block in $\Q$ whose intersection with $\vars(L)$ is non-empty.
We state the rule formally in Figure~\ref{fig:universal-reduction}.
By restrictive closure, the restriction of a line by an assignment to $\beta$ results in the exclusion of $\vars(\beta)$ from the domain of the associated Boolean function.
Universal reduction should therefore be viewed as a sound method for deleting universal variables.

\begin{figure}[t]
\JLBFigUniversalReduction
\caption{The universal reduction rule, where $\Phi = \QBF{\Q}{\phi}$ is the input QBF.\label{fig:universal-reduction}}
\end{figure}

The purpose of universal reduction is to lift a propositional proof system $\PS$ to a QBF system $\Pred$, as in the following definition.

\begin{defi}[$\Pred$ \cite{BeyersdorffBC16}]
Let $\PS$ be a line-based propositional proof system.
Then $\Pred$ is the system consisting of the inference rules of $\PS$ in addition to universal reduction, in which references to the input formula $\phi$ in the rules of $\PS$ are interpreted as references to the propositional part of the input QBF $\QBF{\Q}{\phi}$.
\end{defi}

We lift some notation from $\PS$ to $\Pred$ as follows.
We denote the lines available in $\Pred$ (syntactically equivalent to the lines available in $\PS$) by $\mathcal{L}^\Q_\PS$, where $\Q$ is the prefix of the input QBF.
For $L \in \mathcal{L}^\Q_\PS$, we write $\vars_\exists(L)$ and  $\vars_\forall(L)$ for the subsets of $\vars(L)$ consisting of the existentially and universally quantified variables, respectively.


Soundness and completeness of $\Pred$ is not guaranteed for an arbitrary line-based system $\PS$, but it is guaranteed if $\PS$ is a base system.
The following lemma establishes completeness.

\begin{lem}
\label{lem:Pred-completeness}
Let $\PS$ be a base system.
Every false QBF has a $\Pred$ refutation.
\end{lem}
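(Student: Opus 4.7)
The proof goes by strong induction on $|\vars_\forall(\Phi)|$, the number of universal variables in $\Phi$.

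For the base case, $|\vars_\forall(\Phi)| = 0$, so $\Phi = \exists \vec{x} \cdot \phi$ is false iff $\phi$ is unsatisfiable. By axiomatic equivalence, every total assignment to $\vars(\phi)$ falsifies at least one element of $\mathcal{A}_\PS(\phi)$, so these axioms are jointly unsatisfiable. Since $\vars(\phi)$ is finite, a finite subset of $\mathcal{A}_\PS(\phi)$ is already jointly unsatisfiable and therefore semantically entails $\bot$; inferential equivalence then yields a $\PS$-derivation of $\bot$, which is a $\Pred$-refutation of $\Phi$ using no universal reduction.

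For the inductive step, let $U$ be the rightmost universal block of $\Phi$, so $\Phi$ has the form $\Q' \forall U \exists E \cdot \phi$ with $E$ possibly empty. Let $\psi \colon \langle \vars(\Q') \cup U \rangle \to \{0,1\}$ denote the Boolean function $\exists E \cdot \phi$, let $\chi$ be any propositional formula over $\vars(\Q') \cup U$ computing $\psi$, and let $\phi^\dagger$ be any propositional formula over $\vars(\Q')$ computing $\forall U \cdot \psi$. Define the auxiliary QBF $\Phi^\dagger = \Q' \cdot \phi^\dagger$. Then $\Phi^\dagger$ is semantically equivalent to $\Phi$ (and hence false) and has $|U|$ fewer universal variables, so by the inductive hypothesis it admits a $\Pred$-refutation $\pi^\dagger$.

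The remaining task is to lift $\pi^\dagger$ to a $\Pred$-refutation of $\Phi$, which I would carry out in three stages: (i)~invoke inferential equivalence to derive each axiom in $\mathcal{A}_\PS(\chi)$ from $\mathcal{A}_\PS(\phi)$, justified because $\phi$ semantically entails $\chi$ (as $\phi$ entails $\psi$); (ii)~apply universal reduction on $U$ by every $\beta \in \langle U \rangle$ to the lines from~(i), which is permitted since those lines have variables in $\vars(\Q') \cup U$ only; (iii)~apply inferential equivalence again to derive each axiom in $\mathcal{A}_\PS(\phi^\dagger)$ from the collection obtained in~(ii), whose conjunction is semantically equivalent to $\forall U \cdot \psi = \phi^\dagger$. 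The $\PS$-inferences and universal reductions of $\pi^\dagger$ on blocks in $\Q'$ then translate directly to $\Pred$ for $\Phi$, since every line of $\pi^\dagger$ has variables in $\vars(\Phi^\dagger) \subseteq \vars(\Phi)$ and the block-order side conditions on universal reductions are preserved.

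The main obstacle is stage~(i): producing, inside $\Pred$ for $\Phi$, a set of lines that together capture the existential closure $\exists E \cdot \phi$ yet contain no $E$-variables. Without this intermediate step, the axioms of $\phi$ cannot be reduced on $U$ directly, since their $E$-variables would violate the side condition of the universal reduction rule. The key device is to choose $\chi$ as a propositional formula computing $\psi$ and to invoke $\mathcal{A}_\PS(\chi)$, exploiting the full semantic power of inferential equivalence to import each such axiom from $\mathcal{A}_\PS(\phi)$.
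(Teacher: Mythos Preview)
Your proof is correct and takes a genuinely different route from the paper's. The paper fixes a winning $\forall$-strategy $S$ and, working one universal \emph{variable} $u_i$ at a time (in reverse prefix order), derives line-sets $\mathcal{L}_i$ encoding the predicate ``the $\forall$-player deviates from $S$ somewhere among $u_1,\dots,u_i$''; since $\phi$ entails $\mathcal{L}_n$ (a winning strategy forces falsification) and $\mathcal{L}_0$ is equivalent to $\bot$, reducing $u_i$ to both values and invoking inferential equivalence carries $\mathcal{L}_i$ to $\mathcal{L}_{i-1}$. Your argument instead performs block-level quantifier elimination: you replace $\forall U \exists E \cdot \phi$ by an equivalent propositional formula $\phi^\dagger$, invoke the inductive hypothesis on the shorter QBF $\Q' \cdot \phi^\dagger$, and lift the resulting refutation through the intermediary $\chi$ encoding $\exists E \cdot \phi$. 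The paper's strategy-based construction dovetails with the round-based extraction theme of the article and avoids introducing auxiliary propositional formulas, while your approach is arguably cleaner in that it never fixes a strategy, handles an entire block at once, and makes the recursion structure transparent. Both rely on exactly the same ingredients---axiomatic equivalence to pass between formulas and their axiom sets, inferential equivalence to realise semantic entailments as derivations, and universal reduction to strip the innermost universal variables---so neither is more elementary than the other.
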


\begin{proof}
Let $\Phi = \QBF{\Q}{\phi}$ be a false QBF with universal variables $\vars_\forall(\Phi) = \{u_1, \dots ,u_n\}$, and let $\St$ be a winning $\forall$-strategy for $\Phi$.
For each $i \in [n]$, let $X_i$ be the set of existential variables left of $u_i$ in $\Q$.

For each $i \in [n]$, let $B_i$ be the Boolean function with domain $\langle X_i \cup \{u_1, \dots, u_i\} \rangle$ that maps to $0$ iff, for each $j \in [i]$, the assignment of $u_j$ matches that of the strategy on the assignment to $X_j$.
Formally, $B_i(\sigma) = 0$ iff $\sigma_\forall \subseteq \St(\sigma^\prime_\exists)$ for every extension of $\sigma_\exists$ to $\vars_\exists(\Phi)$, where $\sigma_\exists$ and $\sigma_\forall$ are the existential and universal subassignments of $\sigma$.
Intuitively, $B_i$ maps to $0$ iff the universal player plays according to the strategy $\St$ up to the variable $u_i$.
Also, let $B_0$ be the trivial Boolean function that maps the empty assignment to $0$.

It is an immediate consequence of the axiomatic equivalence of $\PS$ that, for any Boolean function $B$, there exists a set of lines in $\mathcal{L}_\PS$ whose conjunction has $B$ as its associated Boolean function. 
($\A_\PS(\phi_B)$, where $\phi_B$ is a propositional formula representing $B$, is one such set.)
In particular, for each $i = 0,1, \dots, n$, there exists a set of lines $\mathcal{L}_i$ whose conjunction has $B_i$ as its associated Boolean function.
Observe that $\mathcal{L}_i$ essentially encodes the statement that the universal player does \emph{not} play according to the strategy $\St$ up to the variable $u_i$.

By backwards induction on $i = 0, 1, \dots, n$, we show that $\Pred$ can derive each set $\mathcal{L}_i$.
At the final step $i = 0$, we hence prove the lemma, since $B_0$ is the identically zero Boolean function on an empty set of variables, which implies that $\mathcal{L}_0$ is semantically equivalent to $\bot$.

For the base case, observe that $\phi$ semantically entails $\mathcal{L}_n$, since $\mathcal{L}_n$ encodes the statement that the universal player does not play according the whole strategy $\St$.
Therefore in a $\Pred$ derivation from $\Phi$ one may derive each line in $\mathcal{L}_n$, by the axiomatic equivalence and inferential equivalence of $\PS$.

For the inductive step, let $i \in [n]$.
Since $u_i$ is the rightmost variable appearing in $L_i$, by universal reduction one may derive both sets of lines
$$\mathcal{L}^0_i := \{L[u_i \mapsto 0] : L \in \mathcal{L}_i\} \mbox{ \,and\, } \mathcal{L}^1_i := \{L[u_i \mapsto 1] : L \in \mathcal{L}_i\}\,.$$
It is readily verified that the union of $\mathcal{L}^0_i$ and $\mathcal{L}^0_i$ semantically entails $\mathcal{L}_{i-1}$, which may then be derived by the inferential equivalence of $\PS$.
This completes the inductive step, and the proof.
\end{proof}

Soundness of the $\Pred$ can in fact be proved for a relaxed definition of refutation in which any logically correct propositional inference is allowed, and lines introduced by universal reduction need only be semantically equivalent to the consequent from the original definition (Figure~\ref{fig:universal-reduction}).
A sequence satisfying these relaxed conditions we call a \emph{semantic} $\Pred$ refutation.

\begin{defi}[semantic $\Pred$ refutation]
\label{def:semantic-ref}
Let $\PS$ be a base system.
A \emph{semantic $\Pred$ refutation} of a QBF $\Phi = \QBF{\Q}{\phi}$ is a sequence $\pi = L_1, \dots ,L_m$ of lines from $\mathcal{L}_\PS$, in which $L_m = \bot$ and each $L_i$ satisfies at least one of the following:
\begin{itemize}
\item
\emph{Axiom}. $L_i \in \mathcal{A}_\PS(\phi)$
\item
\emph{Semantic consequence}. $\{L_1, \dots ,L_{i-1}\}$ semantically entails $L_i$.
\item
\emph{Semantic universal reduction}. $L_i \equiv L_j[\beta]$, where $j < i$, and $\beta$ is a partial assignment to a universal block $U$ of $\Q$ that is rightmost in $\vars(L_j)$.
\end{itemize}
\end{defi}

Semantic refutations feature in the following section, where they are used to show that the Size-Cost-Capacity Theorem returns a genuine refutation-size lower bound.
We introduce them at this point because the soundness of semantic refutations, proved in the following lemma, is used in the proof of that theorem.

\begin{lem}
\label{lem:Pred-soundness}
Let $\PS$ be a base system.
If a QBF has a semantic $\Pred$ refutation, then it is false.
\end{lem}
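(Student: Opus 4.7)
The plan is to prove the contrapositive: assuming $\Phi = \QBF{\Q}{\phi}$ is true, we fix a winning $\exists$-strategy $S$ for $\Phi$ and show by induction on $i$ that every line $L_i$ in an alleged semantic $\Pred$ refutation $\pi = L_1, \ldots, L_m$ satisfies $L_i[\tau|_{\vars(L_i)}] = \top$ for every total assignment $\tau \in \langle \vars(\Phi) \rangle$ that is \emph{consistent with $S$}, meaning $\tau|_{\vars_\exists(\Phi)} = S(\tau|_{\vars_\forall(\Phi)})$. Since $L_m = \bot$ and $\bot$ evaluates to $\bot$ under the empty assignment, this will be a contradiction.

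For the base cases, if $L_i$ is an axiom then $L_i \in \A_\PS(\phi)$, and since $S$ is winning we have $\phi[\tau] = \top$ for every consistent $\tau$, whence axiomatic equivalence gives $L_i[\tau|_{\vars(L_i)}] = \top$. If $L_i$ is derived by semantic consequence from $\{L_1, \ldots, L_{i-1}\}$, the inductive hypothesis yields $L_k[\tau|_{\vars(L_k)}] = \top$ for every $k < i$, and the definition of semantic entailment immediately gives $L_i[\tau|_{\vars(L_i)}] = \top$.

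The main obstacle is the semantic universal reduction step, where $L_i \equiv L_j[\beta]$ for some $j < i$ and some partial assignment $\beta$ to a universal block $U$ that is rightmost in $\vars(L_j)$. Here, given a consistent $\tau$, I would construct a second consistent assignment $\tau^\prime$ that agrees with $\tau$ on all variables in $\vars(L_j) \setminus U$ but agrees with $\beta$ on $U \cap \vars(L_j)$. Concretely, define $\tau^\prime|_\forall$ to agree with $\tau|_\forall$ on every universal block strictly left of $U$, to equal $\beta$ on $U \cap \vars(L_j)$ (filling in arbitrarily elsewhere in $U$ and in blocks right of $U$), and set $\tau^\prime|_\exists := S(\tau^\prime|_\forall)$ so that $\tau^\prime$ is automatically consistent with $S$. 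The key point is that $U$ is rightmost in $\vars(L_j)$, so every existential variable in $\vars(L_j)$ is left of $U$; by the defining property of $\exists$-strategies, $S(\tau^\prime|_\forall)$ agrees with $S(\tau|_\forall) = \tau|_\exists$ on all existential variables lying left of $U$, and in particular on all existential variables in $\vars(L_j)$. Thus $\tau^\prime|_{\vars(L_j)}$ coincides with $\beta \cup \tau|_{\vars(L_j)\setminus U}$, and the inductive hypothesis applied to $L_j$ and $\tau^\prime$ combined with restrictive closure yields $L_j[\beta]\bigl[\tau|_{\vars(L_j)\setminus U}\bigr] = \top$. Semantic equivalence $L_i \equiv L_j[\beta]$ then delivers $L_i[\tau|_{\vars(L_i)}] = \top$, completing the induction and the proof.
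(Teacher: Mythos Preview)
Your proof is correct and follows essentially the same approach as the paper's: assume a winning $\exists$-strategy $S$, show by induction that every line is satisfied by every assignment of the form $\beta \cup S(\beta)$, and handle universal reduction by modifying the universal part so as to extend $\beta$ while leaving earlier blocks unchanged, then invoke the defining property of $\exists$-strategies. One small imprecision: you write that $\tau'|_\forall$ ``equals $\beta$ on $U \cap \vars(L_j)$'', but $\beta$ is only a \emph{partial} assignment to $U$ and need not cover all of $U \cap \vars(L_j)$; the clean fix (which is exactly what the paper does) is to let $\tau'|_\forall$ agree with $\tau|_\forall$ everywhere except on $\vars(\beta)$, where it takes the values of $\beta$, and then carry $\vars(L_j) \setminus \vars(\beta)$ rather than $\vars(L_j) \setminus U$ through the final two sentences.
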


\begin{proof}
Let $L_1, \dots, L_m$ be a semantic $\Pred$ refutation of a QBF $\Phi = \QBF{\Q}{\phi}$.
Aiming for contradiction suppose that $\Phi$ is true, and let $S$ be a winning $\exists$-strategy for $\Phi$.
For each line $L_i$, we say that $S$ \emph{models} $L_i$ if $L_i[\beta \cup S(\beta)] = \top$ for each $\beta \in \langle \vars_\forall(\Phi) \rangle$.
By induction on $i \in [m]$, we show that $S$ models $L_i$.
At the final step $i = n$, we therefore reach a contradiction since $L_m = \bot$ cannot be modelled by any $\exists$-strategy.

Since $L_1$ is in $\A_\PS(\phi)$, the base case $i = 1$ follows immediately from the axiomatic equivalence of $\PS$.
For the inductive step, let $i \geq 2$ and suppose that $S$ models each $L_1, \dots, L_{i-1}$.
The case where $L_i$ is an axiom is identical to the base case.
If $L_i$ was derived by semantic consequence, then $L_1 \wedge \cdots \wedge L_{i-1}$ semantically entails $L_i$, and it is easy to see that $S$ models $L_i$.

It remains to consider the case where $L_i$ was derived by semantic universal reduction.
Then $L_i \equiv L_j[\beta_U]$, where $j < i$, and $\beta_U$ is a partial assignment to a universal block $U$ of $\Q$ that is rightmost in $\vars(L_j)$.
Let $\beta \in \langle \vars_\forall(\Phi)\rangle$, and let $\beta^\prime$ be the assignment obtained from $\beta$ by overwriting the assignments to $\vars(\beta_U)$ with those of $\beta_U$; that is,
$$ \beta^\prime := (\beta \setminus \{l \in \beta : \var(l) \in \vars(\beta_U)\}) \cup \beta_U\,.$$
By the inductive hypothesis, $L_j[\beta^\prime \cup S(\beta^\prime)] = \top$.
Since $\beta_U \subseteq \beta^\prime$, we must have $L_j[\beta_U][\beta^\prime \cup S(\beta^\prime)] = \top$, by the restrictive closure of $\PS$. Therefore $L_i[\beta^\prime \cup S(\beta^\prime)] = \top$.
As $\beta$ and $\beta^\prime$ agree on all universal blocks left of $U$ with respect to $\Q$, $S(\beta)$ and $S(\beta^\prime)$ agree on all existential blocks left of $U$, by definition of $\exists$-strategy.
Then, as $U$ is rightmost in $L_j$, $\beta \cup S(\beta)$ and $\beta^\prime \cup S(\beta^\prime)$ agree on $\vars(L_i)$.
Therefore $L_i[\beta \cup S(\beta)] = \top$.
Thus $S$ models $L_i$, completing the inductive step and the proof.
\end{proof}

We arrive at the following result.

\begin{thm}
\label{thm:pred-proof-system}
If $\PS$ is a base system, then $\Pred$ is a sound and complete QBF proof system.
\end{thm}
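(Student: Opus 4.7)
The plan is to assemble the theorem directly from the two preceding lemmas, noting that both hit the two halves (soundness and completeness) once we verify that every syntactic $\Pred$ refutation is in particular a semantic one, and that $\Pred$ admits polynomial-time proof checking.

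First I would dispense with completeness: it is precisely Lemma~\ref{lem:Pred-completeness}, which shows that if $\Phi$ is a false QBF then a $\Pred$ refutation exists (using the base system assumption throughout the inductive construction from the winning $\forall$-strategy). No additional work is needed here.

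Next, I would handle soundness by reducing ordinary $\Pred$ refutations to semantic ones. Concretely, I would show that every $\Pred$ refutation $\pi = L_1,\dots,L_m$ as defined in the preceding section is a semantic $\Pred$ refutation in the sense of Definition~\ref{def:semantic-ref}. For each line, either $L_i \in \A_\PS(\phi)$ (the axiom case coincides), or $L_i$ was obtained by a rule of $\PS$ from earlier lines $L_{j_1},\dots,L_{j_k}$, in which case the inferential equivalence of the base system $\PS$ guarantees that $\{L_{j_1},\dots,L_{j_k}\}$ semantically entails $L_i$ and the semantic-consequence condition holds. The remaining case is syntactic universal reduction $L_i = L_j[\beta]$, where $\beta$ is a partial assignment to a universal block $U$ rightmost in $\vars(L_j)$; here the semantic universal reduction clause is satisfied with equality $L_i \equiv L_j[\beta]$. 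Hence $\pi$ is a semantic refutation, and Lemma~\ref{lem:Pred-soundness} yields that $\Phi$ is false, establishing soundness.

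Finally, to verify that $\Pred$ qualifies as a Cook--Reckhow proof system, I would check polynomial-time verifiability of refutations. Axiom membership $L_i \in \A_\PS(\phi)$ is polynomial-time decidable by the polynomial-time computability of $\A_\PS$ assumed in the definition of a line-based system; the validity of each $\PS$ inference step is polynomial-time checkable by the same assumption; and the validity of a universal reduction step $L_j \mapsto L_j[\beta]$ reduces to verifying, from the prefix $\Q$ of the input QBF, that $\vars(\beta)$ lies in a universal block of $\Q$ and that no variable of $\vars(L_j)$ is right of that block, which is trivially polynomial-time. Combining the three ingredients gives the theorem. There is no real obstacle beyond bookkeeping: the content has been absorbed into the two lemmas, so this theorem is essentially a two-line corollary together with the observation that syntactic refutations are a special case of semantic ones.
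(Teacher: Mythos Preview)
Your proposal is correct and matches the paper's approach exactly: the paper's proof is a one-liner invoking Lemmata~\ref{lem:Pred-completeness} and~\ref{lem:Pred-soundness} together with the observation that every $\Pred$ refutation is a semantic $\Pred$ refutation. Your additional remarks on polynomial-time checkability are a reasonable elaboration the paper leaves implicit.
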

\begin{proof}
Immediate from Lemmata~\ref{lem:Pred-completeness} and~\ref{lem:Pred-soundness}, and the fact that any $\Pred$ refutation is also a semantic $\Pred$ refutation.
\end{proof}

\section{The Size-Cost-Capacity Theorem} %
\label{sec:technique}
In this section, we define the measures `cost' (Subsection~\ref{subsec:cost}) and `capacity' (Subsection~\ref{subsec:capacity}), then state and prove our central result, the Size-Cost-Capacity Theorem (Subsection~\ref{subsec:SCC}).
In Subsection~\ref{subsec:SCC}, we also formalise strategy extraction in $\Pred$ and prove its correctness.
Throughout this section we assume that $\PS$ is a base system.

\subsection{Cost}
\label{subsec:cost}

Recall that in Section~\ref{sec:equality} we proved the hardness of the equality formulas $\eq(n)$ in $\QURes$ by appealing to the minimum cardinality of the range of a winning strategy.
For $\eq(n)$, the minimum cardinality is easy to compute because the winning strategy per instance is unique, and must contain all possible $2^n$ responses for the single universal block.   

In order to generalise that proof method to arbitrary instances in $\Pred$, we require a more sophisticated measure that accounts for the multiple responses collected during round-based strategy extraction in general.
Fit for this purpose, we define a measure called \emph{cost}.
The cost of a false QBF is the minimum, over all winning strategies, of the largest number of responses for a single universal block.

\begin{defi}[cost]
\label{def:cost}
Let $\Phi = \QBF{\forall U_1 \exists E_1 \cdots \forall U_n \exists E_n}{\phi}$ be a false QBF.
Further, for each winning $\forall$-strategy $S$ for $\Phi$ and each $i \in [n]$, let $S_i$ be the function that maps each $\alpha \in \langle \vars_\exists(\Phi)\rangle$ to the projection of $S(\alpha)$ to $U_i$, and let $\cost(S) = \max\{|\rng(S_i)| : i \in [n]\}$.
The \emph{cost} of $\Phi$ is
$$\cost(\Phi) = \min\{\cost(S) : \mbox{$S$ is a winning $\forall$-strategy for $\Phi$}\}.$$
\end{defi}
The cost of $\eq(n)$ is simple to compute.
There is only one winning strategy and only one universal block exhibiting $2^n$ responses; hence $\cost(\eq(n)) = 2^n$.
\begin{prop}
\label{prop:equality-cost}
The cost of the $n^{\text{\scriptsize}th}$ equality formula is $2^n$.
\end{prop}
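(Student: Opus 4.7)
\medskip

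\noindent\textbf{Proof proposal.} The plan is to leverage the fact, already established by the informal game analysis in Subsection~\ref{subsec:equality}, that $\eq(n)$ admits exactly one winning $\forall$-strategy. Once this is pinned down formally, computing the cost reduces to counting the range of that unique strategy's projection onto the single nonempty universal block.

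First I would cast $\eq(n)$ into the canonical prefix form required by Definition~\ref{def:cost}: namely $\forall U_1 \exists E_1 \forall U_2 \exists E_2$ with $U_1 = \emptyset$, $E_1 = \{x_1,\dots,x_n\}$, $U_2 = \{u_1,\dots,u_n\}$, and $E_2 = \{t_1,\dots,t_n\}$. Next I would record the uniqueness of the winning $\forall$-strategy $S$: the argument from Subsection~\ref{subsec:equality} shows that if $S(\alpha)$ differs from $\alpha|_{E_1}$ on any index $i$ (under the identification $u_i \leftrightarrow x_i$), then the $\exists$-player can win the restricted round-$2$ game by choosing $t_i = 1$ and $t_j = 0$ for $j \neq i$, satisfying both the long clause and all gate clauses. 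Hence every winning strategy must satisfy $S(\alpha) = \{u_i^{\alpha(x_i)} : i \in [n]\}$, i.e.\ $S$ depends only on the $E_1$-projection of $\alpha$ and copies it.

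Now I would compute $\cost(S)$ directly. For $i=1$, $S_1$ maps every $\alpha \in \langle \vars_\exists(\Phi)\rangle$ to the empty assignment, so $|\rng(S_1)| = 1$. For $i=2$, as $\alpha$ ranges over $\langle \vars_\exists(\Phi)\rangle$ its projection onto $E_1$ realises all $2^n$ assignments in $\langle E_1 \rangle$, and since $S_2(\alpha)$ is the bijective image of $\alpha|_{E_1}$ (copying $x_i \mapsto u_i$), we obtain $|\rng(S_2)| = 2^n$. Thus $\cost(S) = \max(1, 2^n) = 2^n$, and since $S$ is the unique winning $\forall$-strategy, Definition~\ref{def:cost} gives $\cost(\eq(n)) = 2^n$.

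The only subtlety is the uniqueness claim, which is really the heart of the matter; the rest is just bookkeeping with ranges. Since uniqueness has already been established in Subsection~\ref{subsec:equality}, I expect the formal proof to be essentially immediate, with the main care being in explicitly invoking that the $\forall$-strategy property is satisfied by the copying function (which is trivial because $u_i$'s value depends only on $x_i$, which lies in $E_1 \subseteq E_1 \cup \cdots \cup E_i$).
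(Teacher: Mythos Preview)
Your proposal is correct and follows exactly the paper's approach: the paper's proof is essentially the single sentence preceding the proposition (``There is only one winning strategy and only one universal block exhibiting $2^n$ responses; hence $\cost(\eq(n)) = 2^n$''), relying on the uniqueness argument from Subsection~\ref{subsec:equality}. You have simply spelled out the bookkeeping more carefully, including the canonical prefix form and the trivial $U_1 = \emptyset$ block, which the paper leaves implicit.
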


\subsection{Capacity}
\label{subsec:capacity}

The principal notion of strategy extraction is that the $\forall$-player's response (for any given round) can be read off from a suitable restriction of the refutation; as we described in Section~\ref{sec:equality}, in $\QURes$ the response can be determined from the penultimate clause.
In the general setting of $\Pred$, we seek a method of determining the response that does not depend upon the particulars of $\PS$.
We introduce the concept of a \emph{response map} for this purpose.

For base system $\PS$ and a line $L$ in $\mathcal{L}_\PS^\Q$, the \emph{rightmost block} of $L$ is the rightmost block $Z$ of $\Q$ for which $\vars(L)$ contains a variable in $Z$.

\begin{defi}[response map]
\label{def:response-map}
Let $\PS$ be a base system, let $L$ be a line in $\mathcal{L}_\PS^\Q$ whose rightmost block $U$ is universal, and let $X := \vars(L) \setminus U$.
A \emph{response map} for $L$ with respect to $\Q$ is a function $\mathcal{R} : \langle X \rangle \rightarrow \langle U \rangle$ satisfying the following for each $\alpha \in \langle X \rangle$:
$$\mbox{If $L[\alpha]$ is not a tautology, then $\R(\alpha)$ falsifies $L[\alpha]$.}$$
\end{defi}

We call a line $L \in \mathcal{L}^\Q_\PS$ \emph{reducible} if its rightmost block is universally quantified.

\begin{defi}[response map set]
\label{def:response-map-set}
Given a semantic $\Pred$ refutation $\pi$ of a QBF $\QBF{\Q}{\phi}$ whose reducible lines are $L_1, \dots, L_k$, a \emph{response map set} for $\pi$ is a set $\{\R_1, \dots ,\R_k\}$ in which each $\R_i$ is a response map for $L_i$ with respect to $\Q$.
\end{defi}


\begin{exa}
\label{ex:QU-response-map}
We can illustrate Definitions~\ref{def:response-map} and~\ref{def:response-map-set} with an example.
Let $\Phi$ be a QBF, and let $\pi$ be a $\QURes$ refutation of $\Phi$ whose reducible clauses are $C_1, \dots ,C_k$.
Consider some particular reducible clause $C_i$ in $\pi$ whose rightmost block is $U_i$, and let $X_i = \vars(C_i) \setminus U_i$.
Further, let $\beta_i$ be any total assignment to $U_i$ that falsifies the $U_i$-literals of $C_i$.
The function $\R_i$ mapping each $\alpha_i \in \langle X_i \rangle$ to $\beta_i$ is a response map for $C_i$, and $\{\R_1 ,\dots ,\R_k\}$ is a response map set for $\pi$.
To see this, it is sufficient to observe that, for $\alpha_i \in \langle X_i \rangle$, if $\alpha_i$ satisfies $C_i$, then $C_i[\alpha_i]$ is a tautology; otherwise, $\beta_i$ falsifies $C[\alpha_i]$.
\end{exa}

This example demonstrates that the lines in resolution -- clauses -- admit very simple response maps that are, in fact, constant functions.
Indeed, given a $\QURes$ refutation we can construct a response map set which needs only one response per reducible line.
In the sequel, given a refutation we shall be interested in response map sets needing the fewest responses; formally, response maps whose ranges have minimal cardinality.
This minimal cardinality is determined by the \emph{capacity} of a refutation.

\begin{defi}[capacity]
\label{def:capacity}
Let $\pi$ be a $\Pred$ refutation of a QCNF $\QBF{\Q}{\phi}$.
The \emph{capacity} of a response map set $\{\R_1, \dots \R_k\}$ for $\pi$ is $\max_{i \in [k]}\{|\rng(\R_i)|\}$.
The \emph{capacity} of $\pi$ is the minimal capacity of any response map set for $\pi$.
\end{defi}

Intuitively, this definition ensures that, given any $\Pred$ refutation $\pi$, there is some response map set for $\pi$ in which each line uses at most $\capa(\pi)$ responses.
Since $\QURes$ refutations have response map sets in which each line uses exactly one response, the capacity of any $\QURes$ refutation is $1$.

\begin{prop}
\label{prop:QURes-capacity}
Every $\QURes$ refutation has capacity equal to $1$.
\end{prop}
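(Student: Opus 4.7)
The plan is to formalize the construction suggested in Example~\ref{ex:QU-response-map} and verify that it yields a response map set of capacity~$1$; together with the trivial lower bound (any response map into $\langle U \rangle$ on a nonempty domain has range of size at least one), this will give the claim.

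Given a $\QURes$ refutation $\pi$, I would proceed clause by clause. For each reducible clause $C$ in $\pi$, write $U$ for its rightmost block and $X := \vars(C) \setminus U$. Because $\QURes$ forbids tautologous clauses, the $U$-literals of $C$ form a consistent set, so the assignment $\{\lnot l : l \in C,\ \var(l) \in U\}$ extends (arbitrarily) to a total assignment $\beta_C \in \langle U \rangle$ that falsifies every $U$-literal of $C$. I then define $\R_C \colon \langle X \rangle \to \langle U \rangle$ to be the constant map with value $\beta_C$, so that $|\rng(\R_C)| = 1$.

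To verify the response-map condition for $\R_C$, I would fix an arbitrary $\alpha \in \langle X \rangle$ and assume that $C[\alpha]$ is not a tautology. Since $\alpha$ assigns exactly $\vars(C) \setminus U$, the restriction $C[\alpha]$ either reduces to $\top$ (if $\alpha$ satisfies some literal of $C$, which contradicts the hypothesis) or is precisely the clause comprising the $U$-literals of $C$, in which case $\beta_C$ falsifies $C[\alpha]$ by construction. Assembling $\{\R_C : C \text{ reducible in } \pi\}$ therefore yields a response map set for $\pi$ of capacity $1$. No step presents a genuine obstacle: the proposition simply records the fact that the rigid clausal format of $\QURes$ admits a single responding assignment per reducible line, a calibration that will later let Size-Cost-Capacity reduce to a pure cost bound for $\QURes$, mirroring the direct argument already given in Section~\ref{sec:equality}.
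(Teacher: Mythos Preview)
Your proposal is correct and follows essentially the same approach as the paper: both construct, for each reducible clause, the constant response map that negates the clause's $U$-literals (extended arbitrarily to $\langle U \rangle$), exactly as in Example~\ref{ex:QU-response-map}, and then note the trivial lower bound of~$1$. Your write-up is slightly more explicit in verifying the response-map condition and in invoking non-tautology to ensure the $U$-literals are consistent, but the argument is the same.
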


\begin{proof}
Let $\pi$ be a $\QURes$ refutation.
The response map $\{\R_1, \dots ,\R_k\}$ given in Example~\ref{ex:QU-response-map} satisfies $|\rng(\R_i)| = 1$ for each $i$.
The capacity of any response map is at least $1$; hence $\capa(\pi) = 1$.
\end{proof}


\subsection{Strategy extraction and Size-Cost-Capacity}
\label{subsec:SCC}

We begin this subsection with a definition of strategy extraction for $\Pred$.
Given a semantic $\Pred$ refutation and response map for it, we may obtain a winning $\forall$-strategy by round-by-round restriction of the refutation, whereby the universal response for a given round is obtained by querying the response map on the first suitable reducible line.

\begin{defi}[round-based strategy extraction]
Let $\pi$ be a semantic $\Pred$ refutation of a QBF $\Phi = \QBF{\exists E_1 \forall U_1 \cdots \exists E_n \forall U_n}{\phi}$ with reducible lines $L_1, \dots ,L_k$ and response map set $\R = \{\R_1, \dots ,\R_k\}$.
Further, for each  $\alpha \in \langle \vars_\exists(\Phi) \rangle$ and $i \in [n]$:
\begin{itemize}
\item let $\sigma^\alpha_0$ be the empty assignment;
\item let $\alpha_i$ be the projection of $\alpha$ to $E_i$;
\item let $F^\alpha_i$ be the first line in $\pi$ such that $\vars(F^\alpha_i) \subseteq \bigcup^i_{j=1}(E_j \cup U_j)$  and $F^\alpha_i[\sigma^\alpha_{i-1} \cup \alpha_i]$ is non-tautological; 
\item if $F^\alpha_i$ is some $L_{j^\alpha_i}$, then let $\beta^\alpha_i := \R_{j^\alpha_i}(\sigma^\alpha_{i-1} \cup \alpha_i)$, otherwise let $\beta^\alpha_i := \R_{j^U_i}(\sigma^\alpha_{i-1} \cup \alpha_i)$, where $j^U_i$ is the minimal index for which $U_i$ is the rightmost block of $L_{j^U_i}$.
\item let $\sigma^\alpha_i := \sigma^\alpha_{i-1} \cup \alpha_i \cup \beta^\alpha_i$.
\end{itemize}

The \emph{extracted strategy} for $\pi$ with respect to $\R$ is
\begin{equation*}
\begin{tabular}{rcl}
$\St_\R(\pi):\langle \vars_\exists(\Phi) \rangle$&$\rightarrow$&$\langle \vars_\forall(\Phi) \rangle$\\
$\alpha$&$\mapsto$&$\bigcup^n_{i=1}\beta^\alpha_i\,.$\\
\end{tabular}
\end{equation*}
\end{defi}

It is clear that the extracted strategy is well-defined, since each $F^\alpha_i$ exists; in particular, for any $i$ and $\alpha$, $\vars(\bot) \subseteq \bigcup^i_{j=1}(E_j \cup U_j)$ and $\bot[\sigma^\alpha_{i-1} \cup \alpha_i]$ is non-tautological, where $\bot$ is the final line in $\pi$.
(Moreover, each natural number $j^U_i$ exists, since we can assume without loss of generality that each universal variable appears in some axiom of the refutation.)
To assist the proof that the extracted strategy $\St_\R(\pi)$ is indeed winning, we first prove two useful propositions which demonstrate that semantic $\Pred$ refutations are closed under certain restrictions.
These two lemmata are essentially the $\Pred$ analogues of Propositions~\ref{prop:QURes-existential-restriction} and ~\ref{prop:QURes-universal-restriction}.  

\begin{prop}
\label{prop:existential-restriction}
Let $\pi$ be a semantic $\Pred$ refutation of a QBF $\Phi$ and let $\alpha$ be a partial assignment to $\vars_\exists(\Phi)$.
Then $\pi[\alpha]$ is a semantic $\Pred$ refutation of $\Phi[\alpha]$.
\end{prop}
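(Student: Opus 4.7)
The plan is to show that the line-wise restriction $L_1[\alpha],\dots,L_m[\alpha]$ forms a semantic $\Pred$ refutation of $\Phi[\alpha]$, prepended if needed by the axioms $\mathcal{A}_\PS(\phi[\alpha])$ of the restricted formula, which are valid axiom lines for $\Phi[\alpha]$. The conclusion is automatically preserved, since $L_m = \bot$ has empty variable set and so $L_m[\alpha] = \bot$. I then proceed by induction on $i$, verifying each $L_i[\alpha]$ against one of the three clauses of Definition~\ref{def:semantic-ref} for $\Phi[\alpha]$ according to the justification used for $L_i$ in $\pi$.

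For the semantic consequence case, where $\{L_1,\dots,L_{i-1}\}$ semantically entails $L_i$, restrictive closure of $\PS$ propagates the entailment through $\alpha$: viewing any total assignment $\tau$ over the variables of the restricted lines as extending $\alpha$, the hypothesis $L_j[\alpha][\tau] = \top$ lifts to $L_j[\alpha \cup \tau] = \top$ for each $j < i$, whence $L_i[\alpha \cup \tau] = \top$ and therefore $L_i[\alpha][\tau] = \top$. Thus $\{L_1[\alpha],\dots,L_{i-1}[\alpha]\}$ semantically entails $L_i[\alpha]$. For the semantic universal reduction case, where $L_i \equiv L_j[\beta]$ with $\beta$ assigning variables of a universal block $U$ rightmost in $\vars(L_j)$, note that $\vars(\alpha)$ and $\vars(\beta)$ are disjoint (existential versus universal), so applying restrictive closure twice gives $L_i[\alpha] \equiv L_j[\beta][\alpha] \equiv L_j[\alpha][\beta]$. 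Moreover, $U$ persists in the restricted prefix $\Q[\alpha]$ (possibly absorbed into a larger universal block if surrounding existentials are fully assigned), and since $\vars(L_j[\alpha]) \subseteq \vars(L_j)$, it remains rightmost in $\vars(L_j[\alpha])$ with respect to $\Q[\alpha]$.

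The main obstacle is the axiom case, since restrictions of arbitrary axioms of $\phi$ need not coincide with axioms of $\phi[\alpha]$ in a general base system. To handle this uniformly, I prepend $\mathcal{A}_\PS(\phi[\alpha])$ to the restricted sequence as genuine axiom lines. For $L_i \in \mathcal{A}_\PS(\phi)$, axiomatic equivalence of $\PS$ tells us that $\phi$ semantically entails $L_i$; restrictive closure then gives that $\phi[\alpha]$ semantically entails $L_i[\alpha]$; and axiomatic equivalence applied now to $\phi[\alpha]$ shows that $\mathcal{A}_\PS(\phi[\alpha])$ semantically entails $L_i[\alpha]$. Hence $L_i[\alpha]$ is justified as a semantic consequence of the prepended preamble, completing the induction.
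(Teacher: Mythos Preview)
Your proof is correct and follows the same line-by-line case analysis as the paper: the semantic-consequence and universal-reduction cases are handled identically, via restrictive closure and the disjointness of $\vars(\alpha)$ and $\vars(\beta)$. The notable difference is the axiom case. The paper simply asserts that if $L_i \in \mathcal{A}_\PS(\phi)$ then ``$L_i[\alpha]$ may be derived as an axiom from $\Phi[\alpha]$,'' implicitly using that the axiom function commutes with restriction---true for the concrete base systems in the paper (resolution, $\CP$, $\PCR$, $\Frege$), but not one of the abstract base-system properties in Definition~\ref{def:base-system}. You correctly flag this and instead prepend $\mathcal{A}_\PS(\phi[\alpha])$ as genuine axiom lines, then justify each restricted axiom $L_i[\alpha]$ as a semantic consequence of that preamble via axiomatic equivalence and restrictive closure. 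Your treatment is the more rigorous one at the level of the abstract framework; the paper's shortcut keeps $\pi[\alpha]$ as the literal line-by-line restriction $L_1[\alpha],\dots,L_m[\alpha]$, which matches how $\pi[\alpha]$ is used elsewhere (e.g.\ in Proposition~\ref{prop:universal-restriction} and Lemma~\ref{lem:Pred-strategy-extraction}) but leaves a small gap in generality.
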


\begin{proof}
Let $\pi = L_1, \dots, L_m$.
Each $L_i$ is derived either as an axiom, as a semantic consequence of preceding lines, or from a preceding line by semantic universal reduction.
If $L_i$ is an axiom, then $L_i[\alpha]$ may be derived as an axiom from $\Phi[\alpha]$. 
If $L_i$ is a semantic consequence of the preceding lines $L_{i_1}, \dots ,L_{i_k}$, then $L_i[\alpha]$ is a semantic consequence of the preceding lines $L_{i_1}[\alpha], \dots ,L_{i_k}[\alpha]$, by the restrictive closure of $\PS$.
If $L_i$ was derived from a preceding line $L_j$ by semantic universal reduction, then $L_i \equiv L_j[\beta]$, where $\beta$ is a universal assignment.
Since $\vars(\alpha)$ and $\vars(\beta)$ are disjoint, $L_i[\alpha]$ is semantically equivalent to $L_j[\beta][\alpha]$, and hence also to $L_j[\alpha][\beta]$ (again, by the restrictive closure of $\PS$), and may therefore be derived from $L_j[\alpha]$ by semantic universal reduction.
It follows that $\pi[\alpha] = L_1[\alpha], \dots, L_m[\alpha]$ is a semantic refutation of $\Phi[\alpha]$.
\end{proof}

\begin{prop}
\label{prop:universal-restriction}
Let $\pi$ be a semantic $\Pred$ refutation of a QBF $\Phi$ whose first block $U$ is universal, let $L$ be the first non-tautological line appearing in $\pi$ for which $\vars(L) \subseteq U$, and let $\beta \in \langle U \rangle$.
If $L[\beta] = \bot$ then $\pi[\beta]$ is a semantic $\Pred$ refutation of $\Phi[\beta]$.
\end{prop}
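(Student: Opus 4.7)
The plan is to verify line by line that $\pi[\beta] = L_1[\beta], \dots, L_m[\beta]$ satisfies the conditions of a semantic $\Pred$ refutation of $\Phi[\beta]$, in the pattern of Proposition~\ref{prop:existential-restriction}. Write $L = L_k$, so that $L_k[\beta] = \bot$ by hypothesis and $L_m[\beta] = \bot$ since $L_m = \bot$. I would split the argument into three index regions: $i < k$, $i = k$, and $i > k$, and within each region dispatch the line according to how $L_i$ was justified in $\pi$.

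For $i < k$, the axiom case and the semantic-consequence case carry over essentially as in the existential case, the latter via the restrictive closure of $\PS$. The interesting case is a semantic universal reduction $L_i \equiv L_j[\gamma]$ on a universal block $U'$ rightmost in $\vars(L_j)$. If $U' \neq U$ then $U'$ lies strictly right of $U$ in $\Q$, so $\vars(\gamma)$ and $\vars(\beta)$ are disjoint; restrictive closure gives $L_i[\beta] \equiv L_j[\beta][\gamma]$, and $U'$ remains rightmost in $\vars(L_j[\beta])$ with respect to $\Q[\beta]$, so the reduction step lifts unchanged. If $U' = U$ then $\vars(L_j) \subseteq U$ and $L_j$ strictly precedes $L$; by the minimality in the choice of $L$, $L_j$ must be tautological, so $L_i \equiv L_j[\gamma] \equiv \top$ and $L_i[\beta] \equiv \top$ is a trivial semantic consequence. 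The line $L_k[\beta] = \bot$ is then justified by exactly the same analysis, with the subcase $U' = U$ at $L$ itself now ruled out (it would force $L$ tautological). For $i > k$, each $L_i[\beta]$ is a vacuous semantic consequence of the preceding $L_k[\beta] = \bot$, regardless of how $L_i$ was derived in $\pi$.

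The main obstacle is the interaction between the restriction by $\beta$ and semantic universal reductions performed on the first block $U$ itself, where $\beta$ could in principle clobber the reduction assignment $\gamma$. The choice of $L$ as the first non-tautological line whose variables lie in $U$ handles this cleanly: any such reduction strictly preceding $L$ must have a tautological antecedent (hence is vacuous after restriction), and any such reduction at or after $L$ is subsumed by the fact that $\bot$ has already been derived at position $k$ of $\pi[\beta]$.
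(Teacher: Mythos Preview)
Your proposal is correct and follows essentially the same approach as the paper's proof: both handle axioms and semantic consequences via restrictive closure, treat reductions on blocks $U' \neq U$ by disjointness of the assignments, and use the minimality of $L$ to argue that any reduction on $U$ preceding $L$ has a tautological antecedent (hence tautological consequent), while any line after $L$ is a semantic consequence of $L[\beta] = \bot$. The only difference is organizational---you split by position relative to $L$ first and then by derivation type, whereas the paper splits by derivation type first---and your handling of $i > k$ (invoking $\bot$ uniformly rather than redoing the case analysis) is marginally more economical.
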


\begin{proof}
Let $\pi = L_1, \dots, L_m$, and suppose that $L[\beta] = \bot$.
As in the proof of the preceding proposition, if $L_i$ is an axiom or a semantic consequence of the preceding lines $L_{i_1}, \dots ,L_{i_k}$, then $L_i[\beta]$ may be derived as an axiom from $\Phi[\beta]$ or as a semantic consequence of the preceding lines $L_{i_1}[\beta], \dots ,L_{i_k}[\beta]$.
If $L_i$ was derived from a preceding line $L_j$ by semantic universal reduction, then $L_i \equiv L_j[\beta^\prime]$, where $\beta^\prime$ is a partial assignment to some universal block $U^\prime$.
We consider two cases:

\emph{Case 1.} Suppose that $U^\prime = U$.
Then $L_j$ contains no variables right of $U$.
Moreover, $L_i$ cannot be $L$, since this would imply that $L$ was derived by semantic universal reduction from a tautology, and is hence itself a tautology by restrictive closure of $\PS$.
Therefore $L_i$ occurs either before or after $L$ in $\pi$.
If $L_i$ appears before $L$, then $L_i$ is a tautology, in which case $L_i[\beta]$ is also a tautology (by restrictive closure of $\PS$) and can be derived as a semantic consequence from any set of lines.
Otherwise $L_i$ appears after $L$, and then $L_i[\beta]$ can be derived as a semantic consequence of $L[\beta] = \bot$.

\emph{Case 2.} Suppose instead that $U^\prime \neq U$.
Then $\vars(\beta)$ and $\vars(\beta^\prime)$ are disjoint.
As in the proof of Proposition~\ref{prop:existential-restriction}, it follows that $L_i[\beta]$ can be derived by semantic universal reduction from $L_j[\beta]$.

It follows that $\pi[\beta] = L_1[\beta], \dots, L_m[\beta]$ is a semantic refutation of $\Phi[\beta]$.
\end{proof}

The preceding propositions allow us to prove fairly easily that the extracted strategy is indeed winning, regardless of the choice of response map.
This is the subject of the following lemma.

\begin{lem}
\label{lem:Pred-strategy-extraction}
Let $\R$ be a response map set for a semantic $\Pred$ refutation $\pi$ of a QBF $\Phi$.
Then the extracted strategy for $\pi$ with respect to $\R$ is a winning $\forall$-strategy for~$\Phi$.
\end{lem}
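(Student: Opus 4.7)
The plan is to verify two things: that $\St_\R(\pi)$ satisfies the consistency condition required of a $\forall$-strategy, and that it is winning. The consistency condition --- if $\alpha$ and $\alpha^\prime$ agree on $E_1 \cup \cdots \cup E_i$, then $\St_\R(\pi)(\alpha)$ and $\St_\R(\pi)(\alpha^\prime)$ agree on $U_1 \cup \cdots \cup U_i$ --- follows by a routine induction on $i$, since $\sigma^\alpha_{i-1} \cup \alpha_i$, and hence both the choice of $F^\alpha_i$ and the induced response $\beta^\alpha_i$, depend only on the projections of $\alpha$ to $E_1, \dots, E_i$.

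The winning condition is the core of the argument. I would establish by induction on $i \in \{0, 1, \dots, n\}$ that, for every $\alpha \in \langle \vars_\exists(\Phi) \rangle$, the restriction $\pi[\sigma^\alpha_i]$ is a semantic $\Pred$ refutation of $\Phi[\sigma^\alpha_i]$. The base case $i = 0$ is the hypothesis on $\pi$. For the inductive step, Proposition~\ref{prop:existential-restriction} handles the restriction by the existential move $\alpha_i$, yielding a semantic $\Pred$ refutation of $\Phi[\sigma^\alpha_{i-1} \cup \alpha_i]$ whose leading block is $U_i$. To apply Proposition~\ref{prop:universal-restriction} for the subsequent restriction by $\beta^\alpha_i$, it suffices to show that the first non-tautological line $L$ in $\pi[\sigma^\alpha_{i-1} \cup \alpha_i]$ with $\vars(L) \subseteq U_i$ satisfies $L[\beta^\alpha_i] = \bot$.

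The main obstacle is identifying this line $L$ with $F^\alpha_i[\sigma^\alpha_{i-1} \cup \alpha_i]$ and verifying the falsification. The minimality condition in the definition of $F^\alpha_i$, together with the restrictive closure of $\PS$, ensures that any line appearing earlier in $\pi[\sigma^\alpha_{i-1} \cup \alpha_i]$ whose remaining variables lie in $U_i$ must already be $\top$. Once $L = F^\alpha_i[\sigma^\alpha_{i-1} \cup \alpha_i]$ is established, I would split into two cases. If $F^\alpha_i = L_{j^\alpha_i}$ is reducible, then $\beta^\alpha_i = \R_{j^\alpha_i}(\sigma^\alpha_{i-1} \cup \alpha_i)$ falsifies $F^\alpha_i[\sigma^\alpha_{i-1} \cup \alpha_i]$ directly by the defining property of a response map. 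Otherwise the rightmost block of $F^\alpha_i$ is existential, so after restriction by $\sigma^\alpha_{i-1} \cup \alpha_i$ all of its variables are assigned, and the non-tautological requirement forces $F^\alpha_i[\sigma^\alpha_{i-1} \cup \alpha_i] = \bot$, which any $\beta^\alpha_i$ trivially falsifies.

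Carrying the induction through to $i = n$ gives that $\pi[\sigma^\alpha_n]$ is a semantic $\Pred$ refutation of $\Phi[\sigma^\alpha_n]$, where $\sigma^\alpha_n = \alpha \cup \St_\R(\pi)(\alpha)$. Applying the soundness of semantic $\Pred$ refutations (Lemma~\ref{lem:Pred-soundness}) yields $\phi[\alpha \cup \St_\R(\pi)(\alpha)] = \bot$, and since $\alpha$ was arbitrary, $\St_\R(\pi)$ is a winning $\forall$-strategy for $\Phi$.
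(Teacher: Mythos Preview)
Your proposal is correct and follows essentially the same route as the paper: an induction on $i$ showing that $\pi[\sigma^\alpha_i]$ remains a semantic $\Pred$ refutation of $\Phi[\sigma^\alpha_i]$, with Proposition~\ref{prop:existential-restriction} handling the existential restriction, the response-map property (or the empty-variables case) establishing $F^\alpha_i[\sigma^\alpha_{i-1}\cup\alpha_i][\beta^\alpha_i]=\bot$, Proposition~\ref{prop:universal-restriction} handling the universal restriction, and Lemma~\ref{lem:Pred-soundness} closing out at $i=n$. Your explicit verification of the $\forall$-strategy consistency condition is a welcome addition that the paper leaves implicit.
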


\begin{proof}
Let $\Phi = \QBF{\exists E_1 \forall U_1 \cdots \exists E_n \forall U_n}{\phi}$, and let $\St$ be the extracted strategy for $\pi$ with respect to $\R$.
We need to show that, for each $\alpha \in \langle \vars_\exists(\Phi)\rangle$, $\alpha \cup \St(\alpha)$ falsifies $\phi$; that is, $\phi[\sigma^\alpha_n] = \bot$.
In fact, we prove by induction on $i = 0, 1, \dots, n$ that $\pi[\sigma^\alpha_i]$ is a semantic $\Pred$ refutation of $\Phi[\sigma^\alpha_i]$.
Hence at the final step $i = n$, by the soundness of $\Pred$ (Lemma~\ref{lem:Pred-soundness}) we prove that $\Phi[\sigma^\alpha_n]$ is false, and this indeed implies $\phi[\sigma^\alpha_n] = \bot$ since $\sigma^\alpha_n$ is a total assignment to $\vars(\Phi)$.

Let $\alpha \in \langle \vars_\exists(\Phi)\rangle$.
As $\sigma^\alpha_0$ is the empty assignment, the base case $i = 0$ is established from the Lemma statement.
For the inductive step, let $i \in [n]$ and suppose that $\pi[\sigma^\alpha_{i-1}]$ is a semantic $\Pred$ refutation of $\Phi[\sigma^\alpha_{i-1}]$.
Since $\alpha_i$ is a partial assignment to $\vars_\exists(\Phi)$, $\pi[\sigma^\alpha_{i-1} \cup \alpha_i]$ is a semantic $\Pred$ refutation of $\Phi[\sigma^\alpha_{i-1} \cup \alpha_i]$, by Proposition~\ref{prop:existential-restriction}.
Now, let $F^\alpha_i$ be the first line in $\pi$ such that $\vars(F^\alpha_i) \subseteq \bigcup^i_{j=1}(E_j \cup U_j)$  and $F^\alpha_i[\sigma^\alpha_{i-1} \cup \alpha_i]$ is non-tautological.
Observe that $F^\alpha_i[\sigma^\alpha_{i-1} \cup \alpha_i]$ is the first non-tautological line in $\pi[\sigma^\alpha_{i-1} \cup \alpha_i]$ whose variables are contained in $U_i$.
If $\vars(F^\alpha_i[\sigma^\alpha_{i-1} \cup \alpha_i])$ contains some variable in $U_i$, then $F^\alpha_i$ is indeed some reduction line $L_{j^\alpha_i}$, and $\beta^\alpha_i \in \langle U_i\rangle$ falsifies $L_{j^\alpha_i}[\sigma^\alpha_{i-1} \cup \alpha_i]$ by definition of response map.
Otherwise $\vars(F^\alpha_i[\sigma^\alpha_{i-1} \cup \alpha_i])$ is empty, and $\beta^\alpha_i$ falsifies $L_{j^\alpha_i}[\sigma^\alpha_{i-1} \cup \alpha_i]$ trivially.
In either case, $\pi[\sigma^\alpha_{i-1} \cup \alpha_i \cup \beta^\alpha_i]$ is a semantic $\Pred$ refutation of $\Phi[\sigma^\alpha_{i-1} \cup \alpha_i \cup \beta^\alpha_i]$, by Proposition~\ref{prop:universal-restriction}.
As $\sigma^\alpha_{i-1} \cup \alpha_i \cup \beta^\alpha_i = \sigma^\alpha_i$, this completes the inductive step, and the proof.
\end{proof}

We defined round-based strategy extraction relative to an arbitrary response map.
Hence, we may select one with minimum capacity; that is, we may select one that minimises the maximum number of responses extracted from a single line.
By selecting such a minimal response map, we will therefore limit the capacity for lines in the refutation to contribute multiple responses to the extracted strategy.
In this way we obtain our main result, which relates size, cost and capacity with a refutation-size lower bound.

\begin{thm}
\label{thm:costcap}
Let $\pi$ be a semantic $\Pred$ refutation of a QBF $\Phi$.
Then
$$|\pi| \geq \frac{\cost(\Phi)}{\capa(\pi)}.$$
\end{thm}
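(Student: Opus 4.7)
The plan is to fix a response map set $\R$ of minimum capacity, apply the round-based strategy extraction machinery just developed, and then bound the cost of the resulting winning strategy in terms of $|\pi|$ and $\capa(\pi)$. Concretely, choose $\R = \{\R_1,\dots,\R_k\}$ realising the minimum in Definition~\ref{def:capacity}, so $\max_j |\rng(\R_j)| = \capa(\pi)$, and let $\St := \St_\R(\pi)$ be the extracted strategy. By Lemma~\ref{lem:Pred-strategy-extraction}, $\St$ is a winning $\forall$-strategy for $\Phi$, so in particular $\cost(\Phi) \leq \cost(\St) = \max_{i \in [n]} |\rng(\St_i)|$.

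The heart of the argument is to bound $|\rng(\St_i)|$ for each universal block $U_i$. Inspecting the definition of $\St_\R(\pi)$, the response $\beta^\alpha_i$ to block $U_i$ is always obtained by querying a single response map $\R_{j}$ on the partial assignment $\sigma^\alpha_{i-1} \cup \alpha_i$, where the index $j$ is either the position $j^\alpha_i$ of the first suitable line $F^\alpha_i$ in $\pi$ or the fixed fallback $j^U_i$. In either case, as $\alpha$ ranges over $\langle \vars_\exists(\Phi) \rangle$, the index $j$ ranges over a subset of $\{1,\dots,k\}$ with $k \leq |\pi|$, and for each such $j$ we have $\beta^\alpha_i \in \rng(\R_j)$, a set of cardinality at most $\capa(\pi)$. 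Hence
\[
|\rng(\St_i)| \;\leq\; |\pi| \cdot \capa(\pi).
\]

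Combining with the preceding inequality gives $\cost(\Phi) \leq |\pi| \cdot \capa(\pi)$, which rearranges to the desired bound $|\pi| \geq \cost(\Phi)/\capa(\pi)$.

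\textbf{Main obstacle.} The substantive content is already packaged into Lemma~\ref{lem:Pred-strategy-extraction}; the only real care needed here is the counting step, specifically making precise that each $\beta^\alpha_i$ is drawn from the range of exactly one of the $k$ response maps attached to $\pi$, so that the product bound $|\pi|\cdot\capa(\pi)$ is correct. One small subtlety worth flagging in the write-up is that the response map $\R_{j^\alpha_i}$ returns an assignment to the rightmost universal block of $L_{j^\alpha_i}$, and the round-based extraction is designed so that, at round $i$ on input $\alpha$, this rightmost block is in fact $U_i$ (everything to the left of $U_i$ has already been assigned by $\sigma^\alpha_{i-1}\cup\alpha_i$); this is precisely what makes $\beta^\alpha_i$ a legitimate $U_i$-response and not a response to some earlier block. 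After this observation the proof reduces to the two-line computation above.
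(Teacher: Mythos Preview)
Your proposal is correct and follows essentially the same argument as the paper's own proof: choose a response map set $\R$ of minimal capacity, invoke Lemma~\ref{lem:Pred-strategy-extraction} to obtain a winning $\forall$-strategy $\St_\R(\pi)$, and then observe that for the block $U_i$ witnessing $\cost(\Phi)$ each response $\beta^\alpha_i$ lies in the range of one of at most $|\pi|$ response maps, each of size at most $\capa(\pi)$, forcing $\cost(\Phi)\leq |\pi|\cdot\capa(\pi)$. The paper phrases the final step as a contradiction (``otherwise the number of lines in $\pi$ would exceed its size'') rather than your direct inequality $|\rng(\St_i)|\leq |\pi|\cdot\capa(\pi)$, but the content is identical.
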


\begin{proof}
Let $\Phi = \QBF{\exists E_1 \forall U_1 \cdots \exists E_n \forall U_n}{\phi}$.
Further, let $L_1, \dots ,L_k$ be the reducible lines of $\pi$ and let $\R = \{\R_1, \dots ,\R_k\}$ be a response map set for $\pi$ with the following property:
for each $i \in [k]$, the cardinality of the range of $\R_i$ is minimal; that is, for each response map $\R^\prime_i$ for $L_i$, $|\rng(\R_i)| \leq |\rng(\R^\prime_i)|$.
Further, let $\St$ be the extracted strategy for $\pi$ with respect to $\R$.
By Lemma~\ref{lem:Pred-strategy-extraction}, $\St$ is a winning $\forall$-strategy for~$\Phi$.
By definition of cost, there exists some universal block $U_i$ such that $|\{\proj(\St(\alpha),U_i) : \alpha \in \langle \vars_\exists(\Phi) \rangle\}|$ is at least $\cost(\Phi)$.
Equivalently, there exists some $i \in [n]$ such that $|\{\beta^\alpha_i : \alpha \in \langle \vars_\exists(\Phi) \rangle\}| \geq \cost(\Phi)$.
Each $\beta^\alpha_i$ is in the range of the response map $\R_{j^\alpha_i}$ for some reducible line $L_{j^\alpha_i}$ in $\pi$.
By definition of capacity, the cardinality of the range of $\R_{i, \alpha}$ is at most $\capa(\pi)$.
Hence we must have $|\pi| \geq \cost(\Phi) / \capa(\pi)$, for otherwise the number of lines in $\pi$ would exceed its size.
\end{proof}

We emphasize that the lower bound given by Theorem~\ref{thm:costcap} applies to \emph{semantic} $\Pred$ refutations.
Therefore, any result obtained as an application of our technique is by definition a genuine lower bound.

Since $\QURes$ derivations have unit capacity (Proposition~\ref{prop:QURes-capacity}), the Size-Cost-Capacity Theorem tells us that \emph{cost alone} is an absolute refutation-size lower bound in that system.
\begin{cor}
\label{cor:QURes-bound}
Let $\pi$ be a $\QURes$ refutation of a QBF $\Phi$. Then $|\pi| \geq \cost(\Phi)$.
\end{cor}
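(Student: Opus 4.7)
The plan is essentially a one-line combination of two results already established in the excerpt, so the proof proposal is short and the only real content is checking that the hypotheses of the Size-Cost-Capacity Theorem are in place for the system $\QURes$.

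First, I would observe that $\QURes$ is precisely $\Res\text{+}\forall\text{red}$, where $\Res$ (with weakening, non-tautological clauses) was noted in Subsection~\ref{subsec:line-based} to be a base system. Hence Theorem~\ref{thm:costcap} applies: for any $\QURes$ refutation $\pi$ of $\Phi$,
\begin{equation*}
|\pi| \;\geq\; \frac{\cost(\Phi)}{\capa(\pi)}\,.
\end{equation*}
Here I am using the fact, implicit in Theorem~\ref{thm:pred-proof-system}, that every syntactic $\Pred$ refutation is also a semantic $\Pred$ refutation, so Theorem~\ref{thm:costcap} is directly applicable to $\pi$.

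Next, I would invoke Proposition~\ref{prop:QURes-capacity}, which gives $\capa(\pi) = 1$ for every $\QURes$ refutation $\pi$. Substituting into the bound above immediately yields $|\pi| \geq \cost(\Phi)$, which is the claim.

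There is no real obstacle here: both ingredients (the Size-Cost-Capacity Theorem and the unit-capacity bound for $\QURes$) have been proved earlier in the paper, and the corollary is a direct specialisation. The only thing to be careful about is to point out explicitly that a $\QURes$ refutation is a (semantic) $\Pred$ refutation for the base system $\Res$, so that Theorem~\ref{thm:costcap} may be applied.
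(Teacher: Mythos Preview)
Your proposal is correct and matches the paper's approach exactly: the corollary is stated immediately after Theorem~\ref{thm:costcap} with the remark that it follows from that theorem together with Proposition~\ref{prop:QURes-capacity}. The extra care you take in noting that $\QURes$ is $\Pred$ for the base system $\Res$ and that syntactic refutations are semantic ones is appropriate but not something the paper spells out.
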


For a straightforward application of our technique, we could use Corollary~\ref{cor:QURes-bound} to prove the hardness of the equality formulas in $\QURes$ (Theorem~\ref{thm:QURes-equality}), as a direct consequence of their exponential cost (Proposition~\ref{prop:equality-cost}), as opposed to the direct method of Section~\ref{sec:equality}.

\section{Capacity bounds} %
\label{sec:cap-bounds}
In this section, we demonstrate that this lower bound technique is widely applicable to QBF proof systems by showing upper bounds on the capacity of proofs in the QBF versions of two commonly studied propositional proof systems: Cutting Planes (Subsection \ref{subsec:CPred}) and Polynomial Calculus with Resolution (Subsection \ref{subsec:PCRred}). These proof systems represent two distinct approaches to propositional proof systems, via integer linear programming and algebraic methods respectively. Both proof systems are known to p-simulate resolution, and similarly the QBF proof systems obtained with the addition of the $\forall$-reduction rule both p-simulate $\QURes$. Our capacity upper bound for $\PCRred$ proofs is particularly noteworthy as it is not constant, but depends on the size of the proof. We conclude this section with an example of the limits of this technique, namely a $\Fred$ proof with large capacity. 

\subsection{Cutting Planes}
\label{subsec:CPred}

The first proof system we analyse is Cutting Planes \cite{CookCT87} and its extension to QBFs, $\CPred$ \cite{BeyersdorffCMS18CP}. Inspired by integer linear programming, Cutting Planes translates a CNF into an equivalent system of linear inequalities, and from these derives the trivial falsehood $0 \geq 1$. Replacing the axiom rules by any unsatisfiable set of inequalities, Cutting Planes is in fact complete for any set of linear inequalities without integer solutions, and is implicationally complete with the inclusion of the trivial truth $0 \geq -1$ as an axiom. Cutting Planes is therefore a base system. As we focus only on Cutting Planes as a proof system for unsatisfiable CNFs, for which the addition of $0 \geq -1$ has no effect on proof size, our definition of Cutting Planes omits this axiom, in keeping with previous work.

Cutting Planes has two inference rules: the linear combination rule and the division rule. The linear combination rule infers from two inequalities, some linear combination of these inequalities with non-negative integer coefficients. The division rule allows division by any integer $c > 0$ if $c$ divides the coefficient of each variable; note that $c$ need not divide the constant term in the inequality.

\begin{defi}[Cutting Planes \cite{CookCT87}]
A line $L$ in a Cutting Planes {$\CP$} proof is a linear inequality $a_1 x_1 + \dots a_n x_n \geq A$ where $\vars (L) = \{ x_1 , \dots, x_n \}$ and $a_1 , \dots , a_n , A \in \mathbb{Z}$.\footnote{For convenience and clarity, we may refer to lines in $\mathcal{L}_\CP$ and $\mathcal{L}^\Q_\CP$ using equivalent linear inequalities not in this precise form. Similarly, the result of any $\forall$-reduction is expressed as a line of this form.}

A Cutting Planes derivation of a line $L \in \mathcal{L}_\mathsf{CP}$ from a CNF $\phi$ consists of a sequence of lines $L_1 , \dots , L_m$ where $L_m = L$ and each line $L_i \in \mathcal{L}_{\mathsf{CP}}$ is either an instance of an axiom rule, or is derived from the previous lines by an inference rule (Figure~\ref{fig:CPRules}). A $\mathsf{CP}$ refutation of $\phi$ is a derivation of $0 \geq 1$ from $\phi$.
\begin{figure}[h]
\CPDerivationRules
\caption{The rules of the Cutting Planes proof system ($\CP$)}
\label{fig:CPRules}
\end{figure}
\end{defi}

It is straightforward to see that Cutting Planes $p$-simulates resolution.\footnote{A proof system $\mathsf{P}$ \emph{p-simulates} a system $\mathsf{Q}$ if each $\mathsf{Q}$-proof can be transformed in polynomial time into a $\mathsf{P}$-proof of the same formula \cite{CookR79}.}
Indeed, it is strictly stronger than resolution, as there are short $\CP$ proofs of the pigeonhole principle formulas, which are known to require large proofs in resolution \cite{Haken85}. The same argument shows that $\CPred$ is exponentially stronger than $\QURes$.

Despite the apparent strength of $\CPred$ compared to $\QURes$, any proof in $\CPred$ still only has unit capacity. This comes about as the left hand side of any inequality $L$ is simply a linear combination of variables. We can therefore construct a constant response map defined by the coefficients of the variables of the rightmost universal block, with the aim of minimising the left hand side of the inequality.

\begin{prop}
For every $\CPred$ derivation $\pi$, $\capa (\pi) = 1$.
\label{prop:CP-cap}
\end{prop}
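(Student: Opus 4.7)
The plan is to produce, for each reducible line of $\pi$, a constant response map, thereby exhibiting a response map set with capacity exactly one. Capacity is bounded below by one (each response map has non-empty range), so this suffices.

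Fix a reducible line $L$ of $\pi$, whose rightmost block $U = \{u_1, \ldots, u_s\}$ is universal, and write $L$ as $\sum_i a_i x_i + \sum_j b_j u_j \geq A$, where $X := \vars(L) \setminus U$ carries the $x_i$. The key observation is that the $U$-part $\sum_j b_j u_j$ is a separate linear function of $U$ that does not interact with the assignment $\alpha \in \langle X \rangle$. Accordingly, define a fixed $\beta^* \in \langle U \rangle$ by $\beta^*(u_j) = 0$ if $b_j \geq 0$ and $\beta^*(u_j) = 1$ if $b_j < 0$; this is precisely the assignment that minimises $\sum_j b_j u_j$ over $\langle U \rangle$, achieving the value $m := \sum_{j : b_j < 0} b_j$. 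Set $\mathcal{R}(\alpha) := \beta^*$ for every $\alpha \in \langle X \rangle$.

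To verify $\mathcal{R}$ is a response map, let $\alpha \in \langle X \rangle$ and suppose $L[\alpha]$ is non-tautological. After restriction, $L[\alpha]$ is the inequality $\sum_j b_j u_j \geq A'$ where $A' := A - \sum_i a_i \alpha(x_i)$. Non-tautology means some $\beta \in \langle U \rangle$ falsifies it, so there exists $\beta$ with $\sum_j b_j \beta(u_j) < A'$; by minimality $m \leq \sum_j b_j \beta(u_j) < A'$, hence $\beta^*$ also falsifies $L[\alpha]$. Thus $\mathcal{R}$ is a valid response map, and trivially $|\rng(\mathcal{R})| = 1$.

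Performing this construction uniformly at every reducible line of $\pi$ yields a response map set of capacity $1$, and since the minimum capacity is at least $1$, we conclude $\capa(\pi) = 1$. There is no real obstacle here beyond verifying the elementary fact that a $\CP$ line containing only $U$-variables is tautological iff its minimum left-hand side attains the threshold; once that is noted, the argument is immediate and uses no properties of $\CP$ beyond the shape of its lines as linear inequalities.
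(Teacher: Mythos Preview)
Your proof is correct and takes essentially the same approach as the paper: both construct, for each reducible line, the constant response map that sends every $\alpha$ to the assignment $\beta^*$ minimising the universal part $\sum_j b_j u_j$, and both argue that if any $\beta$ falsifies $L[\alpha]$ then the minimiser does too. The arguments are effectively identical.
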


\begin{proof}
Consider any $L \in \mathcal{L}^\Q_\CP$ with rightmost block $U$ universally quantified, and with $X = \vars(L) \setminus U$. Then $L$ is of the form $\sum_{x \in X} a_x x + \sum_{u \in U} b_u u \geq c$, where $a_x, b_u, c \in \mathbb{Z}$ for all $x \in X, u \in U$.

Define the assignment $\beta_L \in \langle U \rangle$ by assigning each variable $u$ to 0 if $b_u \geq 0$, and to 1 otherwise. It is clear that this response minimises the value of $\sum_{u \in U} b_u u$, and thus minimises the left hand side of the inequality in the line $L[\alpha]$ for any $\alpha \in \langle X \rangle$. If any response falsifies $L[\alpha]$, a response which minimises this term must do so, so the map $\R_L : \langle X \rangle \to \langle U \rangle$ defined by $\R_L (\alpha) = \beta_L$ is a response map for $L$.

For any $\CPred$ proof $\pi$, the set $\{\R_L : \mbox{$L$ is a reducible line of $\pi$}\}$ is a response map set for $\pi$, witnessing that $\capa(\pi) = 1$.
\end{proof}

Having established that any $\CPred$ proof has capacity 1, we can apply the Size-Cost-Capacity Theorem (Theorem~\ref{thm:costcap}). This yields a lower bound on the size of $\CPred$ proofs of a QBF $\Phi$ determined solely by the cost of $\Phi$.
\begin{cor}
\label{cor:CP-bound}
Let $\pi$ be a $\CPred$ refutation of a QBF $\Phi$. Then $|\pi| \geq \cost(\Phi)$.
\end{cor}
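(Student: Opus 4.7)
The plan is to obtain Corollary~\ref{cor:CP-bound} as an immediate combination of the two results we have just established. Specifically, I would invoke the Size-Cost-Capacity Theorem (Theorem~\ref{thm:costcap}), which asserts that for any semantic $\Pred$ refutation $\pi$ of a QBF $\Phi$,
\[
|\pi| \;\geq\; \frac{\cost(\Phi)}{\capa(\pi)},
\]
and then plug in the capacity bound from Proposition~\ref{prop:CP-cap}, which gives $\capa(\pi) = 1$ for every $\CPred$ refutation.

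More concretely, first I would note that any $\CPred$ refutation is a semantic $\Pred$ refutation (since $\CP$ is a base system, as was argued in the paragraph preceding Proposition~\ref{prop:CP-cap}), so Theorem~\ref{thm:costcap} applies directly to $\pi$. Then I would substitute $\capa(\pi) = 1$ from Proposition~\ref{prop:CP-cap} into the bound, yielding $|\pi| \geq \cost(\Phi)/1 = \cost(\Phi)$, which is exactly the desired inequality.

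There is essentially no obstacle here, since the corollary is a transparent consequence of the two preceding results. The only minor point worth stating explicitly is that Theorem~\ref{thm:costcap} is formulated for semantic refutations, so the statement for $\CPred$ refutations is a fortiori true (and in fact strengthens the conclusion by showing it continues to hold even against the semantically relaxed proof system).
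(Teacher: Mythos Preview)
Your proposal is correct and matches the paper's approach exactly: the corollary is stated immediately after Proposition~\ref{prop:CP-cap} with the surrounding text making clear it is a direct application of the Size-Cost-Capacity Theorem together with the unit capacity of $\CPred$ refutations. There is nothing to add.
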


Hence, even in the stronger system of $\CPred$, we still have a straightforward proof that refutations of the equality formulas require size $2^n$ by looking at the cost of the formulas and using Size-Cost-Capacity.

\subsection{Polynomial Calculus}
\label{subsec:PCRred}
Polynomial Calculus \cite{CleggEI96} presents an algebraic approach to proving unsatisfiability. A CNF $\phi$ is translated into a set of polynomials over an arbitrary field $\mathbb{F}$, such that any assignment where all polynomials evaluate to zero corresponds to a satisfying assignment for $\phi$, and vice versa.
The field $\mathbb{F}$ is often assumed to be $\mathbb{Q}$, but since all axioms and any solutions use only the constants 0 and 1, Polynomial Calculus (as a proof system for CNFs) can be defined over any field $\mathbb{F}$.

Formally, Polynomial Calculus works with polynomial equations where the right hand side is 0. A Polynomial Calculus refutation of a set of polynomials is a derivation of the equation $1 = 0$, which is enough to show that the set of polynomials has no common solution.
The inference rules permit deriving any linear combination of two previous lines, or multiplying any line by a single variable. 

As a propositional proof system, the axioms of Polynomial Calculus are polynomials equivalent to each clause in a CNF. Given a clause $C$ in a CNF, the corresponding polynomial axiom is $\prod_{l \in C} V(l) = 0$, where $V(x) = x$ and $V(\neg x) = (1-x)$.\footnote{Assigning the algebraic variable $x$ to 0 is therefore equivalent to assigning the corresponding Boolean variable to 1, and vice versa. This `swapping' of truth values is a common convention for algebraic proof systems.} We also include the axioms $x^2 - x = 0$ to ensure only Boolean solutions are possible. With these Boolean axioms added, Polynomial Calculus is implicationally complete \cite{CleggEI96,FilmusLMNV13}, ensuring that Polynomial Calculus is a base system.

Proof size in Polynomial Calculus is measured by the number of \emph{monomials} in the lines of the proof. By this measure of proof size, Polynomial Calculus cannot even simulate resolution, as the clause $\neg x_1 \vee \dots \vee \neg x_n$ would translate to $(1-x_1) \dots (1 - x_n) = 0$, which contains $2^n$ monomials. As a result of this issue, a modification of Polynomial Calculus was introduced in \cite{AlekhnovichBRW02}, using variables $\bar{x}$ representing $\neg x$.
The inference rules remain the same, but we add the axioms $x + \bar{x} - 1 = 0$ for each $x \in \vars(\phi)$ to ensure that $x$ and $\bar{x}$ take opposite values.

\begin{defi}[Polynomial Calculus with Resolution \cite{AlekhnovichBRW02}]
Fix an arbitrary field $\mathbb{F}$. Given a CNF $\phi$, lines in a $\PCR$ derivation from $\phi$ are polynomials in the variables $\{ x, \bar{x} : x \in \vars(\phi)\}$. A $\PCR$ derivation of a line $L \in \mathcal{L}_{\mathsf{PCR}}$ from a CNF $\phi$ is a sequence of lines $L_1 , \dots , L_m$ in $\mathcal{L}_{\mathsf{PCR}}$ such that $L_m = L$, and each $L_i$ is an instance of an axiom rule, or derived from previous lines by one of the inference rules (Figure~\ref{fig:PCRRules}). A $\PCR$ refutation of $\phi$ is a derivation of the line $1 = 0$.

\begin{figure}[h]
\PCRDerivationRules
\caption{The rules of Polynomial Calculus with Resolution ($\PCR$)}
\label{fig:PCRRules}
\end{figure}
\end{defi}

As Polynomial Calculus with Resolution is clearly at least as strong as Polynomial Calculus (and in fact is strictly stronger), we focus here only on the version with Resolution. The capacity upper bounds shown, and consequent proof size lower bounds, all hold for Polynomial Calculus as well.

In contrast to $\QURes$ and $\CPred$, not all proofs in $\PCRred$ have unit capacity. For a simple example, consider the line $x(1-u) + (1-x)u = 0$. This polynomial clearly evaluates to 0 if and only if $x = u$, so the only winning response for $u$ is to play $u = 1-x$. The unique response map $\R$ for this line therefore has $\rng(\R) = 2$, so if a $\PCRred$ proof $\pi$ contains such a line then $\capa (\pi) \geq 2$.

Given a QBF $\Phi$ with $n$ variables in the universal block $U$, it is possible to construct a line in $\mathcal{L}_{\PCR}^{\Q}$ which requires a response map with an exponential size range.  However, as previously mentioned, the size of a proof in $\PCRred$ is not measured by the number of lines in the proof, but by the number of monomials in the proof. To provide a suitable capacity upper bound, it is thus sufficient to upper bound the size of a response set for a line by the number of monomials in that line. That is, we show that any proof with large capacity must contain a large line, and so the proof itself is large.

\begin{prop}
 If $\pi$ is a $\PCRred$ proof where each line contains at most $M$ monomials, then $\capa (\pi) \leq M$.
 \label{prop:PC-capacity}
\end{prop}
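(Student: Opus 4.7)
The plan is to establish, for each reducible line $L$ of $\pi$ containing $m \leq M$ monomials, a response map $\R_L$ with $|\rng(\R_L)| \leq m$; assembling these maps over the reducible lines of $\pi$ then yields a response map set of capacity at most $M$, and the claim follows from Definition~\ref{def:capacity}.

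Fix such a line $L$ and write it as $p(\vec x, \vec u) = 0$, where $U$ is the rightmost block of $L$, $\vec u$ ranges over the algebraic variables $\{u, \bar u : u \in U\}$, and $\vec x$ over those associated with $X := \vars(L) \setminus U$. The first step is to split $p$ according to its $\vec u$-content: group the monomials of $p$ by their $\vec u$-part to obtain $p = \sum_{j=1}^{k} q_j(\vec x)\, t_j(\vec u)$, where the $t_j$ are distinct $\vec u$-monomials and $k \leq m$ (each monomial of $p$ contributes to exactly one $t_j$). The key observation is that for every $\alpha \in \langle X \rangle$, the function $p(\alpha, \cdot): \langle U \rangle \to \mathbb{F}$ is a linear combination of the $t_j$, now viewed as functions on $\langle U \rangle$, with scalar coefficients obtained by evaluating the $q_j$ at $\alpha$. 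Hence every restriction $p(\alpha, \cdot)$ lies in the subspace $W := \mathrm{span}\{t_1, \ldots, t_k\}$ of $\mathbb{F}^{\langle U \rangle}$, and $\dim(W) \leq k \leq m$.

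From here the argument is pure linear algebra. I would construct a set of ``test assignments'' $B \subseteq \langle U \rangle$ with $|B| \leq \dim(W)$ on which $W$ is separated: start with $B = \emptyset$ and iterate -- while some nonzero $f \in W$ vanishes on every $\beta \in B$, pick such an $f$, pick any $\beta \in \langle U \rangle$ with $f(\beta) \neq 0$, and add $\beta$ to $B$. Each iteration strictly decreases the kernel of the evaluation map $W \to \mathbb{F}^B$ (the offending $f$ was in the kernel beforehand and is not afterwards), so the loop terminates in at most $\dim(W) \leq m$ rounds, producing a $B$ on which evaluation is injective. Now fix any ordering of $B$ and define $\R_L : \langle X \rangle \to \langle U \rangle$ to send $\alpha$ to the first $\beta \in B$ with $p(\alpha, \beta) \neq 0$ (or to a default element of $B$ if no such $\beta$ exists). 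If $L[\alpha]$ is non-tautological then $p(\alpha, \cdot)$ is a nonzero element of $W$, so by injectivity of evaluation on $B$ some $\beta \in B$ satisfies $p(\alpha, \beta) \neq 0$; this $\beta$ falsifies $L[\alpha]$, as required by Definition~\ref{def:response-map}. Since $\rng(\R_L) \subseteq B$, we obtain $|\rng(\R_L)| \leq m \leq M$.

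I do not anticipate a substantive obstacle: the entire proof rests on the single observation that the space of restrictions $\{p(\alpha, \cdot) : \alpha \in \langle X \rangle\}$ has dimension bounded by the monomial count of $p$. Once that is in hand, constructing $B$ and verifying that $\R_L$ is a response map are routine. The one subtlety to track is that the $t_j$ are treated as functions on $\langle U \rangle$ rather than as formal monomials -- they need not be linearly independent in that function space -- but this only helps, since it can only lower $\dim(W)$ below the crude upper bound $k$.
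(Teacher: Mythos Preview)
Your proof is correct and rests on the same key observation as the paper's: grouping the monomials of $L$ by their $U$-part yields $p = \sum_{j} q_j(\vec x)\, t_j(\vec u)$ with at most $M$ terms, so the relevant linear-algebraic object has dimension at most $M$. The construction of the response map is organised differently, though. The paper iterates over the assignments $\alpha \in \langle X \rangle$, greedily adding a new response $\beta$ only when no previously chosen one falsifies $L[\alpha]$, and maintains the invariant that the evaluation vectors $(t_1(\beta),\ldots,t_k(\beta))$ for the chosen $\beta$'s are linearly independent in $\mathbb{F}^k$. You instead build the separating set $B$ up front by repeatedly shrinking the kernel of the evaluation map $W \to \mathbb{F}^B$, and only afterwards assign each $\alpha$ to a witness in $B$. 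These are dual constructions on the same bilinear pairing $(f,\beta) \mapsto f(\beta)$; yours is a bit more modular, since $B$ does not depend on any enumeration of $\langle X \rangle$. One trivial edge case to patch: if $L$ is itself a tautology then $W = \{0\}$ and your $B$ is empty, so ``a default element of $B$'' is undefined---the paper handles this case separately with an arbitrary constant response map, and you should too (or just default to any fixed element of $\langle U \rangle$).
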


The key observation used in proving this bound is that, rather than considering the specific assignment to the universal variables, it is sufficient to consider the value the response gives to the monomials of a line $L \in \mathcal{L}_{\PCR}^{\Q}$. If two assignments $\beta, \beta' \in \langle U \rangle$ set the same monomials to zero and retain the same monomials, only one of these assignments need be in the range of a response map as $L[\beta] = L[\beta']$ and so $\beta$ and $\beta'$ falsify $L[\alpha]$ for precisely the same $\alpha \in \langle X \rangle$.

We therefore consider the set of different assignments to these monomials needed, as a subset of ${\{0,1 \}}^M$. We construct a response map for $L$ such that the responses, when considered in this way, form a linearly independent subset of ${\{ 0,1 \}}^M$, providing the upper bound.

\begin{proof}[Proof of Proposition~\ref{prop:PC-capacity}]
To upper bound $\capa (\pi)$, it is enough to find, for any reducible line $L \in \pi$, a response map $\R_L$ such that $|\rng (\R_L)| \leq M$. To this end, fix such a line $L \in \pi$ with at most $M$ monomials. Let $U$ to be the rightmost universal block present in  $L$, and $X := \vars(L) \setminus U$, and write $L$ as
 $$\sum_{j=1}^N f_j v_j = 0$$
 where each $f_j$ is a polynomial (\emph{not} necessarily a single monomial) in the variables of $X$ and the $v_j$ are distinct monomials in the variables $U$. Denote by $f_j [\alpha]$ and $v_j[\beta]$ the values (in $\mathbb{F}$) obtained by evaluating $f_j$, respectively $v_j$, according to the assignments $\alpha \in \langle X \rangle$, respectively $\beta \in \langle U \rangle$.
 
 Observe that since $L$ contains at most $M$ monomials, $N \leq M$. We now construct a response map $\mathcal{R}_L: \langle X \rangle \to \langle U \rangle$ for which $|\rng(\R_L)| \leq N$.  The response map set for $\pi$ consisting of $\R_L$ for every reducible line $L \in \pi$ demonstrates that $\capa(\pi) \leq M$.
 
 First, enumerate the elements of $\langle X \rangle$ as $\langle X \rangle = \{\alpha_1 , \dots, \alpha_m\}$. We construct a sequence of functions $\mathcal{R}_L^i : \{ \alpha_1 , \dots, \alpha_i\} \to \langle U \rangle$ such that $\mathcal{R}_L^0$ is the empty function, $\mathcal{R}_L^{i}$ extends $\mathcal{R}_L^{i-1}$, and $\mathcal{R}_L^m = \R_L$ is a complete response map for $L$. Moreover, for each $0 \leq i \leq m$, $|\rng (\mathcal{R}_L^i)| \leq N$, in particular $|\rng (\mathcal{R}_L^m)| \leq N$.
 
 The construction of the function $\mathcal{R}_L^m$ involves, in effect, going through each of the assignments $\alpha \in \langle X \rangle$ in turn and choosing the response of $\mathcal{R}_L^m$ on $\alpha$. At each stage, if there is a suitable response that has been chosen before, we choose it again as we are aiming to minimise $|\rng(\R_L)|$. If there is no suitable previously chosen response, we choose a suitable response and then show that by evaluating the monomials $v_j$ on the responses, there is an injection from the set of chosen responses into a linearly independent subset of $\mathbb{F}^N$.
 
If $L$ is a tautology, then for any $\alpha \in \langle X \rangle$, $L[\alpha]$ cannot be falsified, and so any constant function $\R_L : \langle X \rangle \to \langle U \rangle$ is a response map with range 1. We therefore assume without loss of generality that $L$ is not a tautology, and in particular that $L[\alpha_1]$ is falsifiable.
 
 
 
 Let $R_i := \rng(\mathcal{R}_L^i)$. For any $\beta \in \langle U \rangle$, denote by $\vec{v}[\beta]$ the vector $(v_1 [\beta] , \dots, v_N [\beta]) \in {\{ 0,1 \}}^N$. We now define a response $\beta_i \in \langle U \rangle$ which extends $\R_L^{i-1}$ to $\R_L^i$ such that $\R_L^i$ can be extended to a response map for $L$, i.e.\ whenever $L[\alpha_i]$ is falsifiable, $L[\alpha_i \cup \beta_i] = \bot$.
 
 Furthermore, we show by induction on $i$ that the set $V_i = \{ \vec{v}[\beta] : \beta \in R_i\}$ is linearly independent (as a subset of $\mathbb{F}^N$).  Since $V_i \subseteq {\{0,1\}}^N$, and $|V_i| = |R_i|$, this provides the upper bound on $|R_i|$.
 
 The empty function $\mathcal{R}_L^0$ ensures that $R_0 = \varnothing$, which is linearly independent. Given a function $\mathcal{R}_L^{i-1}$, we define $\mathcal{R}_L^i$ as follows:
 \begin{itemize}
 
  \item If $L[\alpha_i]$ is a tautology, i.e. $\sum_{j=1}^N f_j [\alpha_i] v_j [\beta]= 0$ for all $\beta \in \langle U \rangle$, a response map can map $\alpha_i$ to any response in $\langle U \rangle$. Since $i \neq 1$, $R_{i-1}$ is non-empty, so define $\R_L^i (\alpha_i) = \R_L^{i-1} (\alpha_1)$, whence $R_i = R_{i-1}$.  As $V_i = V_{i-1}$ and $V_{i-1}$ is linearly independent, $V_i$ is also linearly independent.

   \item If $L[\alpha_i \cup \beta] = \bot$ for some $\beta \in R_{i-1}$, then define $\R_L^i (\alpha_i) = \beta$. Now, as previously, $R_i = R_{i-1}$ and $V_i$ is linearly independent as $V_i = V_{i-1}$.
   
   \item Else $L[\alpha_i]$ is not a tautology, but $L[\alpha_i \cup \beta] = \top$ for all $\beta \in R_{i-1}$. Suppose $R_{i-1} = \{ \beta_1 , \dots , \beta_k \}$, then there exists some $\beta_{k+1}$ such that $L[\alpha_i \cup \beta_{k+1}] = \bot$, since $L[\alpha_i]$ is not a tautology. Define $\R_L^i (\alpha_i) = \beta_{k+1}$, with $R_i = \{ \beta_1 , \dots, \beta_k , \beta_{k+1}\}$. We now need only show that $V_i = \{ \vec{v}[\beta_j]: 1 \leq j \leq k+1 \}$ is linearly independent.
   
For any $1 \leq l \leq k$, $L[\alpha_i][\beta_l] \neq L[\alpha_i][ \beta_{k+1}]$, so $\vec{v}[\beta_{k+1}] \neq \vec{v}[\beta_l]$. Suppose there is some linear dependence relation on $V_i$. Since $V_{i-1}$ is linearly independent, $\vec{v}[\beta_{k+1}]$ must have a non-zero coefficient in any such linear combination, hence there are constants $c_1 , \dots ,c_k \in \mathbb{F}$ such that $\sum_{t=1}^k c_t \vec{v}[\beta_t] = \vec{v} [\beta_{k+1}]$. If such constants exist, we can use the same constants to construct a linear combination of the $\sum_{j=1}^N f_j [\alpha_i] v_j [\beta_t]$, by assumption all equal to zero, summing to $\sum_{j=1}^N f_j [\alpha_i] v_j [\beta_{k+1}]$, which by choice of $\beta_{k+1}$ is non-zero.
   $$ 0 = \sum_{t=1}^k { c_t  \sum_{j=1}^N f_j [\alpha_i] v_j [\beta_t] } = \sum_{j=1}^N f_j [\alpha_i] \sum_{t=1}^k c_t v_j[\beta_t] = \sum_{j=1}^N f_j [\alpha_i] v_j [\beta_{k+1}] \neq 0$$
   From this contradiction, we conclude that the constants $c_t$ do not exist, and thus that $V_i$ is a linearly independent set.
  
 \end{itemize}
 
 The set $V_m$ forms a linearly independent set which is a subset of $\mathbb{F}^N$, so has cardinality at most $N$. By the construction of each $\mathcal{R}_L^i$, $R_i$ cannot contain $\beta \neq \beta'$ with $\vec{v}(\beta) = \vec{v}(\beta')$, so $|R_m| = |V_m|$. Consequently, the map $\mathcal{R}_L = \mathcal{R}_L^m$ satisfies $|\rng (\mathcal{R}_L)| \leq N \leq M$. We therefore obtain the response map set $\{ \R_L : \mbox{$L \in \pi$ is reducible}\}$ for $\pi$ and conclude that $\capa (\pi) \leq M$.
\end{proof}

The effect of this bound is to show that $\PCRred$ proofs with large capacity also have large size, as they must contain lines with a large number of monomials. This provides a lower bound for $\PCRred$ proofs of a QBF $\Phi$ based solely on $\cost (\Phi)$, since small proofs also have small capacity.

\begin{cor}
\label{cor:PC-bound}
Let $\pi$ be a $\PCRred$ refutation of a QBF $\Phi$. Then $|\pi| \geq \sqrt{\cost(\Phi)}$.
\end{cor}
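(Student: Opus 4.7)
The plan is to combine Proposition~\ref{prop:PC-capacity} with the Size-Cost-Capacity Theorem (Theorem~\ref{thm:costcap}) in a self-referential way, exploiting the fact that $\PCRred$ proof size is measured by total monomial count. Let $\pi$ be a $\PCRred$ refutation of $\Phi$, and let $M$ denote the maximum number of monomials appearing in any single line of $\pi$. Since the size $|\pi|$ counts all monomials across every line, the trivial observation $M \leq |\pi|$ holds.

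Next, I would apply Proposition~\ref{prop:PC-capacity} to conclude that $\capa(\pi) \leq M$, which chained with the previous inequality gives $\capa(\pi) \leq |\pi|$. This is the key self-referential step: the capacity of a $\PCRred$ refutation is bounded above by its own size. Independently, Theorem~\ref{thm:costcap} yields $|\pi| \geq \cost(\Phi)/\capa(\pi)$. Substituting the capacity bound into the denominator produces the inequality
\begin{equation*}
|\pi| \;\geq\; \frac{\cost(\Phi)}{\capa(\pi)} \;\geq\; \frac{\cost(\Phi)}{|\pi|},
\end{equation*}
which rearranges to $|\pi|^2 \geq \cost(\Phi)$, and therefore $|\pi| \geq \sqrt{\cost(\Phi)}$.

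There is no real obstacle here; the corollary is an immediate two-line consequence of the preceding proposition and theorem. The only subtlety worth flagging explicitly is the interpretation of proof size: unlike in $\QURes$ or $\CPred$, where capacity is constant and cost bounds the number of lines, in $\PCRred$ one must use the fact that the size measure dominates both the number of lines and the monomial count of the largest line, which is precisely what allows the self-referential substitution $\capa(\pi) \leq |\pi|$ to go through.
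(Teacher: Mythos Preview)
Your proof is correct and follows essentially the same approach as the paper: bound the maximum monomial count of any line by the total proof size, invoke Proposition~\ref{prop:PC-capacity} to get $\capa(\pi)\leq|\pi|$, then substitute into Theorem~\ref{thm:costcap} and rearrange. The paper's proof is the same two-line argument, without the explicit intermediate variable $M$.
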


\begin{proof}
As the size of $\pi$ is measured by the number of monomials, each line of $\pi$ contains at most $|\pi|$ monomials, and so by Proposition~\ref{prop:PC-capacity}, $\capa (\pi) \leq |\pi|$. Applying Size-Cost-Capacity (Theorem~\ref{thm:costcap}), we conclude that $|\pi| \geq \frac{\scriptsize{\cost}(\Phi)}{|\pi|}$, i.e. $|\pi| \geq \sqrt{\cost(\Phi)}$.
\end{proof}

As for $\QURes$ and $\CPred$, this immediately gives a lower bound of $2^{\Omega(n)}$ for any proof of the equality formulas in $\PCRred$.

\subsection{Proofs with large capacity}

We conclude this section by noting that our technique cannot be applied to some of the more powerful proof systems. These proof systems use lines which are able to concisely express more complex Boolean functions which require large response sets. The example we give is a proof of the equality formulas in the proof system $\Fred$. The $\Fred$ proof system is the strongest proof system we discuss in this paper, and no superpolynomial lower bounds on proof size are known in the propositional system $\Frege$, nor in the QBF proof system $\Fred$.

The $\Frege$ proof system is a `textbook' propositional proof system, in which lines are arbitrary formulas in propositional variables, the constants $\top, \bot$ and the connectives $\wedge , \vee , \neg$.
The rules of a $\Frege$ system comprise a set of axiom schemes and inference rules, which must be implicationally complete \cite{CookR79}; all such systems are equivalent \cite{CookR79, Krajicek95}. 

\begin{prop}
\label{prop:frege}
There is a $\Fred$ refutation of the $n^{\text{\scriptsize}th}$ equality formula $\eq(n)$ with size $O(n)$.
\end{prop}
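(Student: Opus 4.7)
The plan is to construct an explicit $\Fred$ refutation of $\eq(n)$ whose length is linear in $n$. I introduce the intermediate lines
$$L_i := \bigvee_{j=1}^{i}(x_j \oplus u_j), \qquad 0 \leq i \leq n,$$
with the convention $L_0 := \bot$. Semantically, $L_n$ expresses that the $\forall$-player has not played according to the unique winning strategy $u = x$, and is the strongest $(x,u)$-consequence of the matrix. I will derive these formulas in the order $L_n, L_{n-1}, \ldots, L_0$.

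First, I will derive $L_n$ from the axioms of $\eq(n)$ in $O(n)$ Frege steps. The two axiom clauses $(x_j \vee u_j \vee \neg t_j)$ and $(\neg x_j \vee \neg u_j \vee \neg t_j)$ together express the implication $t_j \to (x_j \oplus u_j)$, which Frege can derive in constant size per $j$ by implicational completeness. Combining the $n$ resulting implications with the long clause $\bigvee_j t_j$ by case analysis on which $t_j$ is true then yields $L_n$ in a further $O(n)$ steps.

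Second, I will iteratively collapse $L_n$ down to $L_0 = \bot$. For each $i$ from $n$ down to $1$, the line $L_i$ contains only variables from $\{x_1,\ldots,x_i,u_1,\ldots,u_i\}$ and in particular no $t$-variables, so the side condition of the universal reduction rule is satisfied for the block $U = \{u_1,\ldots,u_n\}$. Applying reduction with the partial assignments $u_i \mapsto 0$ and $u_i \mapsto 1$ produces $L_{i-1} \vee x_i$ and $L_{i-1} \vee \neg x_i$ respectively. A constant-size Frege derivation using the propositional tautology $(A \vee x_i) \wedge (A \vee \neg x_i) \to A$ then yields $L_{i-1}$. After $n$ such iterations the refutation terminates at $L_0 = \bot$.

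The total number of inferences is $O(n)$. The main obstacle is purely bookkeeping: verifying that the propositional derivation of $L_n$ from the axioms and the constant-size combining steps at each iteration can indeed be realised in standard Frege. Both follow from implicational completeness and pose no real difficulty.
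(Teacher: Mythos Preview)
Your proof is correct and follows essentially the same approach as the paper's: derive the line $\bigvee_{i=1}^{n}(x_i \oplus u_i)$ (which the paper writes as $\bigvee_{i=1}^{n}[(x_i \vee u_i)\wedge(\neg x_i \vee \neg u_i)]$) from the axioms and the long clause in $O(n)$ steps, then collapse it to $\bot$ by iteratively $\forall$-reducing $u_i$ to both values and resolving on $x_i$. The only cosmetic difference is that the paper phrases the first phase as successive resolutions of $(x_i \oplus u_i)\vee\neg t_i$ against the long clause, whereas you phrase it as combining the implications $t_j\to(x_j\oplus u_j)$ with $\bigvee_j t_j$; these are the same derivation.
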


\begin{proof}
From the lines $x_i \vee u_i \vee \lnot t_i$ and $\neg x_i \vee \neg u_i \vee \lnot t_i$, there is a constant size $\Fred$ derivation of the line $L_i = [(x_i \vee u_i) \wedge (\neg x_i \vee \neg u_i )] \vee \lnot t_i$. Successively applying the resolution rule, which $\Frege$ can simulate with constant size, to the lines $L_i$ and the line $t_1 \vee \dots \vee t_n$ results in the line
$$ L = \bigvee_{i=1}^n \left[ \left( x_i \vee u_i \right) \wedge \left( \neg x_i \vee \neg u_i \right) \right]. $$

Let $K_m$ be the line $\bigvee_{i=1}^m [(x_i \vee u_i) \wedge (\neg x_i \vee \neg u_i)]$, so $K_n = L$ and $K_0 = \bot$. For any $m$, $K_{m-1}$ can be derived from $K_m$ by first $\forall$-reducing to obtain $K_m[u_m/0] \equiv K_{m-1} \vee x_m$ and $K_m[u_m/1] \equiv K_{m-1} \vee \neg x_m$, and then resolving these on $x_m$. 

Deriving each $K_m$ in turn from $K_{m+1}$ provides a linear size refutation of $L$, and hence a linear size refutation of $\eq(n)$.
\end{proof}

The $\Fred$ proof $\pi$ defined in the proof above only uses formulas of depth 3 (i.e.\ there are at most two alternations between $\wedge$ and $\vee$) and so $\pi$ is also a refutation in the more restrictive systems $\AC{0}$-$\Fred$ and $\complexityClassFont{AC^0_3}$-$\Fred$.
By the Size-Cost-Capacity Theorem (Theorem~\ref{thm:costcap}), we know that $\capa (\pi)$ is of exponential size. We can show this directly by considering the reducible line $L$ from $\pi$.

Any winning response for $L$ to an assignment $\alpha \in \langle\{ x_1 , \dots , x_n \}\rangle$ must falsify $(x_ i \vee u_i) \wedge (\neg x_i \vee \neg u_i)$ for each $1 \leq i \leq n$. The unique winning response to $\alpha$ is therefore to play $\beta$ such that $\beta(u_i) = \alpha (x_i)$. Since there are $2^n$ distinct assignments in $\langle\{ x_1 , \dots , x_n \}\rangle$, the range of the unique response map for $L$ is $\langle \{ u_1 , \dots , u_n \} \rangle$, which has size $2^n$, despite $L$ being a $\Fred$ line of size polynomial in $n$, and so $\capa(\pi) \geq 2^n$.

\section{Randomly generated formulas with large cost} %
\label{sec:random-QBFs}

In the previous section, we saw that Size-Cost-Capacity can be used to simultaneously show lower bounds in many different QBF proof systems simply by examining the cost of QBFs. We now construct a class of randomly generated QBFs, denoted $Q(n,m,c)$ which with high probability are false and have large cost for appropriate values of $m$ and $c$. By showing a lower bound on cost, we immediately obtain lower bounds on proof size in $\QURes$, $\CPred$ and $\PCRred$ for these random QBFs.

The model used for these formulas is a simple extension of previous constructions of random QBFs, and our lower bounds have both theoretical and practical significance. On the theoretical side, the high cost of these formulas demonstrates that our lower bound technique applies to a large number of QBFs, and not just a few handcrafted instances. For practical QBF solving, it is important to have a large number of instances on which to test a solver. Having identified cost as a simple reason for QBF proof size lower bounds, the utility of generating many QBFs with large cost is clear.

\begin{defi}
\label{def:Qnmc}
For each $1 \leq i \leq n$, let $C_i^1 , \dots , C_i^{cn}$ be distinct clauses picked uniformly at random from the set of clauses containing 1 literal from the set $X_i = \{ x_i^1 , \dots, x_i^m \}$ and 2 literals from $Y_i = \{ y_i^1 , \dots, y_i^n \}$. Define the randomly generated QBF $Q(n,m,c)$ as:
$$ Q(n,m,c) := \exists Y_1 \dots Y_n \forall X_1 \dots X_n \exists t_1 \dots t_n \cdot \bigwedge_{i=1}^n \bigwedge_{j=1}^{cn} \left( \neg t_i \vee C_i^j \right) \wedge \bigvee_{i=1}^n t_i.$$
\end{defi}

Specifying that clauses contain a given number of literals from different sets may seem unusual, especially to readers familiar with random SAT instances, however it is widely used in the study of randomly generated QBFs \cite{ChenI05,CreignouDER15}. If any clause in the matrix of a QBF contains only literals on universal variables, then it is easy to see that the QBF is false, and that all proof systems $\Pred$ have a constant size refutation using only this clause. Specifying that all clauses must contain a given number of literals from different sets of variables avoids this issue by guaranteeing that all clauses contain existential variables. It is natural that we would also expect clauses in a QBF to contain universal variables.



We define the QBFs $\Psi_i := \exists Y_i \forall X_i \cdot \bigwedge_{j=1}^{cn} C_i^j$, which are generated using the same clauses as the (1,2)-QCNF model (Definition \ref{def:12QCNF}). We can easily see that $Q(n,m,c) \equiv \bigvee_{i=1}^n \Psi_i$ is simply a disjunction of $n$ QBFs generated using this commonly studied model for random QBFs. This equivalence allows us to better understand $Q(n,m,c)$ by studying the randomly generated $\Psi_i$. 

We first show that, for suitable values of $m$ and $c$, with probability $1 - o(1)$, all $\Psi_i$ are false, and furthermore for a linear number of the $\Psi_i$, $\cost(\Psi_i) \geq 2$. By observing that a winning strategy for $Q(n,m,c)$ is a winning strategy for all the $\Psi_i$ simultaneously, we can then show a cost lower bound for $Q(n,m,c)$.

We first focus on proving that, with suitable values for $m$ and $c$, $Q(n,m,c)$ is false with high probability. This is equivalent to showing that, with high probability, each of the $\Psi_i$ is false. In each $\Psi_i$, any winning assignment for the $\exists$-player must satisfy an existential literal in every clause. If not, there would be a winning $\forall$-strategy constructed by finding a clause where both existential literals were false, and setting the universal literal in that clause to false. Determining the truth of $\Psi_i$ can therefore be reduced to determining the satisfiability of the 2-SAT problem defined by the existential parts of the clauses $C_i^j$.
We can then use the following result on the satisfiability of random 2-SAT formulas, shown independently by Chv\'{a}tal and Reed \cite{ChvatalR92}, Goerdt \cite{Goerdt96} and de la Vega \cite{Vega98}, to obtain the falsity of the $\Psi_i$. We state it here with a tighter probability lower bound of $1 - o(n^{-1})$ proved by de la Vega in \cite{Vega01j}, which is necessary for our present work.

\begin{thm}[de la Vega \cite{Vega01j}]
Let $\Phi$ be a random 2-SAT formula on $n$ propositional variables containing $cn$ clauses selected uniformly at random. If $c > 1$ then $\Phi$ is unsatisfiable with probability $1 - o \left( n^{-1} \right) $.
\label{thm:2satfalse}
\end{thm}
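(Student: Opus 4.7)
The plan is to adapt the bicycle method of Chv\'atal and Reed \cite{ChvatalR92}, sharpened to obtain the quantitative tail $1 - o(n^{-1})$ rather than the conventional $1 - o(1)$.

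First I would translate $\Phi$ into its implication digraph $G(\Phi)$: each 2-clause $(a \vee b)$ contributes the directed edges $\bar a \to b$ and $\bar b \to a$, and classically $\Phi$ is unsatisfiable iff some variable $x$ lies in the same strongly connected component of $G(\Phi)$ as its negation $\bar x$. The combinatorial certificate I would target is a \emph{bicycle}: a short implication chain $\bar w_0 \to u_1 \to u_2 \to \cdots \to u_\ell \to w_1$ in $G(\Phi)$ with $u_1, \dots, u_\ell$ on distinct variables and both endpoints $w_0, w_1$ drawn from $\{u_i, \bar u_i : 1 \le i \le \ell\}$. A case analysis on where $w_0$ and $w_1$ sit among the $u_i$ shows that chasing the chain always produces both $x \Rightarrow \bar x$ and $\bar x \Rightarrow x$ for some variable $x$, so it suffices to prove that $\Phi$ contains a bicycle with probability $1 - o(n^{-1})$.

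Second, I would compute the expected bicycle count $\mathbb{E}[N_\ell]$. The number of labelled configurations is $\Theta(n^\ell 2^\ell \ell^2)$, and the probability that the $\ell + 1$ chain clauses all land in the uniformly sampled set of $cn$ clauses is $\Theta((c/(2n))^{\ell+1})$, yielding $\mathbb{E}[N_\ell] = \Theta(\ell^2 c^{\ell+1}/n)$. For $c > 1$ this grows super-polynomially in $\ell$; taking $\ell$ of order $\log n$ makes the expectation polynomially large, and a standard second-moment computation on $\sum_\ell N_\ell$ already gives the weak existence $\Pr[\text{bicycle}] = 1 - o(1)$.

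Third, to upgrade the tail from $1 - o(1)$ to $1 - o(n^{-1})$ I would move from counting to exploration: starting BFS in $G(\Phi)$ from a uniformly random literal, in the supercritical regime $c > 1$ each step is well-approximated by a Galton--Watson process with mean offspring $c$. Sharp large-deviation bounds for supercritical branching processes then show that both the forward and backward explorations reach $\Theta(n)$ distinct literals with probability $1 - \exp(-\Omega(n))$, conditional on survival; intersecting the two giant cones forces some variable to appear with both polarities, producing the required bicycle. The hard part will be precisely this upgrade: vanilla Chebyshev on $N_\ell$ only yields $1 - o(1)$, because the dominant contribution to $\operatorname{Var}(N_\ell)$ comes from bicycle pairs sharing a handful of variables and is of the same order as $\mathbb{E}[N_\ell]^2$. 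The $o(n^{-1})$ tail therefore demands the exponential concentration of supercritical branching-process sizes, combined with a two-phase edge-exposure scheme that decouples the forward and backward BFS trees, which otherwise share the edges of $G(\Phi)$ and cannot be treated as independent.
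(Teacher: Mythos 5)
This theorem is not proved in the paper at all: it is quoted verbatim from de la Vega \cite{Vega01j} (the surrounding text says explicitly that the $1 - o(n^{-1})$ tail is ``proved by de la Vega in \cite{Vega01j}''), so there is no in-paper argument against which to check your proposal. Judged on its own, your first two stages are fine and are indeed the standard Chv\'atal--Reed route: the bicycle certificate of unsatisfiability, the expectation count $\mathbb{E}[N_\ell] = \Theta(\ell^2 c^{\ell+1}/n)$, and the observation that Chebyshev on $\sum_\ell N_\ell$ only yields $1-o(1)$ because $\mathrm{Var}(N_\ell)$ and $\mathbb{E}[N_\ell]^2$ are of the same order.

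The genuine gap is in your third stage, and you have not closed it. For any \emph{fixed} starting literal $\ell$, the forward (or backward) exploration in $G(\Phi)$ goes extinct with probability bounded away from zero --- roughly the extinction probability of a Galton--Watson process with mean $c$, a constant in $(0,1)$ when $c$ is just above $1$. A statement of the form ``reaches $\Theta(n)$ literals with probability $1 - \exp(-\Omega(n))$ conditional on survival'' therefore does not convert into an unconditional tail of $o(n^{-1})$: unconditionally, a single starting literal fails with constant probability, and to get $o(n^{-1})$ you must argue that it is extremely unlikely that \emph{all} $2n$ literals simultaneously have a small forward or backward cone. Those $2n$ failure events are heavily correlated (they all read the same random edge set), so you cannot simply multiply constants, and the ``two-phase edge-exposure scheme'' you invoke decouples forward from backward for one starting literal but does not address the union over starting literals. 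Separately, ``intersecting the two giant cones forces some variable to appear with both polarities, producing the required bicycle'' is too quick: if the forward cone of $\ell$ exceeds $n$ literals, pigeonhole gives $\ell \to u$ and $\ell \to \bar u$, which via the skew-symmetry $a \to b \Leftrightarrow \bar b \to \bar a$ yields $\ell \to \bar\ell$; you need the backward cone large as well to get $\bar\ell \to \ell$, and you should spell this out rather than appeal to ``intersection.'' As it stands, the proposal identifies the right certificate and correctly diagnoses why the second-moment method falls short, but the claimed route to the sharper $1 - o(n^{-1})$ bound is an outline of a plan, not an argument.
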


The following lemma is equivalent to the statement that, with the same bounds on $m$ and $c$, $Q(n,m,c)$ is false with probability $1 - o(1)$.  This is a fairly immediate consequence of Theorem~\ref{thm:2satfalse}; we need only check that it is sufficiently likely that the existential parts of the clauses of the $\Psi_i$ satisfy the conditions of Theorem~\ref{thm:2satfalse}. The possibility of repeating an existential clause many times with different universal literals makes this non-trivial, but the proof is relatively straightforward.

\begin{lem}
For each $1 \leq i \leq n$, let $\psi_i$ be a set of $cn$ clauses picked uniformly at random from the set of clauses containing 1 literal from $X_i = \{ x_i^1 , \dots, x_i^m \}$ and 2 literals from $Y_i = \{ y_i^1 , \dots, y_i^n \}$. If $m \leq \log_2 (n)$ and $c > 1$, then with probability $1 - o(1)$, $\Psi_i := \exists Y_i \forall X_i \cdot \psi_i$ is false for all $1 \leq i \leq n$.
 \label{lem:psisfalse}
\end{lem}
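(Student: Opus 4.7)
The plan is to reduce the falsity of $\Psi_i$ to the unsatisfiability of a random 2-SAT instance on $Y_i$ and then invoke Theorem~\ref{thm:2satfalse}, handling carefully the coincidences between existential parts of the clauses. First I would write each clause $C \in \psi_i$ as $C = E \cup \{l\}$, with $E$ the 2-literal existential part and $l$ the unique $X_i$-literal, and let $\phi_i$ denote the \emph{set} (not multiset) of existential parts appearing in $\psi_i$. Then $\Psi_i$ is false iff $\phi_i$ is unsatisfiable as a 2-SAT formula on $Y_i$: any $Y_i$-assignment satisfying every clause of $\phi_i$ is a winning $\exists$-strategy, while if some $E \in \phi_i$ is falsified by a given $Y_i$-assignment, the $\forall$-player wins by setting the unique $X_i$-literal in the corresponding full clause to false.

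Fix $c' \in (1,c)$. The second step bounds collisions: let $Z_i$ count unordered pairs of clauses of $\psi_i$ with equal existential parts, so that $|\phi_i| \geq cn - Z_i$. Out of $N := 8m\binom{n}{2}$ possible full clauses, each existential part has $2m$ extensions, so by hypergeometric sampling $\mathbb{E}[Z_i] = \binom{cn}{2}(2m-1)/(N-1) = c^2/4 + o(1)$, independently of $m$. A direct computation of $\mathbb{E}[Z_i^2]$ by splitting pairs of pairs into disjoint, one-shared-index, and identical cases shows $\mathrm{Var}[Z_i] = O(1)$, so Chebyshev gives $P[|\phi_i| < c'n] \leq P[Z_i \geq (c-c')n] = O(n^{-2}) = o(n^{-1})$.

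The third step is a symmetry argument followed by Theorem~\ref{thm:2satfalse}. By inclusion-exclusion, the number of $cn$-subsets of full clauses whose existential-part-set equals a prescribed $\Phi$ is $\sum_{j=0}^{|\Phi|}(-1)^j\binom{|\Phi|}{j}\binom{2m(|\Phi|-j)}{cn}$, a quantity depending only on $|\Phi|$. Hence conditional on $|\phi_i| = K$, $\phi_i$ is uniformly distributed over $K$-subsets of the 2-clauses on $Y_i$. For $K \geq c'n > n$, I would pass, by monotonicity of 2-SAT unsatisfiability, to a uniform $\lceil c'n\rceil$-subset of $\phi_i$ (which is itself uniform among $\lceil c'n\rceil$-subsets of 2-clauses on $Y_i$) and apply Theorem~\ref{thm:2satfalse} at ratio $c' > 1$, yielding $P[\phi_i\text{ sat}\mid |\phi_i|=K] = o(n^{-1})$. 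Combining with the previous step gives $P[\Psi_i\text{ true}] = o(n^{-1})$, and a union bound over $i \in [n]$ concludes the proof.

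The main obstacle is obtaining $o(n^{-1})$ rather than merely $O(n^{-1})$ in the tail bound for $Z_i$: a plain Markov bound would fall short by exactly the factor of $n$ consumed by the final union bound. The remedy is the variance computation above; should verifying $\mathrm{Var}[Z_i]=O(1)$ by hand prove cumbersome, a Chen--Stein Poisson approximation for $Z_i$ (with Poisson parameter $c^2/4$) yields super-polynomially small tails and finishes the argument even more cleanly. Note that the hypothesis $m \leq \log_2 n$ does not appear to be used in this lemma itself; it presumably enters only in the companion high-cost argument referenced in Lemma~\ref{lem:rand-highcost}.
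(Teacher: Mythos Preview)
Your proof is correct and follows the same overall architecture as the paper: reduce falsity of $\Psi_i$ to unsatisfiability of the 2-SAT instance on the existential parts, show that with probability $1-o(n^{-1})$ there are at least $c'n$ distinct existential 2-clauses for some $c'>1$, argue by symmetry that these are uniformly distributed so that Theorem~\ref{thm:2satfalse} applies, and finish with a union bound over $i$. The one substantive difference is in the collision step. The paper treats the $cn$ draws sequentially, lower-bounds the probability of seeing a fresh existential part at each step by $1-\tfrac{k}{2(n-1)}$, and applies Hoeffding to get an exponentially small failure probability; you instead count colliding pairs $Z_i$, compute $\mathbb{E}[Z_i]=c^2/4+o(1)$ and $\mathrm{Var}[Z_i]=O(1)$, and use Chebyshev to get $O(n^{-2})$. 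Both suffice, since only $o(n^{-1})$ is needed before the union bound; the paper's bound is quantitatively stronger but your second-moment argument is a bit more direct and avoids the sequential coupling. Your observation that the hypothesis $m\le\log_2 n$ is not actually used here is also correct: in the paper's calculation the factor $m$ (written there as $\log n$) cancels from numerator and denominator in the key probability $1-\tfrac{k}{2(n-1)}$, so the lemma holds for arbitrary $m\ge 1$.
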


\begin{proof}
For the QBFs $\Psi_i$ to be false, it is sufficient for the 2-SAT problem generated by taking only the existential parts of the clauses to be false, as the universal response need only respond by falsifying the universal literal on some unsatisfied existential clause. However, it is possible that clauses in $\psi_i$ contain the same existential literals and differ only in the universal literal. In order to use Theorem~\ref{thm:2satfalse}, we need to show that there is some constant $k > 1$ such that, for each $i \in [n]$, the clauses of $\psi_i$ contain at least $kn$ distinct pairs of existential literals with high probability.

For each $\psi_i$, there are $4 \binom{n}{2}$ choices for the existential literals of a clause, and $2m \leq 2 \log (n)$ possible universal literals. The total number of possible clauses is therefore at most $4n(n-1)\log (n)$. 

Let $k$ be some constant with $1 < k < c$. To determine the probability of $\psi_i$ containing at least $kn$ distinct clauses in the existential variables, consider successively making $cn$ random choices of clause from the $4n(n-1) \log (n)$ possible clauses. If, on choosing a clause, fewer than $kn$ distinct existential clauses have been chosen, the probability of a randomly chosen clause having existential part distinct from all previously chosen clauses is at least $$\frac{4n(n-1) \log (n) - 2kn \log (n)}{4n(n-1) \log (n)} = 1- \frac{k}{2(n-1)}.$$

Define the selection of a clause to be successful if it either selects a clause with existential part distinct from that of the previous clauses, or if $kn$ distinct existential clauses have already been selected. The probability of any selection being successful is therefore at least $1 - \frac{k}{2(n-1)}$. It is enough to show that if we select $cn$ clauses, with a probability $1 - \frac{k}{2(n-1)}$ of success for each selection, then the probability of fewer than $kn$ successes is $O(e^{-n})$.

The distribution of the random variable $Z$, the total number of successes, is a sum of $cn$ Bernoulli random variables with $p = 1-\frac{k}{2(n-1)}$. Substituting these values into Hoeffding's inequality, we obtain
$$ P(Z \leq kn) \leq \exp \left( -2 \frac{{\left( cn - \frac{kcn}{2(n-1)} - kn \right) }^2}{cn}\right) = \exp \left( -\frac{2{(c-k)}^2}{c}n + O(1)  \right)$$
and so $P(Z > kn) = 1 - \frac{1}{e^{\Omega(n)}} = 1 - o(n^{-1})$.

The probability that a given $\Psi_i$ is false is at least the probability of it containing at least $kn$ distinct existential clauses and the first $kn$ distinct such clauses being unsatisfiable. Given the clauses of $\psi_i$ were chosen uniformly at random, each set of $kn$ existential clauses is equally likely to be chosen, so the probability these clauses are unsatisfiable is $1 - o\left( n^{-1} \right)$, by Theorem~\ref{thm:2satfalse}. The probability of $\Psi_i$ being false is therefore at least $P(Z > kn) \cdot \left(1 - o \left( n^{-1} \right) \right) = 1 - o \left( n^{-1} \right)$.

Finally, the selection of clauses for each $\Psi_i$ is independent of clauses chosen in any other $\Psi_i$, and so the probability of all being false is ${ \left( 1 - o \left( n^{-1} \right) \right) }^n = 1 - o(1)$.
\end{proof}

It remains to show that $\cost (Q(n,m,c))$ is large. Again, we first look at the cost of $\Psi_i$, and observe that, for $m \leq \log_2 (n)$ and $1 < c < 2$, $\cost(\Psi_i) \geq 2$ with probability $1 - o(1)$. Winning responses for $Q(n,m,c)$ are simultaneous winning responses for each of the $\Psi_i$. As many of the $\Psi_i$ require multiple distinct responses, it is reasonable to expect that the number of responses to falsify all of them is large. With a careful choice of the parameters $m$ and $c$, we can indeed force $Q(n,m,c)$ to have a large cost with high probability.

To prove $\cost(\Psi_i) \geq 2$, it is only necessary to show that $\cost(\Psi_i) \neq 1$, i.e. that any winning $\forall$-strategy $S: \langle Y_i \rangle \to \langle X_i \rangle$ for $\Psi_i$ is not constant. If there is a constant winning $\forall$-strategy, say $S(\alpha) = \beta$ for all $\alpha \in \langle Y_i \rangle$, then $\beta$ also constitutes a winning $\forall$-strategy for $\Psi_i' = \forall X_i \exists Y_i \cdot \psi_i$.

\begin{defi}[Chen and Interian \cite{ChenI05}]
\label{def:12QCNF}
A (1,2)-QCNF is a QBF of the form
$ \forall X \exists Y \cdot \phi (X,Y) $
where $X = \{ x_1 , \dots , x_m \}$, $Y = \{ y_1 , \dots , y_n \}$ and $\phi(X,Y)$ is a 3-CNF formula in which each clause contains one universal literal and two existential literals.
\end{defi}

If a winning $\forall$-strategy for $\Psi_i'$ exists, then $\Psi_i'$ is false. However, for $c<2$, $\Psi_i'$ is known to be true with high probability.

\begin{thm}[Creignou et al.\ \cite{CreignouDER15}] 
\label{thm:qsattrue}
Let $\Phi$ be a (1,2)-QCNF in which $\phi (X,Y)$ contains $cn$ clauses picked uniformly at random from the set of all suitable clauses. If $m \leq \log_2 (n)$, and if $c < 2$, then $\Phi$ is true with probability $1 - o(1)$.
\end{thm}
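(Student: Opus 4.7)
The plan is to reduce the truth of $\Phi = \forall X \exists Y \cdot \phi(X,Y)$ to the satisfiability of a collection of random $2$-CNFs, one for each universal assignment $\beta \in \langle X \rangle$, and to apply a sub-critical $2$-SAT result with sufficient probabilistic margin to survive a union bound over the $2^m \leq n$ assignments.

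First I would observe that $\Phi$ is true if and only if, for every $\beta \in \langle X \rangle$, the restriction $\phi(X,Y)[\beta]$ is a satisfiable $2$-CNF on $Y$: each clause of $\phi$ carries a single universal literal, which is either satisfied by $\beta$ (and the clause is deleted) or falsified by $\beta$ (and the clause becomes the disjunction of its two existential literals). Thus the problem becomes: for each fixed $\beta$, bound the probability that $\phi[\beta]$ is unsatisfiable.

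Next I would analyse the distribution of $\phi[\beta]$. Since the $cn$ clauses are chosen uniformly from the $(1,2)$-family, each clause independently survives $\beta$ with probability $1/2$, and given survival its existential part is uniform over the $4\binom{n}{2}$ possible $2$-clauses on $Y$. Therefore $\phi[\beta]$ is distributed essentially as a random $2$-CNF with a binomial number of clauses of mean $cn/2$, and by a Chernoff bound the true count lies below $c' n$ for any fixed $c' \in (c/2, 1)$ with probability $1 - e^{-\Omega(n)}$. This puts us strictly below the $2$-SAT satisfiability threshold of $1$.

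At this point I would invoke a strengthened form of the sub-critical $2$-SAT result, in the spirit of Theorem~\ref{thm:2satfalse} of de la Vega but for satisfiability: a random $2$-CNF of density $c' < 1$ is satisfiable with probability $1 - o(n^{-1})$. The standard route is via the implication digraph characterisation, showing that the probability of a Chvátal--Reed `bicycle' of any length decays polynomially by summing $\sum_{k} (c')^k$-type contributions and exploiting the strict sub-criticality. Combined with the concentration of the clause count, this yields $\Pr[\phi[\beta] \text{ unsatisfiable}] = o(n^{-1})$ for each fixed $\beta$. A union bound over the $2^m \leq n$ choices of $\beta$ then gives truth of $\Phi$ with probability $1 - n \cdot o(n^{-1}) = 1 - o(1)$.

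The main obstacle is sharpening the sub-critical $2$-SAT bound from the classical $1 - o(1)$ to the polynomial $1 - o(n^{-1})$ rate that the union bound demands; this is a delicate counting of short `bicycles' in the implication digraph and is the principal technical content. A secondary subtlety is that the $cn$ clauses are sampled \emph{without} replacement from a finite catalogue, so the clauses of $\phi[\beta]$ are not perfectly i.i.d.; I would handle this by comparing with the with-replacement model (the bias is $O(1/n)$ per pair of clauses and absorbed into the $o(n^{-1})$ slack).
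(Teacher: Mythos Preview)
The paper does not prove this theorem. Theorem~\ref{thm:qsattrue} is stated with attribution to Creignou, Daud\'{e}, Egly and Rossignol~\cite{CreignouDER15} and is invoked as a black box; no argument for it appears anywhere in the paper. There is therefore nothing in the paper to compare your proposal against.

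For what it is worth, your outline is the natural one and matches the spirit of the cited work: reduce truth of the $(1,2)$-QCNF to simultaneous satisfiability of at most $2^m \leq n$ random sub-critical $2$-CNFs, then beat the union bound with a quantitative $1 - o(n^{-1})$ satisfiability estimate below the $2$-SAT threshold. You have also correctly flagged the two places where real work is needed: the polynomial rate for sub-critical $2$-SAT (the bicycle/implication-digraph counting), and the minor discrepancy between sampling with and without replacement. Neither of these is addressed in the present paper, which simply imports the result.
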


We therefore pick the parameter $c$ to lie between the lower bound from Theorem~\ref{thm:2satfalse} and the upper bound from Theorem~\ref{thm:qsattrue}. From these results, we see that for $1 < c < 2$, $\exists Y \forall X . \psi$ is false, but $\forall X \exists Y . \psi$ is true with high probability. Any constant winning $\forall$-strategy for $\exists Y \forall X . \psi$ is also a winning $\forall$-strategy for $\forall X \exists Y \cdot \psi$, whence the latter is false. This gives us the bound $\cost(\Psi_i) \geq 2$ with high probability.

\begin{lem}
Let $\psi$ be a set of $cn$ clauses picked uniformly at random from the set of clauses containing 1 literal from the set $X = \{ x_1 , \dots, x_m \}$ and 2 literals from $Y = \{ y_1 , \dots, y_n \}$. If $1 < c < 2$ and $m \leq  \log_2 (n) $, then with probability $1 - o(1)$, $\Psi := \exists Y \forall X \cdot \psi$ is false, and $\cost(\Psi) \geq 2$.
\label{lem:nonconststrat}
\end{lem}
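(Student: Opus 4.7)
The plan is to establish the two conclusions---falsity of $\Psi$ and the bound $\cost(\Psi) \geq 2$---separately, showing each holds with probability $1 - o(1)$, and then combining them with a union bound.

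For the falsity of $\Psi$, I would essentially specialise the argument of Lemma~\ref{lem:psisfalse} to the single-QBF case. The key observation is that in $\Psi = \exists Y \forall X \cdot \psi$ the $\exists$-player must satisfy an existential literal in every clause, for otherwise the $\forall$-player wins by simply falsifying the universal literal in any existentially-unsatisfied clause. Thus $\Psi$ is false whenever the 2-SAT formula obtained by projecting each clause to its existential part is unsatisfiable. The complication is that two distinct clauses of $\psi$ can share existential literals and differ only in their universal literal, so we must first guarantee that the existential projections contain sufficiently many distinct clauses. The Hoeffding-bound computation carried out in the proof of Lemma~\ref{lem:psisfalse} gives this: for any fixed $1 < k < c$, with probability $1 - o(n^{-1})$ the $cn$ clauses of $\psi$ yield at least $kn$ distinct existential parts. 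Theorem~\ref{thm:2satfalse} applied to these existential parts (which themselves form a uniform random 2-SAT instance, by symmetry) then yields unsatisfiability with probability $1 - o(n^{-1})$, and hence $\Psi$ is false with probability $1 - o(1)$.

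For the cost lower bound, I would argue contrapositively. If $\cost(\Psi) = 1$, then any winning $\forall$-strategy $S : \langle Y \rangle \to \langle X \rangle$ is constant, say $S(\alpha) = \beta$ for every $\alpha$. But then $\beta$ falsifies $\psi[\alpha \cup \beta]$ for every $\alpha \in \langle Y \rangle$, which is precisely the statement that the alternation-reversed QBF $\Psi' := \forall X \exists Y \cdot \psi$ is false. Since $\psi$ is a random (1,2)-QCNF matrix in the sense of Definition~\ref{def:12QCNF}, Theorem~\ref{thm:qsattrue} (applied under the hypotheses $m \leq \log_2(n)$ and $c < 2$) guarantees that $\Psi'$ is true with probability $1 - o(1)$. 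Hence $\cost(\Psi) \geq 2$ with probability $1 - o(1)$. A union bound over the two events finishes the lemma.

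The main conceptual step is the contrapositive observation linking a constant winning $\forall$-strategy for $\Psi$ to the falsity of the alternation-reversed $\Psi'$; this is what allows Theorems~\ref{thm:2satfalse} and~\ref{thm:qsattrue}---which on their face only concern truth values and say nothing about cost---to be combined into a cost lower bound. The only technical obstacle is the duplicate-existential-parts issue in the first part, which is handled exactly as in Lemma~\ref{lem:psisfalse} and requires no new ideas beyond the Hoeffding bound already deployed there.
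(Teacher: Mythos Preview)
Your proposal is correct and follows essentially the same approach as the paper: invoke the single-instance argument of Lemma~\ref{lem:psisfalse} for falsity, and then observe that $\cost(\Psi)=1$ yields a constant winning response $\beta$ which serves as a winning $\forall$-strategy for $\Psi' = \forall X \exists Y \cdot \psi$, contradicting Theorem~\ref{thm:qsattrue}. One small wording issue: $\cost(\Psi)=1$ means \emph{some} winning $\forall$-strategy is constant, not that \emph{any} is---but this is exactly what your argument actually uses, so the logic is sound.
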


\begin{proof}
Observe from the proof of Lemma~\ref{lem:psisfalse} that, as $c > 1$, $\Psi$ is false with probability $1 - o(n^{-1})$ and $\cost (\Psi) \geq 1$.

Suppose $\cost(\Psi) = 1$, then there is some $\beta \in \langle X \rangle$ such that $\beta$ is a winning response for any $\alpha \in \langle Y \rangle$. That is, for any $\alpha \in \langle Y \rangle$, $\psi [\alpha][\beta] = \bot$.
We can use $\beta$ as a winning strategy for $\Psi' = \forall X \exists Y .\psi$, defining $S'(\emptyset) = \beta$. Since $\psi[\beta][\alpha] = \bot$ for all $\alpha \in \langle Y \rangle$, $S'$ is a winning $\forall$-strategy and so $\Psi'$ is false. However since $c < 2$, $\Psi'$ is false with probability $o(1)$ (Theorem~\ref{thm:qsattrue}), and so such a $\beta \in \langle X \rangle$ exists with probability $o(1)$.

The probability that $\Psi$ is false and $\cost(\Psi) \geq 2$ is therefore $1 - o(n^{-1}) - o(1) = 1-o(1)$.
\end{proof}

With this, we can show that a linear number of the $\Psi_i$ require multiple responses with probability $1 - o(1)$. This will be enough to give a large lower bound on $\cost(Q(n,m,c))$.

\begin{lem}
For each $1 \leq i \leq n$, let $\psi_i$ be a set of $cn$ clauses picked uniformly at random from the set of clauses containing 1 literal from the set $X_i = \{ x_i^1 , \dots, x_i^m \}$ and 2 literals from $Y_i = \{ y_i^1 , \dots, y_i^n \}$. Further suppose that $m \leq \log_2 (n)$ and $c,l$ are any constants with $1<c<2$, $l < 1$. With high probability at least $ln$ of the $\Psi_i$ have $\cost(\Psi_i) \geq 2$.
\label{lem:nonconstant-psis}
\end{lem}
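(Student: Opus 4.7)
The plan is to reduce the lemma to a standard concentration argument, leveraging the independence of the $\Psi_i$. By Definition~\ref{def:Qnmc}, the clauses $C^j_i$ for different values of $i$ are drawn from disjoint variable sets $Y_i \cup X_i$, so the random variables $\Psi_1, \dots, \Psi_n$ are mutually independent. First I would invoke Lemma~\ref{lem:nonconststrat}: since $m \leq \log_2(n)$ and $1 < c < 2$, each individual $\Psi_i$ satisfies $\cost(\Psi_i) \geq 2$ with probability $p_n = 1 - o(1)$.

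Next I would fix a constant $q$ with $l < q < 1$. Since $p_n \to 1$, for all sufficiently large $n$ we have $p_n \geq q$. Let $Z := |\{i \in [n] : \cost(\Psi_i) \geq 2\}|$. Then $Z$ is a sum of $n$ independent Bernoulli random variables each with success probability $p_n \geq q$, so $Z$ stochastically dominates a $\mathrm{Bin}(n,q)$ random variable.

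The conclusion then follows from Hoeffding's inequality:
\begin{equation*}
P(Z < ln) \leq \exp\bigl(-2n(q - l)^2\bigr) = e^{-\Omega(n)} = o(1),
\end{equation*}
using that $q - l > 0$ is a positive constant. Hence $P(Z \geq ln) = 1 - o(1)$, which is precisely the statement of the lemma.

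There is no real obstacle here; this is a routine concentration argument built on top of the substantive work already done in Lemma~\ref{lem:nonconststrat}. The only mild subtlety is being careful to replace the $1-o(1)$ success probability by a fixed constant $q$ strictly between $l$ and $1$ before applying Hoeffding, so that the deviation $q - l$ in the exponent is bounded away from zero and the tail bound is genuinely $o(1)$.
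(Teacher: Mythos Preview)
Your proof is correct and follows essentially the same approach as the paper: invoke Lemma~\ref{lem:nonconststrat} for the per-$\Psi_i$ success probability, use independence across $i$, and apply Hoeffding's inequality. Your device of fixing a constant $q$ with $l<q<1$ before applying Hoeffding is exactly the paper's step of replacing $1-l-o(1)$ by a constant $1-l'$ for large $n$, just written more cleanly.
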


\begin{proof}
In Lemma~\ref{lem:psisfalse}, we saw that with high probability $\Psi_i = \exists Y_i \forall X_i \cdot \psi_i$ is false for every $1 \leq i \leq n$, so $\cost(\Psi_i)$ is defined for all $\Psi_i$.
By Lemma~\ref{lem:nonconststrat}, for each $\Psi_i$, the probability that $\cost(\Psi_i) \geq 2$ is  $1 - o(1)$. Using the Hoeffding bound on the sum of independent Bernoulli random variables, the probability that fewer than $ln$ of the $\Psi_i$ satisfy $\cost(\Psi_i) \geq 2$ is at most
$$  \exp \left( -2 {\left( 1 - l - o(1) \right) }^2 n \right) $$
which for sufficiently large $n$ can be upper bounded by
$$ \exp \left( -2 {\left( 1 - l' \right) }^2 n \right) $$
for some constant $l' < 1$. Thus with probability $1 - o(1)$ at least $ln$ of the $\Psi_i$ have cost at least~2.
\end{proof}

Lemma~\ref{lem:nonconstant-psis} shows that in the randomly generated QBF $Q(n,m,c) \equiv \bigvee_i \Psi_i$, for suitable values of $m$ and $c$, the $\Psi_i$ are all false and with high probability, a linear proportion of them have $\cost(\Psi_i) \geq 2$. With a slightly more careful choice of $m$, these two properties suffice to show a cost lower bound for $Q(n,m,c)$. Unfortunately, we cannot obtain $\cost(Q(n,m,c))$ simply by multiplying $\cost(\Psi_i)$ for each $ i \in [n]$, as responses on $\vars_\forall (\Psi_i)$ may now vary depending on the assignment of variables in some other $\Psi_j$. Instead, we use the fact that if $\cost(\Psi_i) \geq 2$, then for any response $\beta_i$ there is some existential assignment  for which $\beta_i$ is \emph{not} a winning response. Using these, for any response $\beta$ for $Q(n,m,c)$, we construct a large set of existential assignments for which $\beta$ is not a winning response.

\begin{lem}
\label{lem:rand-highcost}
  Let $1 < c < 2$ be a constant, and let $m \leq (1 - \epsilon) \log_2 (n)$ for some constant $\epsilon > 0$. With probability $1 - o(1)$, $Q(n,m,c)$ is false and $\cost (Q(n,m,c)) = 2^{\Omega({n}^\epsilon)}$.
\end{lem}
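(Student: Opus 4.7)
The falsity of $Q(n,m,c)$ is immediate from Lemma~\ref{lem:psisfalse}. Since the blocks $\Psi_1,\dots,\Psi_n$ have pairwise disjoint variable sets, $Q(n,m,c)$ is false iff every $\Psi_i$ is: a winning $\forall$-response for $Q(n,m,c)$ is obtained by concatenating winning $\forall$-responses for each $\Psi_i$, which collectively force some $C_i^j=0$ for every $i$, falsifying the matrix irrespective of the $\exists$-player's choice of $T$. Lemma~\ref{lem:psisfalse} guarantees that all $\Psi_i$ are false with probability $1-o(1)$, giving falsity of $Q(n,m,c)$ with the same probability.

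For the cost bound, apply Lemma~\ref{lem:nonconstant-psis} with $l=1/2$ (say) to obtain, on an event of probability $1-o(1)$, a set $I\subseteq[n]$ with $|I|\geq n/2$ such that $\cost(\Psi_i)\geq 2$ for every $i\in I$. The condition $\cost(\Psi_i)\geq 2$ is equivalent to the statement that no single $\beta_i\in\langle X_i\rangle$ is a winning $\forall$-response against every $a_i\in\langle Y_i\rangle$, so for each $\beta_i$ we may fix a \emph{killer} $a_i(\beta_i)\in\langle Y_i\rangle$ with $\psi_i(a_i(\beta_i),\beta_i)=\top$. Let $R_i:=\{a_i(\beta_i):\beta_i\in\langle X_i\rangle\}\subseteq\langle Y_i\rangle$, so that $|R_i|\leq 2^m\leq n^{1-\epsilon}$. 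Take any winning $\forall$-strategy $S$ for $Q(n,m,c)$ with $D:=\rng(S)$; the plan is to establish $|D|=2^{\Omega(n^\epsilon)}$.

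Sample a random $\alpha\in\langle Y_1\cup\dots\cup Y_n\rangle$ by drawing $\alpha|_{Y_i}$ independently and uniformly from $R_i$ for each $i\in I$ (set the remaining $\alpha|_{Y_j}$ arbitrarily). For a fixed $\beta=(\beta_1,\dots,\beta_n)\in D$, independence across $Y$-blocks yields
\[
\Pr\bigl[\,\forall i\in I:\ \alpha|_{Y_i}\neq a_i(\beta_i)\,\bigr] \;\leq\; (1-2^{-m})^{|I|} \;\leq\; \exp(-|I|/2^m) \;\leq\; \exp(-n^\epsilon/2),
\]
and if $\alpha|_{Y_i}=a_i(\beta_i)$ for some $i\in I$, then $\psi_i(\alpha|_{Y_i},\beta_i)=\top$ and $\beta$ is not a winning $\forall$-response against $\alpha$ for $Q(n,m,c)$. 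A union bound thus gives $\Pr[\,\text{some }\beta\in D\text{ wins against }\alpha\,]\leq|D|\exp(-n^\epsilon/2)$. Were $|D|<\exp(n^\epsilon/2)$, this probability would be strictly less than $1$, so some deterministic $\alpha$ would defeat every $\beta\in D$, contradicting that $S(\alpha)\in D$ must be a winning response. Hence $|D|\geq\exp(n^\epsilon/2)=2^{\Omega(n^\epsilon)}$, and since $S$ was arbitrary, $\cost(Q(n,m,c))=2^{\Omega(n^\epsilon)}$.

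The main obstacle is the weakness of $\cost(\Psi_i)\geq 2$: it supplies only a single killer per response $\beta_i$, and as remarked in the paragraph preceding the lemma, a naive product-of-costs argument does not give a valid lower bound on $\cost(Q(n,m,c))$. The resolution is to exploit the regime $m\leq(1-\epsilon)\log_2 n$, which makes each $|R_i|\leq 2^m\leq n^{1-\epsilon}$ small: a uniform sample from $R_i$ hits the killer of any specific $\beta_i$ with the non-negligible probability $|R_i|^{-1}\geq n^{-(1-\epsilon)}$, and independence across the $|I|=\Omega(n)$ blocks amplifies the per-response escape probability to $\exp(-\Omega(n^\epsilon))$, giving precisely the target exponent $n^\epsilon$.
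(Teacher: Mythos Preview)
Your proof is correct and follows essentially the same approach as the paper: both arguments fix, for every $\beta_i \in \langle X_i\rangle$, a ``killer'' assignment $\alpha_i(\beta_i)$, build a product family of existential plays from these killers, and show that any single response $\beta$ can cover at most a $(1-2^{-m})^{|I|}$ fraction of that family, yielding $|D| \geq (1-2^{-m})^{-|I|} = 2^{\Omega(n^\epsilon)}$. The only difference is packaging---the paper does explicit multiset counting with $|A| = N^{ln}$ and $|A_\beta| \leq (N-1)^{ln}$, while you phrase the identical covering bound as a probabilistic-method union bound; both extract the same inequality $|D| \geq (N/(N-1))^{|I|}$.
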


\begin{proof}
For sets $Y_i = \{ y_i^1 , \dots , y_i^n \}$, $X_i = \{ x_i^1 , \dots, x_i^m \}$, with $m \leq (1 - \epsilon) \log_2 (n)$, 
$$ Q(n,m,c) := \exists Y_1 \dots Y_n \forall X_1 \dots X_n \exists t_1 \dots t_n \cdot \bigwedge_{i=1}^n \bigwedge_{j=1}^{cn} \left( \neg t_i \vee C_i^j \right) \wedge \bigvee_{i=1}^n t_i$$
 where the clauses $C_i^j$ are chosen uniformly at random to contain two literals on variables in $Y_i$ and a literal on a variable of $X_i$.
For each $1 \leq i \leq n$, define $\Psi_i = \exists Y_i \forall X_i \cdot \bigwedge_{j=1}^{cn} C_i^j$. 
Let $0< l < 1$ be a constant. By Lemma~\ref{lem:nonconstant-psis}, with probability $1-o(1)$, all the $\Psi_i$ are false, and at least $ln$ of the $\Psi_i$ have $\cost(\Psi_i) \geq 2$. It therefore suffices to show that if all the $\Psi_i$ are false and $\cost(\Psi_i) \geq 2$ holds for at least $ln$ of the $\Psi_i$, then $Q(n,m,c)$ is false and $\cost(Q(n,m,c)) \geq 2^{\Omega(n^\epsilon)}$.

If each $\Psi_i$ is false, there is some winning strategy $S_i : \langle Y_i \rangle \to \langle X_i \rangle$ for each $i \in [n]$. Define $S: \langle Y_1 , \dots , Y_n \rangle \to \langle X_1 , \dots , X_n \rangle$ by $S(\alpha_1 , \dots, \alpha_n) = ( S_1 (\alpha_1) , \dots, S_n (\alpha_n))$. 
For any $\alpha \in \langle Y_1 , \dots, Y_n \rangle$, restricting $Q(n,m,c)$ by $\alpha$ and $S(\alpha)$ gives
$$Q(n,m,c)[\alpha][S(\alpha)] = \exists t_1  \dots t_n \cdot \bigwedge_{i=1}^n \bigwedge_{j=1}^{cn} \left( \neg t_i \vee C_i^j[\alpha|_{Y_i}][S_i(\alpha|_{Y_i})] \right) \wedge \bigvee_{i=1}^n t_i $$
but by definition of the strategies $S_i$, $C_i^j[\alpha_i][S_i(\alpha_i)] = \bot$ for some $1 \leq j \leq cn$, and so 
$$Q(n,m,c)[\alpha][S(\alpha)] = \exists t_1 \dots t_n \cdot \bigwedge_{i=1}^n \neg t_i \wedge \bigvee_{i=1}^n t_i$$ 
which is clearly unsatisfiable. Since $S$ is a winning $\forall$-strategy for $Q(n,m,c)$, $Q(n,m,c)$ is false if all the $\Psi_i$ are false for each $i \in [n]$.

It remains to show that $\cost(Q(n,m,c)) \geq 2^{\Omega(n^\epsilon)}$. We may assume that at least $ln$ of the $\Psi_i$ do not have constant winning $\forall$-strategies. Without loss of generality, we further assume that these are $\Psi_1 , \dots, \Psi_{ln}$, and that  all winning $\forall$-strategies for $Q(n,m,c)$ we consider assign the variables of $X_{ln+1}, \dots, X_n$ according to some constant winning $\forall$-strategy for $\Psi_{ln+1} , \dots , \Psi_n$. We therefore restrict our attention to strategies which are winning $\forall$-strategies for $\Psi_1 , \dots, \Psi_{ln}$.

Since $|X_i| \leq (1 - \epsilon) \log_2 (n)$, we can list the possible responses in each $\langle X_i \rangle$ as $\langle X_i \rangle = \{ \beta_1^i , \dots , \beta_N^i \}$, where $N = 2^m \leq  n^{(1- \epsilon)}$.

Let $B = \rng (S)$ for some winning $\forall$-strategy $S$ for $Q(n,m,c)$. To lower bound $\cost(Q(n,m,c))$, we need to show a lower bound on $|B|$. Given we assume $S$ is constant on $\Psi_{ln+1}, \dots , \Psi_n$, we can consider each  $\beta \in B$ as an assignment in $\langle X_1 , \dots, X_{ln} \rangle$, i.e. $B \subseteq \{ (\beta_{j_1}^i , \dots , \beta_{j_{ln}}^{ln}) : j_1 , \dots, j_{ln} \in [N]\}$. As $B$ is the image of a winning $\forall$-strategy, it contains a winning response $\beta$ for every assignment $\alpha \in \langle Y_1 , \dots , Y_{ln} \rangle$. In this case a winning response for $\alpha$ is some $\beta$ such that $\Psi_i[\alpha][\beta]$ is false for \emph{every} $1 \leq i \leq ln$.

For each $1 \leq i \leq ln$, $\Psi_i$ does not have a constant winning $\forall$-strategy. For any $\beta_j^i \in \langle X_i \rangle$, there is some assignment $\alpha_j^i \in \langle Y_i \rangle$ such that $\beta_j^i$ is not a winning response to $\alpha_j^i$ for $\Psi_i$. That is, for each $\beta_j^i$, there is some $\alpha_j^i$ such that $\Psi_i [\alpha_j^i][\beta_j^i] = \top$, else $\beta_j^i$ would define a constant winning $\forall$-strategy for $\Psi_i$ and $\cost(\Psi_i) = 1$. We now use these $\alpha_j^i$ to construct a multiset of existential assignments for which any response $\beta$ is only a winning response to a small subset.

Define the multiset $A$, containing elements of $\langle Y_1 , \dots, Y_{ln} \rangle$, as 
$$A = \left\{ (\alpha_{j_1}^1 , \dots , \alpha_{j_{ln}}^{ln} ) : (j_1 , \dots, j_{ln}) \in {[N]}^{ln} \right\}$$
Note that  $\alpha_j^i$ and $\alpha_{j'}^i$ need not be distinct for $j \neq j'$, so defining $A$ to be a multiset ensures $|A| = N^{ln}$. Given a response $\beta \in B$, we bound the size of the multiset 
$$A_\beta = \left\{ \alpha \in A : \Psi_i [\alpha][\beta] = \bot \mbox{ for all } 1 \leq i \leq ln \right\}$$
the set of all assignments in $A$ for which $\beta$ is a winning response, counted with their multiplicity in $A$.

For any assignment $\beta \in B$, we know $\beta = (\beta_{j_1}^1 , \dots, \beta_{j_{ln}}^{ln})$ for some $j_1 , \dots , j_{ln} \in [N]$. If $\beta$ is a winning response to $\alpha$, then $\Psi_i[\alpha|_{Y_i}][\beta|_{X_i}] = \bot$ for all $1 \leq i \leq ln$. Since $\beta|_{X_i} = \beta_{j_i}^i$, by the definition of $\alpha_{j_i}^i$, $\alpha|_{Y_i} \neq \alpha_{j_i}^i$ for all $1 \leq i \leq ln$, as $\Psi_i[\alpha_{j_i}^i][\beta_{j_i}^i] = \top$. The restriction $\alpha|_{Y_i} \neq \alpha_{j_i}^i$ restricts the set $A_\beta$ to
$$ A_\beta \subseteq \left\{ ( \alpha_{k_1}^1 , \dots, \alpha_{k_{ln}}^{ln}) :  (k_1 , \dots, k_{ln}) \in {[N]}^{ln} ~ , ~ j_i \neq k_i \mbox{ for all } 1 \leq i \leq ln \right\} $$
In particular, we see that $|A_\beta| \leq {(N-1)}^{ln}$.

For any $\alpha \in A$, let $\beta = S(\alpha) \in B$. Since $S$ is a winning $\forall$-strategy, $\beta$ is a winning response to $\alpha$, or equivalently $\alpha \in A_\beta$.
By definition, $A_\beta \subseteq A$ for each $\beta \in B$, so it is clear that $A = \bigcup_{\beta \in B} A_\beta$. Comparing the cardinalities of these sets gives $N^{ln} \leq \sum_{\beta \in B} |A_\beta| \leq |B| {(N-1)}^{ln}$, and so $|B| \geq { \left( \frac{N}{N-1} \right) }^{ln}$. For $N > 1$, this is a monotonically decreasing function in $N$, and $N \leq n^{(1 - \epsilon)}$, so for sufficiently large $n$,
$$ |B| \geq {\left( \frac{N}{N-1} \right) }^{ln} \geq{\left( \frac{n^{(1-\epsilon)}}{n^{(1-\epsilon)}-1} \right) }^{ln} = {\left( 1 + \frac{1}{n^{(1-\epsilon)}} \right) }^{ln} = { \left({\left( 1 + \frac{1}{n^{(1-\epsilon)}} \right) }^{n^{(1-\epsilon)}} \right) }^{l {n}^\epsilon} = 2^{\Omega (n^\epsilon)} $$
since for large $n$, ${\left( 1 + \frac{1}{n} \right) }^n \geq 2$.
We conclude that $|\rng(S)| \geq 2^{\Omega(n^\epsilon)}$ for any winning $\forall$-strategy $S$. There is only one block of universal variables in $Q(n,m,c)$, and so $$\cost(Q(n,m,c)) = \min \{ |\rng(S)| : \mbox{$S$ is a winning $\forall$-strategy for $Q(n,m,c)$} \} \geq 2^{\Omega(n^\epsilon)}$$

We have shown that if all the $\Psi_i$ are false, then $Q(n,m,c)$ is false, and further that if at least $ln$ of the $\Psi_i$ have no constant winning $\forall$-strategy, then $\cost(Q(n,m,c)) \geq 2^{\Omega(n^\epsilon)}$. Lemma~\ref{lem:nonconstant-psis} states that these conditions hold with probability $1 - o(1)$, and this completes the proof.
\end{proof}

Lemma~\ref{lem:rand-highcost} proves that, for the appropriate values of $m$ and $c$, the QBFs $Q(n,m,c)$ are false and have large cost with probability $1 - o(1)$. It is then a simple application of Size-Cost-Capacity and the capacity upper bounds shown in Section~\ref{sec:cap-bounds} to show lower bounds on $Q(n,m,c)$ with high probability.

\begin{thm}
\label{thm:random-hardness}
Let $1 < c < 2$ be a constant, and let $m \leq (1 - \epsilon) \log_2 (n)$ for some constant $\epsilon > 0$. With high probability, the randomly generated QBF $Q(n, m , c)$ is false, and any $\QURes$, $\CPred$ or $\PCRred$ refutation of $Q(n,m,c)$ requires size $2^{\Omega(n^\epsilon)}$.
\end{thm}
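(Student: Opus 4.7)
The proof will be a short assembly of results already established in the paper. The plan is to treat Lemma~\ref{lem:rand-highcost} as the workhorse that delivers the hard semantic content (falsity and cost lower bound), and then invoke the Size-Cost-Capacity corollaries for each of the three proof systems to convert the cost bound into a proof-size bound.

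First I would observe that Lemma~\ref{lem:rand-highcost} already gives us, with probability $1-o(1)$, both that $Q(n,m,c)$ is false and that $\cost(Q(n,m,c)) = 2^{\Omega(n^\epsilon)}$. Fix any sample path on which these events occur; it is enough to prove the size lower bound under this conditioning, since the conditioning happens with high probability. So the remaining task is purely deterministic: given a false QBF $\Phi$ of exponential cost, lower bound the size of refutations in each of the three systems.

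Next I would apply the three corollaries in turn. For $\QURes$, Corollary~\ref{cor:QURes-bound} gives $|\pi| \geq \cost(Q(n,m,c)) = 2^{\Omega(n^\epsilon)}$ immediately. For $\CPred$, Corollary~\ref{cor:CP-bound} (which rests on the unit-capacity fact of Proposition~\ref{prop:CP-cap}) gives the same bound $|\pi| \geq 2^{\Omega(n^\epsilon)}$. For $\PCRred$, Corollary~\ref{cor:PC-bound} only yields $|\pi| \geq \sqrt{\cost(Q(n,m,c))}$, but since $\sqrt{2^{\Omega(n^\epsilon)}} = 2^{\Omega(n^\epsilon)/2} = 2^{\Omega(n^\epsilon)}$, the asymptotic bound is unchanged.

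I do not expect any genuine obstacle here: all the semantic and probabilistic heavy lifting is done by Lemma~\ref{lem:rand-highcost}, and all the system-specific heavy lifting is done by Propositions~\ref{prop:QURes-capacity}, \ref{prop:CP-cap}, and~\ref{prop:PC-capacity} feeding into the Size-Cost-Capacity Theorem. The only mild point of care is to remark that the square-root loss for $\PCRred$ does not affect the $2^{\Omega(n^\epsilon)}$ rate, and to combine the three bounds under the single high-probability event from Lemma~\ref{lem:rand-highcost}. The proof will therefore be essentially a one-paragraph application of previously established results.
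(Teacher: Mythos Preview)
Your proposal is correct and matches the paper's own approach: the paper explicitly remarks that the theorem is ``a simple application of Size-Cost-Capacity and the capacity upper bounds'' to the cost lower bound of Lemma~\ref{lem:rand-highcost}, and states the theorem without a separate proof. Your handling of the square-root loss for $\PCRred$ is exactly the right observation to make explicit.
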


As previously, the greater capacity of lines in $\Fred$ does allow for short proofs of $Q(n,m,c)$ whenever it is false.
Refuting any individual false $\Psi_i$ is easy, even for $\QURes$. Applying $\forall$-reduction to each clause results in an unsatisfiable 2-SAT instance, which has a linear size resolution refutation. This immediately gives a short $\Fred$ proof for any false $Q(n,m,c)$, by deriving $\bigvee_{i=1}^n \Psi_i$, and then refuting each $\Psi_i$ in turn.

\section{Easy lower bounds for the formulas of Kleine B\"{u}ning et al.} %
\label{sec:kleinebuening}

We conclude with a new proof of the lower bounds on the prominent formulas of Kleine B\"{u}ning et al. \cite{BuningKF95} using Size-Cost-Capacity.

\begin{defi}[Kleine B\"{u}ning et al. \cite{BuningKF95}]
The formulas $\kappa(n)$ are defined as 
$$ \kappa (n) := \exists y_0 (\exists y_1 \exists y_1' \forall u_1)  \dots (\exists y_k \exists y_k' \forall u_k) \dots (\exists y_n \exists y_n' \forall u_n ) (\exists y_{n+1} \dots y_{n+n}) \cdot \bigwedge_{i=1}^{2n} C_i \wedge C_i'$$
where the matrix contains the clauses
$$ \begin{aligned}
C_0 &= \{ \neg y_0 \} ~&~ C_0' &= \{ y_0 , \neg y_1 , \neg y_1' \} \\
C_k &= \{ y_k , \neg u_k , \neg y_{k+1} \neg y_{k+1}' \} ~&~ C_k' &= \{ y_k', u_k, \neg y_{k+1} , \neg y_{k+1}' \} \\
C_n &= \{ y_n , \neg u_n , \neg y_{n+1} , \dots , \neg y_{n+n} \} ~&~ C_n' &= \{ y_n' , u_n , \neg y_{n+1} , \dots , \neg y_{n+n} \} \\
C_{n+t} &= \{ \neg u_t , y_{n+t} \} ~&~ C_{n+t}' &= \{ u_t , y_{n+t} \}
\end{aligned} $$
with $1 \leq k \leq n-1$ and $1 \leq t \leq n$.
\end{defi}

We also define the QBF $\lambda(n)$ constructed by adding the universal variables $v_k$ for each $1 \leq k \leq n$, quantified in the same block as $u_k$. The matrix of $\lambda(n)$ contains the clauses $D_i, D_i'$, where each $D_i, D_i'$ consists of the literals in the corresponding $C_i,C_i'$, but for each literal on some $u_k$, we add the matching literal on $v_k$. This is essentially `doubling' the variables $u_k$ with the matching variables $v_k$. The effect of this is to prevent any resolution steps being possible on universal variables before the variables can be $\forall$-reduced.

In \cite{BuningKF95,BeyersdorffCJ15}, it was shown that $\kappa(n)$ requires proofs of size $2^n$ in $\QRes$, which is $\QURes$ in which universal variables cannot be used as resolution pivots. This lower bound immediately transfers to the same lower bound for $\lambda(n)$ in $\QURes$ \cite{BalabanovWJ14}. As one of the first QBF lower bounds to be shown, these formulas have been the subject of much attention in the study of QBF proof complexity (for examples, see \cite{Egly16,BeyersdorffCJ15,BalabanovWJ14,LonsingES16}).

Showing a lower bound for $\kappa(n)$ in $\QRes$ is equivalent to showing a lower bound for $\lambda(n)$ in $\QURes$. It can be assumed in both proof systems that $\forall$-reductions are performed whenever possible, and so all clauses in the shortest $\QURes$ proof either contain matching literals on $u_k$ and $v_k$, or contain no literal on either of them. Any two such clauses cannot be used in a resolution step on a universal variable $u_k$, as the resulting clause would contain both $v_k$ and $\neg v_k$. All clauses derived from this clause will contain $v_k$ and $\neg v_k$, until a $\forall$-reduction reduces the clause to $\top$. The shortest $\QURes$ proof of $\lambda(n)$ therefore contains no resolution steps on universal pivots, and so the same steps can be used to produce a $\QRes$ proof of $\kappa(n)$.

We use Size-Cost-Capacity to prove a $\QURes$ lower bound for an even weaker QBF than $\lambda(n)$, which is obtained by quantifying all the variables $v_k$ in the rightmost universal block. This allows us to give a cost lower bound using this block, which in turn gives the proof size lower bound.

\begin{prop}
\label{prop:kb-cost}
The QBF $\lambda' (n) := \exists y_0 y_1 y_1' \forall u_1  \dots \exists y_n y_n' \forall u_n v_1 \dots v_n \exists y_{n+1} \dots y_{n+n} \cdot \bigwedge_{i=1}^{2n} D_i \wedge D_i'$ has cost $2^n$.
\end{prop}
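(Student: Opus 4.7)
The plan is to prove $\cost(\lambda'(n)) = 2^n$ by matching upper and lower bounds. For the upper bound, I would exhibit the winning $\forall$-strategy $S^*$ with $u_k = v_k = y_k'$ for each $k \in [n]$. Under $S^*$, the clauses $D_{n+t}, D_{n+t}'$ become $\{\neg y_t', y_{n+t}\}$ and $\{y_t', y_{n+t}\}$, which together force $y_{n+t} = 1$ for every $t$; the clauses $D_n, D_n'$ then force $y_n = y_n' = 1$, and by induction down the chain $y_k = y_k' = 1$ for all $k \geq 1$, contradicting $D_0'$. The response to the last universal block under $S^*$ is $(y_n', y_1', \ldots, y_{n-1}', y_n')$, which realises exactly $2^n$ distinct values, giving $\cost(\lambda'(n)) \leq 2^n$.

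For the lower bound, I would fix any winning $\forall$-strategy $S$ and, for each $\alpha \in \{0,1\}^n$, the existential play $\tau_\alpha$ with $y_0 = 0$ and $(y_k, y_k') = (\neg \alpha_k, \alpha_k)$. The goal is to show that the restriction of $S(\tau_\alpha)$ to the last block must equal $(\alpha_n, \alpha_1, \ldots, \alpha_{n-1}, \alpha_n)$, which is injective in $\alpha$. The key step is that $S$'s move $u_t$ against $\tau_\alpha$ must equal $\alpha_t$ for every $t < n$. To prove this I introduce an auxiliary \emph{mixed play} $\tau_\alpha^{(t)}$ that agrees with $\tau_\alpha$ through stage $t$ but plays $(y_k, y_k') = (1, 1)$ for every $k > t$. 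Because $u_t$ depends only on the prefix through stage $t$, it coincides in $\tau_\alpha$ and $\tau_\alpha^{(t)}$. A clause-by-clause check shows that in $\tau_\alpha^{(t)}$ every $D_k, D_k'$ with $k \neq t$ is automatically satisfied: for $k < t$ by the tautology $\alpha_{k+1} \vee \neg \alpha_{k+1}$ appearing in the $\neg y_{k+1}, \neg y_{k+1}'$ literals, for $k > t$ by $y_k = y_k' = 1$, and $D_n, D_n'$ similarly; the pairs $D_{n+s}, D_{n+s}'$ pose no obstacle since $\exists$ picks $y_{n+s}$ suitably. Hence $S$ must win by falsifying $D_t$ or $D_t'$, and direct inspection of the two clauses forces $u_t = v_t = \alpha_t$.

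Finally, analysing the full $\tau_\alpha$ play shows that $S$ must force every $y_{n+s} = 1$ via $u_s = v_s$ on the $D_{n+s}, D_{n+s}'$ pairs, and that $u_n = v_n = \alpha_n$ is then pinned by falsifying $D_n$ (if $\alpha_n = 1$) or $D_n'$ (if $\alpha_n = 0$); combined with $u_s = \alpha_s$ from the previous step, the last-block response is uniquely $(\alpha_n, \alpha_1, \ldots, \alpha_{n-1}, \alpha_n)$, yielding $2^n$ distinct responses. The principal obstacle is the mixed-play case analysis -- verifying that no alternative propagation chain through the Kleine B\"uning structure can rescue the $\forall$-strategy outside of the single clause pair $D_t, D_t'$, which turns on the observation that the $\tau_\alpha$ pattern $(\neg \alpha_{k+1}, \alpha_{k+1})$ neutralises the chain on its left and the $(1,1)$ tail trivialises it on its right.
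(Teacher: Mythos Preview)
Your proposal is correct and follows essentially the same route as the paper. Both arguments single out the $2^n$ existential plays with $y_k \neq y_k'$, force $u_t = y_t'$ via the mixed play that sets $y_j = y_j' = 1$ beyond stage $t$ (the paper states this in one sentence; you spell out $\tau_\alpha^{(t)}$ and verify the clauses), and then pin down the $v$-variables by analysing the surviving clauses $D_n, D_n', D_{n+t}, D_{n+t}'$. The paper does this last step only for the single assignment $y_k = 1, y_k' = 0$ and waves at the other cases as ``similar'', whereas you give the general argument. You also supply the upper bound via the explicit strategy $S^*$, which the paper omits (it only establishes $|\rng(S_n)| \geq 2^n$, which is what is actually used downstream); your inclusion of it is what makes the exact equality $\cost(\lambda'(n)) = 2^n$ complete.
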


\begin{proof}
We consider the response of any winning $\forall$-strategy to the $2^n$ distinct assignments in the set $A = \{ \alpha \in \langle \{ y_1, y_1' , \dots , y_n, y_n' \} \rangle : \alpha(y_k) \neq \alpha(y_k') \mbox{ for all } 1 \leq k \leq n \}$. Any assignment in $A$ forces a winning $\forall$-strategy $S$ to respond by setting $u_k = y_k'$. If not, then all clauses $C_i, C_i'$ for $i \leq k$ would be satisfied, and the further assignment $y_j = y_j' = 1$ for all $j > k$ would satisfy the matrix.

It remains to show that responding with $v_k = u_k$ is the only possible response to any $\alpha \in A$ for a winning $\forall$-strategy. We demonstrate this in the case of the assignment $\alpha$ where $\alpha(y_k) = 1$, $\alpha(y_k') = 0$ for all $1 \leq k \leq n$, but other assignments in $A$ are similar. Restricting by the assignment $\alpha$, as well as $\beta$, where $\beta(u_k) = \alpha(y_k')$ as shown above, the restricted matrix contains the clauses
$$
\begin{aligned}
D_n'|_{\alpha \cup \beta} &= \{ v_n , \neg y_{n+1}, \dots , \neg y_{n+n} \} &\\
D_{n+t}'|_{\alpha \cup \beta} &= \{ v_t , y_{n+t} \} &\mbox{ for each } 1 \leq t \leq n.
\end{aligned}
$$

If $S_n(\alpha)$ sets any $v_k = 1$, then the matrix is clearly satisfiable by setting $y_{n+k} = 0$, and $y_{n+j} = 1$ for all $j \neq k$. There is therefore a unique response on the $v_k$ for $S_n(\alpha)$, which is to set $v_k = u_k = y_k'$. It is clear that there is a similar unique response for any $\alpha \in A$. We conclude that $|\rng(S_n)| = 2^n$, whence $\cost(\lambda'(n)) = 2^n$.
\end{proof}

We therefore obtain the following hardness result, which was known for $\QURes$ \cite{BalabanovWJ14}, but also lifts to $\CPred$ and $\PCRred$.
\begin{cor}
\label{cor:kb-size}
Any $\QURes$ , $\CPred$ or $\PCRred$ proof of $\lambda(n)$ requires size $2^{\Omega(n)}$.
\end{cor}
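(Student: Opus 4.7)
My plan is to transfer the cost lower bound on $\lambda'(n)$ to a refutation-size lower bound on $\lambda(n)$ in two stages. The first stage is a direct application of Size-Cost-Capacity (Theorem~\ref{thm:costcap}): by Proposition~\ref{prop:kb-cost} we have $\cost(\lambda'(n)) = 2^n$, and combining this with the capacity upper bounds from Section~\ref{sec:cap-bounds} --- namely $\capa(\pi)=1$ for $\QURes$ (Proposition~\ref{prop:QURes-capacity}) and $\CPred$ (Proposition~\ref{prop:CP-cap}), and $\capa(\pi)\leq|\pi|$ for $\PCRred$ (Proposition~\ref{prop:PC-capacity}) --- yields lower bounds of $2^n$ on refutations of $\lambda'(n)$ in $\QURes$ and $\CPred$, and of $2^{n/2}$ in $\PCRred$. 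All three bounds are $2^{\Omega(n)}$.

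The second stage is the ``elementary reduction'' referenced in the preceding discussion: I would show that any refutation of $\lambda(n)$ in the relevant system gives rise to a refutation of $\lambda'(n)$ of at most a constant-factor larger size, so that the bounds from stage one transfer. Since $\lambda(n)$ and $\lambda'(n)$ have the same matrix, axiom introductions and propositional inference steps carry over unchanged, and only the universal reduction rule feels the difference in prefix. The key observation is that moving $v_1,\dots,v_n$ rightward into the penultimate block of $\lambda'(n)$ leaves each $u_k$'s position relative to existential variables unchanged, while placing each $v_k$ right of strictly more existentials. For $\QURes$, whose reduction condition (Figure~\ref{fig:QURes}) depends only on existential variables in the clause, this immediately implies that every $\lambda(n)$-reduction remains valid in $\lambda'(n)$, so the refutation transfers verbatim.

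For $\CPred$ and $\PCRred$, the stronger $\Pred$-style reduction rule (Figure~\ref{fig:universal-reduction}) involves all variables of the line, so direct transfer is not immediate: a line reduced on the block $\{u_k,v_k\}$ in $\lambda(n)$ may still contain some $v_j$ with $j<k$, which sits right of the singleton block $\{u_k\}$ in $\lambda'(n)$. I would handle this by replacing each such step with a short sequence of reductions in $\lambda'(n)$ --- first a reduction via the penultimate block that assigns $v_k$ together with any obstructing $v_j$'s, then a reduction of $\{u_k\}$ --- incurring only a constant-factor overhead. The main technical obstacle will be arguing that this replacement preserves the soundness of subsequent steps; I expect to attack it by normalising the $\lambda(n)$-refutation so that problematic $v_j$ literals do not appear, or by invoking semantic refutation (Definition~\ref{def:semantic-ref}), whose flexibility absorbs the discrepancy. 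Combined with stage one, this yields the required $2^{\Omega(n)}$ lower bound on $\lambda(n)$ in all three proof systems.
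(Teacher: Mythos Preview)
Your two-stage plan---apply Size-Cost-Capacity to $\lambda'(n)$ using Proposition~\ref{prop:kb-cost} together with the capacity bounds of Section~\ref{sec:cap-bounds}, then transfer the bound to $\lambda(n)$ by a prefix-monotonicity argument---is exactly the paper's approach. The paper's proof, however, dispatches your second stage in a single sentence: since the $v_k$ are quantified strictly further right in $\lambda'(n)$ than in $\lambda(n)$, \emph{any} refutation of $\lambda(n)$ in any of the three systems is declared to be, verbatim, a refutation of $\lambda'(n)$. It draws no distinction between $\QURes$ and $\CPred$/$\PCRred$, and it contemplates no multi-step simulation.

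Your more cautious treatment is not misplaced. Under the block-based rule of Figure~\ref{fig:universal-reduction}, a residual $v_j$ with $j\leq k$ in a line would indeed obstruct reduction on the singleton block $\{u_k\}$ of $\lambda'(n)$, so the literal claim ``the same sequence of lines is a $\lambda'(n)$-refutation'' needs an argument for $\CPred$/$\PCRred$ that the paper does not supply; you are being more careful here than the paper itself. That said, the two fixes you sketch do not close the gap as stated: pre-reducing the obstructing $v_j$'s on the penultimate block changes the line (so downstream steps need not survive), and passing to semantic refutations does not help because Definition~\ref{def:semantic-ref} still requires the reduced block to be rightmost in $\vars(L_j)$. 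A cleaner route is to note that in each of the three concrete systems the $\QURes$-style reduction---remove a universal literal whenever no \emph{existential} to its right occurs in the line---is sound and at least as permissive as the block rule; under that formulation the relative position of universal variables is irrelevant and the transfer from $\lambda(n)$ to $\lambda'(n)$ really is verbatim for all three systems, which is presumably the reading the paper has in mind.
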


\begin{proof}
The only difference between $\lambda'(n)$ and $\lambda(n)$ is the order in which the variables are quantified. As the variables $v_k$ are quantified further right in $\lambda'(n)$, any refutation of $\lambda(n)$ in any of these proof systems is also a refutation of $\lambda' (n)$. It is therefore sufficient to show a lower bound for refutations of $\lambda'(n)$ in $\QURes$, $\CPred$ or $\PCRred$, which is an immediate consequence of Proposition~\ref{prop:kb-cost} and the results of Section~\ref{sec:cap-bounds}.
\end{proof}

This lower bound for $\QURes$ also yields the lower bound on $\QRes$ proofs of $\kappa(n)$, previously shown in \cite{BuningKF95,BeyersdorffCJ15}. As well as providing a relatively simple proof of the hardness of these formulas, our technique also offers some insight as to why these formulas are hard. As the strategy for each variable $u_k$ is simple to compute in even very restricted models of computation, and the proof size lower bounds do not arise from propositional lower bounds, the lower bounds on $\kappa(n)$ and $\lambda(n)$ seemed to be something of an anomaly among QBF proof complexity lower bounds \cite{BeyersdorffHP17}. Here we have shown that the lower bound ultimately arises from the cost of the formulas, although this is slightly obfuscated by some rearrangement of the quantifier prefix.

\section{Conclusions and Open Problems} %
\label{sec:conclusions}

By formalising the conditions on $\PS$ in the construction of $\Pred$, we developed a new technique for proving QBF lower bounds in $\Pred$.
We demonstrated that the technique is not restricted to a few carefully constructed QBFs, but is in fact applicable to a large class of randomly generated formulas.
The primary appeal of the technique is its semantic nature.
We believe that lower bounds based on semantic properties of instances, as opposed to syntactic properties of proofs, work to further our understanding of the hardness phenomenon across the wider range of QBF proof systems.
We strongly suggest that Size-Cost-Capacity is applicable beyond $\Pred$, and future work will likely establish the hardness of random QBFs in even stronger systems (for example in the expansion based calculus $\IRC$ \cite{BeyersdorffCJ14}).

Recent work on reasons for hardness in QBF proof systems \cite{BeyersdorffHP17} proposed a new model that precisely characterises the notion of non-genuine hardness in $\Pred$.
In that model, the system $\SigmaPred$ utilises an $\PS$ oracle to derive any propositional implicant in a single step, whereby proof size can be measured simply in terms of universal reduction.
Hence, by counting universal reduction steps, the Size-Cost-Capacity Theorem in fact gives an absolute proof-size lower bound in $\SigmaPred$.
We note that our technique also applies to the so-called semantic derivations appearing in \cite{Krajicek02}, since the use of an $\PS$ oracle guarantees the simulation of any propositional semantic inference.

We also introduced an interesting new formula family, the equality formulas, arguably the simplest known hard QBFs.
When considering QBF proof complexity lower bounds, particularly in $\Pred$ systems, we must concern ourselves with formulas with at least a $\Sigma_3$ prefix, of which the equality formulas are one of the simplest examples.
If a QBF has a $\Sigma_2$ prefix, then it is true if and only if the existential parts of the clauses can all be satisfied, i.e. it is equivalent to a SAT problem. Similarly, a refutation of a QBF with a $\Pi_2$ prefix consists of a refutation of a subset of the existential clauses corresponding to a particular assignment to the universal variables. A $\Pi_3$ formula can also be regarded essentially as a SAT problem using similar reductions as for both $\Sigma_2$ and $\Pi_2$, so $\Sigma_3$ is the smallest prefix where we can expect to find genuine QBF lower bounds.

\subsubsection*{Future Perspectives}
In strong proof systems such as $\Fred$, superpolynomial proof size lower bounds can be completely characterised: they are either a propositional lower bound or a circuit lower bound \cite{BeyersdorffP16}.
All QBFs we have considered have no underlying propositional hardness, and winning $\forall$-strategies can be computed by small circuits, even in very restricted circuit classes.
As such, all these QBFs are easy for $\Fred$.

However, for weaker proof systems, such as $\QURes$ and the others we have discussed, propositional hardness and circuit lower bounds alone are not the complete picture (cf.~\cite{BeyersdorffHP17}).
In particular, the lower bounds we have shown using Size-Cost-Capacity do not fit into either class.
That this technique relies on capacity upper bounds which do not hold for strong proof systems leads us to suggest that we have identified a new reason for the hardness of QBFs in those proof systems where the above dichotomy does not hold.
We believe this represents significant progress towards a similar characterisation of lower bounds for these proof systems.

\subsubsection*{Open Problems}
Our work raises several interesting open problems regarding the applicability of our technique to further QBF systems.
Firstly, what is the status of Size-Cost-Capacity with respect to long-distance Q-resolution?
It is quite simple to construct short proofs of the equality formulas in $\LDQRes$, which demonstrates that the technique as presented in this paper will not lift there, but the underlying ideas could potentially be adapted.
The question applies also to expansion-based QBF systems.
Finally, the complexity of our random formulas outside of the $\Pred$ paradigm remains an interesting open problem.

\section*{Acknowledgments}
Research was supported by a grant from the John Templeton Foundation (grant no.\ 60842).

\newcommand{\etalchar}[1]{$^{#1}$}


\end{document}